\newif\ifdraft \drafttrue
\newif\iffull \fulltrue
\makeatletter \@input{tex.flags} \makeatother
\newcommand*\samethanks[1][\value{footnote}]{\footnotemark[#1]}
\definecolor{DarkGreen}{rgb}{0.1,0.5,0.1}
\definecolor{DarkRed}{rgb}{0.5,0.1,0.1}
\definecolor{DarkBlue}{rgb}{0.1,0.1,0.5}
\newcommand{\kibitz}[2]{\ifnum\Comments=1{\color{#1}{#2}}\fi}
\definecolor{english}{rgb}{0.0, 0.5, 0.0}
\newcommand{\kibitzA}[2]{\ifnum\CommentsA=1{\color{#1}{#2}}\fi}
 \let\mathscr\relax
\newcommand\cB{\mathcal{B}}
\newcommand\cP{\mathcal{P}}
\newcommand\cR{\mathcal{R}}
\newcommand\cN{\mathcal{N}}
\newcommand\cV{\mathcal{V}}
\newcommand\cZ{\mathcal{Z}}
\newcommand{\vect}[1]{\boldsymbol{\mathbf{#1}}}
\def\epsilon{\varepsilon}
\DeclareMathOperator{\OPT}{OPT}
\DeclareMathOperator{\GAS}{GAS}
\DeclareMathOperator*{\argmin}{\mathrm{argmin}}
\DeclareMathOperator*{\argmax}{\mathrm{argmax}}
\newtheorem*{theorem*}{Theorem}
\newtheorem*{observation*}{Observation}
\declaretheorem[
  name=Theorem,
  refname={theorem, theorems},
  Refname={Theorem, Theorems}]{theorem}
\declaretheorem[
  name=Lemma,
  refname={lemma, lemmas},
  Refname={Lemma, Lemmas}]{lemma}
\declaretheorem[
  name=Definition,
  refname={definition, definitions},
  Refname={Definition, Definitions}]{definition}
\newtheorem{proposition}[theorem]{Proposition}
\title{Differential Liquidity Provision in Uniswap v3 and Implications for Contract Design}
\author[1]{Zhou Fan\thanks{These authors contributed equally to this work.}\textsuperscript{,}\thanks{ \{zfan, fjmarmol, he\_sun, xintongw\}@g.harvard.edu }\textsuperscript{,}}
\author[1,2]{Francisco Marmolejo-Coss\'{i}o\samethanks[1]\textsuperscript{,}\samethanks[2]\textsuperscript{,}}
\author[1]{Ben Altschuler\thanks{baltschuler@college.harvard.edu}\textsuperscript{,}}
\author[1]{He Sun\samethanks[2]\textsuperscript{,}}
\author[1]{\\ Xintong Wang\samethanks[2]\textsuperscript{,}}
\author[1]{David C. Parkes\thanks{parkes@eecs.harvard.edu; also  DeepMind}\textsuperscript{,}}
\affil[1]{Harvard University}
\affil[2]{IOHK}
\date{}
\begin{document}
\maketitle              
\begin{abstract}

Decentralized exchanges (DEXs) provide a means for users to trade pairs of assets on-chain without the need for a trusted third party to effectuate a trade. Amongst these, constant function market maker DEXs such as Uniswap handle the most volume of trades between ERC-20 tokens. With the introduction of Uniswap v3, liquidity providers can differentially allocate liquidity to  trades that occur within specific price intervals. In this paper, we formalize the profit and loss that liquidity providers can earn when providing specific liquidity allocations to a  v3 contract. We give a convex stochastic optimization problem for computing optimal liquidity allocation for a liquidity provider who holds a belief on how prices will evolve over time and use this to study the design question regarding how v3 contracts should partition the price space for permissible liquidity allocations. Our results show that making a greater diversity of price-space partitions available to a contract designer can simultaneously benefit both liquidity providers and traders. 
\end{abstract}

\paragraph{Keywords:} blockchain, decentralized finance, Uniswap v3, liquidity provision

\section{Introduction}


A key application in  decentralized finance (DeFi) is that of {\em decentralized exchanges} (DEXs).
DEXs offer smart contracts that allow users to trade tokens without the need of a trusted-third party to effectuate the trade. A  benefit of such an implementation is that it avoids the hacking risks that can be suffered by centralized, off-chain exchanges. Amongst DEXs, there are two prevailing algorithmic paradigms for executing a trading contract: {\em order book DEXs} and {\em automated market maker} (AMM) DEXs. Order book DEXs  maintain a list of buy and sell orders from users at distinct prices for a given pair of assets to be traded, with these orders received, matched and executed. An AMM DEX, on the other hand, will always quote a buy and sell prices for any trade, where these prices depend on the contract's assets and the rules of the AMM.

AMMs are the most common form of DEXs, amongst which {\em Uniswap} contracts handle a substantial proportion of trading volume. Uniswap contracts serve as {\em constant function market makers} (CFMMs), which are a popular kind of AMM design. In CFMMs, the contract computes the price of buying and selling between two assets by preserving a functional invariant of its existing liquidity reserves. 
To briefly describe the operation of a CFMM, let $x$ and $y$ denote the liquidity reserves the contract has of each of two assets, say token $A$ and  $B$ respectively. The trading invariant can be expressed as $F(x,y) = C$, for a given function $F$ and constant $C$. A trader who wishes to sell $\Delta x>0$ of token $A$ must send $\Delta x$ units of $A$ tokens to the contract, and the amount of token $B$ they receive is the value, $\Delta y>0$, such that the functional invariant is maintained, i.e. $F(x + \Delta x, y - \Delta y) = C$.  The quantity $\Delta y/\Delta x$ represents the average, per-unit price of  token $A$ for the  trade (in terms of token $B$). As $\Delta x \rightarrow 0$, this ratio gives the instantaneous price of token $A$ in terms of token $B$ for a contract with with a bundle of assets given by $(x,y)$.

 {\em Liquidity providers} (LPs)  provide assets to the contract and enable these trades. An LP lends the contract a bundle  of  $A$ and $B$ tokens, which is traded against as the relative price of token $A$ (or equivalently token $B$) changes. Liquidity provision is rewarded by means of {\em transaction fees} on trades. In May 2021, {\em Uniswap v3} introduced  a new family of AMMs where LPs can differentially allocate liquidity to a v3 contract  \cite{adams2021uniswap}. v3 contracts allow users to allocate liquidity to be used for trades in a specific price interval. The fees associated with a trade are  shared proportionally amongst the LPs who provide liquidity on intervals that contain the associated price change. 
 With this change, LPs can use the same capital to obtain more aggressive liquidity allocations around tight price intervals, and thus potentially earn more fees  (at the risk of losing out on fees all together if prices leave a particular price interval). Another important consideration for LPs is that
 trades that occur on the contract and price changes change  the composition of capital that an LP has a claim to. 
 If prices returns to where they were when liquidity was first provided, then the LP can withdraw its liquidity in the same token quantities as initially lent. Otherwise, the  bundle of tokens to which an LP has claim has a lower value when evaluated at the new price. This phenomenon is referred to as the {\it impermanent loss} of a liquidity allocation, and is a crucial consideration for liquidity provision.

\subsection{Our Contributions}

Given the rapid increase in DEX usage,\footnote{At the  time of writing, the daily trade volume in Uniswap v2 and v3 contracts is approximately 96 Million USD and 1.14 Billion USD respectively. In addition, the total liquidity locked by users to facilitate trades in v2 and v3 contracts is 1.51 Billion USD and 4.57 Billion USD respectively. v2: \href{https://v2.info.uniswap.org/home}{\url{https://v2.info.uniswap.org/home}}, v3: \href{https://v3.info.uniswap.org/home}{\url{https://v3.info.uniswap.org/home}}.} and the associated questions around contract design, it is important to understand the decision problem facing an LP. This paper builds off existing work to provide a new  theoretical and empirical understanding of LP behavior, and in turn to provide concrete design recommendations for Uniswap v3 contracts.

We begin by providing an overview of Uniswap dynamics for v2 and v3 contracts in Section \ref{sec:uniswap-overview}. In Section \ref{sec:LP-PnL}, we provide an expression for LP profit and loss over a sequence of price changes in the contract. In Section \ref{sec:riskneutral}, we formulate a convex stochastic optimization problem for the problem facing a  risk-averse LP who seeks to optimize its profit and loss over a finite time horizon, given  beliefs as to how prices will evolve (we also give an linear programming formulation for a risk neutral provider). 
The optimization problem fundamentally relies on the simplified assumption that prices follow a stochastic process that is  independent of liquidity provision. More specifically, we assume a price-based, stochastic flow of non-arbitrage trades that can shift the spot price of a Uniswap contract away from an assumed market price, as well as arbitrage trades that keep the spot price of the Uniswap contract close to the market price (how close depends on the underlying transaction fees of the contract). Remarkably, this assumption of an exogeneous price dynamic makes the problem facing a single LP that of a (single-agent) decision-theoretic problem rather than a game-theoretic problem.  
We leave to future work a more sophisticated model of trade flow with resulting price sequences that also depend on liquidity within a Uniswap contract.

In Section \ref{sec:computational-results}  we apply this model of LP behavior to the  design question of how v3 contracts should partition the price space in affording different liquidity allocations by LPs. 
There is a trade off to be struck: finer partitions provide more flexibility for LPs and thus a better ability to optimize return on investment, but  increases the gas cost to traders as as a result of higher computational overhead  in determining trade dynamics over more complicated liquidity allocations.
Through experiments that are calibrated to real price data \footnote{As further specified in Appendix \ref{append:MLE}, we use per-minute price data between Ethereum (ETH) and Bitcoin (BTC) as well as between Ethereum (ETH) and Tether (USDT) for the month of February 2020, obtained from Binance, to calibrate beliefs LPs may have regarding external prices in our experiments},  
we provide empirical evidence that a greater diversity of  price partitions available to a contract designer can benefit both LPs and traders. 
We study a wide family of log-linear price partitions that generalize the current partition of price space used by v3 contracts, and show that our results  are robust to a wide array of assumptions regarding token price volatility, drift, as well as gas prices. 
Furthermore, we also study how different degrees of risk-aversion impact optimal liquidity allocations for LPs. We find that as LPs become more risk-averse, they spread their liquidity across larger price ranges, as this helps reduce the variance in their profit and loss over extended price sequences. Moreover, we also see that increased volatility in price sequences also causes risk-averse LPs to spread their liquidity allocations further over a larger price range.

\subsection{Related Work}

This paper extends a growing body of literature around the incentives of liquidity provision  from on-chain implementations of CFMMs. Most closely related is  Neuder et al.~\cite{NRMP-strategic-LP}, who study strategic liquidity provision in Uniswap v3. As in our work, they assume that LPs holds beliefs that contract prices will evolve according to a Markov chain, and they provide a method for LPs to maximize fees earned in the steady state of the chain. One major difference in the present work is that we  model  impermanent loss in addition to fees earned, which is a first-order consideration for LPs. A second major difference is that we study the question of the design of v3 contracts in regard to the partitioning of the price space for potential liquidity allocations, a problem which to our knowledge has not been considered before in the CFMM literature. 

With regards to Uniswap v2, Angeris et al.~\cite{angeris2019analysis}  provide theoretical evidence showing that under reasonable conditions v2 exchanges closely track reference markets. Their work involves modeling potential arbitrage opportunities for traders, even as they are faced with trading fees in CFMMs. In addition, their work  provides expressions for potential LP earnings under simple price changes in v2 contracts. This work is extended in Angeris and Chitra~\cite{angeris2020improved}, who provide similar guarantees for a more general class of CFMMs, as well as in Angeris et al.~\cite{angeris2021constant}, where similar results and techniques are extended to CFMMs supporting multi-asset trades. In Angeris et al.~\cite{angeris2020does}, the authors also study the implications of the curvature in reserve curves for traders and LPs, providing concrete tradeoffs for when high and low curvature regimes favor each of these two classes of agents. All these results focus on v2 contracts however, and not on the richer space of liquidity allocations presented by v3, nor on the design questions related to v3 contracts.

A related branch of work also studies  how LPs can replicate the payoffs of financial derivatives. Evans~\cite{evans2020liquidity} focuses on geometric mean market makers (CFMMs with functional invariants which preserve a weighted geometric mean of assets held in the contract), and more general results are shown in a series of papers by Angeris et al.~\cite{angeris2021replicating} and~\cite{angeris2021replicatingoracles} for a larger class of financial derivatives. Finally, Capponi and Jia \cite{capponi2021adoption} study the adoption of Uniswap v2  by considering a sequential game between LPs and traders, and  use a similar stochastic model for price changes as that adopted in the present paper.

The liquidity provision problem in Uniswap also shares similarities with  market making  in  limit order  markets. Avellaneda \cite{Avellaneda2007}, for example, sets up the market making problem of posting bid and ask quotes in a maximized exponential utility framework and solves it in a two step approach 
where the market-maker first computes an indifference valuation of the traded stock that incorporates inventory risk, then decides where to place the bid and ask quotes based on a probabilistic model of orders being executed at different prices given the current mid-price. 
%
Our study of LP revenue as a function of  the fidelity in setting liquidity allocations  is related, also, to the tick size design  problem in traditional limit order markets, which has attracted considerable  attention from both an economics and regulatory perspective. European exchanges, for example, compete directly on the minimum pricing increment in the limit order book to capture market shares of quoted and executed volumes~\cite{foley2021}. In 2016, the US Securities and Exchange Commission (SEC) launched the {\em Tick Size Pilot Program} to assess the impact of increasing tick sizes on the market quality for illiquid stocks~\cite{Chung2019}.

\section{Overview of Uniswap}
\label{sec:uniswap-overview}
\setcounter{equation}{0}

A core functionality of Uniswap contracts is to provide a family of CFMM DEXs for trading between ERC-20 token pairs. In this section, we provide a brief overview of the mechanics of Uniswap v2 and Uniswap v3 to lay the groundwork  for subsequent sections. 

\subsection{Uniswap v2: Providing Liquidity  at all Prices}
\label{sec:v2-description}

A Uniswap v2 contract  facilitates trades between a pair of ERC-20 tokens, say Token $A$ and Token $B$. LPs provide liquidity in the form of bundles of $A$ and $B$ tokens to the contract. Let $x>0$ and  $y>0$ denote the number of $A$ tokens and $B$ tokens, respectively, provided by LPs to the contract. We refer to $(x,y)$ as the {\it v2 contract state}. 

\subsubsection{v2 Reserve Curve}
As trades are made, Uniswap v2 maintains a functional invariant on the liquidity held in the contract. The number of assets in the contract must satisfy $F(x,y) = C$, for function $F$ and constant $C$. 
We refer to the collection of all states $(x,y) \in (\mathbb{R}^+)^2$ that satisfy $F$ as the {\it reserve curve}. In particular, if a trader sells $\Delta x > 0$ units of  $A$, they receive $\Delta y > 0$ units of  $B$, such that $F(x + \Delta x, y - \Delta y) = C$ is maintained.  For Uniswap v2, $F(x,y) = x  y$, and the reserve curve is as visualized in Figure~\ref{fig:reals}. 

Let $L=\sqrt{xy}$ denote a quantity that we refer to as the number of {\em liquidity units} in a contract. This provides a convenient, single-valued measure of how much liquidity is held in the contract. With this, the reserve curve is the set of states $(x',y')$ that satisfy $x'y' = L^2$. In this sense, liquidity $L$ controls the set of allowable trades, including  the way trades are priced.
\begin{definition}[Uniswap v2 Reserve Curve]
\label{def:v2-reserve-curve}
For $L>0$ units of liquidity held in the contract, we let $\mathcal{R}^{(2)}(L)$ denote the  {\em v2 reserve curve at liquidity $L$} between token $A$ and token $B$, with

\begin{align}
\mathcal{R}^{(2)}(L) = \{ (x,y) \in \mathbb{R}^+ \times \mathbb{R}^+  \mid  xy = L^2\}.
\end{align}

\end{definition}

\begin{figure}
    \centering
    \includegraphics[width=0.4\linewidth]{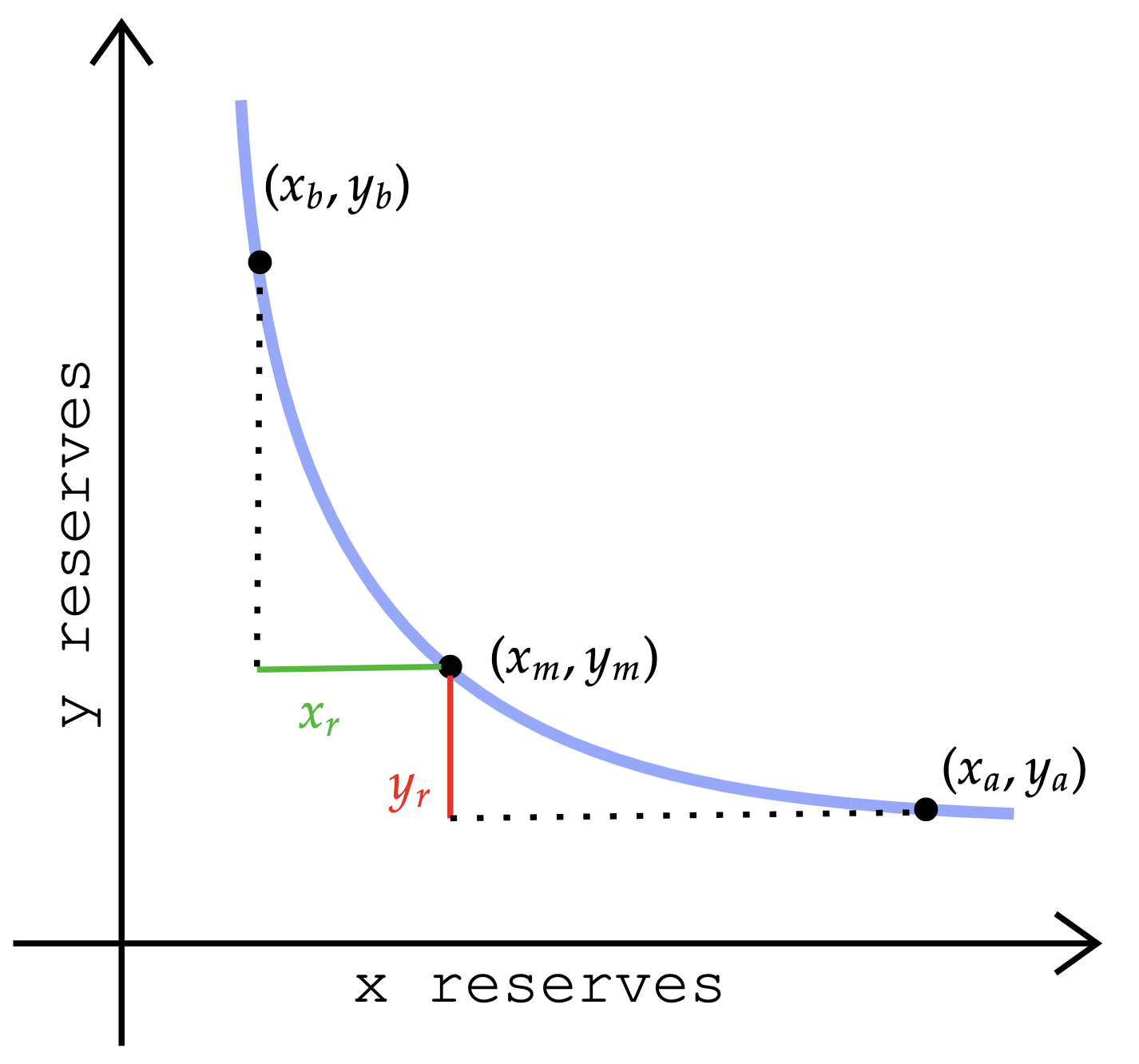}
    \caption{The $x,y$ coordinates on the Uniswap v2 $xy=L^2$ reserve curve, with points illustrated for prices $P_a < P_m < P_b$. Buying $x_r$ units of token $A$ requires sending $(y_b - y_m)$ units of token $B$ 
      to the contract and increases the price from $P_m$ to $P_b$. Buying $y_r$ units of token $B$
    requires sending $x_a - x_m$ units of token $A$ to the contract   and  decreases the price from $P_m$ to $p_a$. 
    \label{fig:reals}}
\end{figure}

\subsubsection{Contract Price}
The infinitesimal price $P$ of token $A$ in terms of token $B$ is the {\it contract price}, and depends on the contract state $(x,y)$.
%
%
For an amount of liquidity $L$, the amount of token $B$ held in the contract depends functionally on the amount of token $A$, with $y=f_L(x)$ for $f_L(x)=L^2/x$. With this,   the instantaneous price $P$ of token $A$ given liquidity $L$ is  $-\mathrm{d}f_L(x)/\mathrm{d}x$ and we have,
\begin{align}
P = -\frac{\mathrm{d}f_L(x)}{\mathrm{d}x} = \frac{L^2}{x^2} = \frac{y}{x}. 
\end{align}

The constant-product function can also be re-written as $(x/L)(y/L)= 1$, from which we see that when $L$ is higher, the contract price, $P = (y/L)/(x/L)$ slips less in response to trades; i.e., changing the price from $P$ to $P'$ requires a larger change in the contract state, in terms of $\Delta x$ or $\Delta y$, when the liquidity, $L$, is larger.
Based on the following correspondence, each point on reserve curve $\mathcal{R}^{(2)}(L)$ can be  identified by  $(x,y)$ assets of token $A$ and token $B$, or equivalently $(L,P)$  and the amount of liquidity  and the contract price.
\begin{proposition}\label{prop:pricecoords-v2}
If a Uniswap v2 contract has $L$ units of liquidity and a contract price $P$, it must be the case that the contract state is given by $(x,y)$ such that:
\begin{align}
    x = \frac{L}{\sqrt{P}}, \qquad y = L \sqrt{P}.
\end{align}
\end{proposition}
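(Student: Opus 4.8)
The plan is to combine the two facts already established in this subsection: that a contract holding $L$ units of liquidity sits on the reserve curve $\mathcal{R}^{(2)}(L)$, so its state $(x,y)$ satisfies $xy = L^2$, and that the contract price is given by $P = y/x$. These are two equations in the two unknowns $x,y$ (with $L$ and $P$ treated as given parameters), and the positivity constraint $x,y > 0$ will pin down a unique solution.

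First I would use $P = y/x$ to write $y = Px$, and substitute into the invariant $xy = L^2$ to obtain $Px^2 = L^2$, hence $x^2 = L^2/P$. Since $x > 0$ (and $L, P > 0$), taking the positive square root gives $x = L/\sqrt{P}$. Then substituting back, $y = Px = PL/\sqrt{P} = L\sqrt{P}$, which is exactly the claimed pair. Conversely one checks trivially that this $(x,y)$ does lie on $\mathcal{R}^{(2)}(L)$ and does give price $P$, so the correspondence is genuine.

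There is no real obstacle here — the statement is a direct algebraic consequence of Definition~\ref{def:v2-reserve-curve} and the price formula $P = y/x$. The only point requiring a word of care is the choice of sign when extracting the square root, which is resolved by the standing assumption that contract reserves are strictly positive; this is also what guarantees uniqueness of the representation, so that $(L,P)$ is a bona fide alternative coordinate system for points on the v2 reserve curve.
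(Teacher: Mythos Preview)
Your proposal is correct and matches the paper's approach: the paper's proof is a one-liner (``Immediate, by verifying $(x,y)\in\mathcal{R}^{(2)}(L)$ and $P=y/x$''), which is exactly the converse check you describe at the end. Your forward derivation from the system $xy=L^2$, $P=y/x$ is simply a slightly more detailed version of the same argument.
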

\begin{proof}
Immediate, by verifying  $(x,y) \in \mathcal{R}^{(2)}(L)$ and   $P = y/x$.
\end{proof}

A consequence of  Proposition~\ref{prop:pricecoords-v2}  is that a contract's state can  be represented by the bundle of $A$ and $B$ tokens the contract holds as liquidity, $(x,y)$, or equivalently by how much liquidity is held in the contract, and at what contract price, i.e. $(L,P)$. We call the former the {\it token-bundle} contract state and the latter the {\it liquidity-price} contract state.
For a given liquidity-price state, $(L,P)$, we  refer to the corresponding token-bundle contract state as the {\em  v2 bundle value of $L$ units of liquidity at price $P$} and denote this as $\mathcal{V}^{(2)}(L,P) = (L/\sqrt{P}, L \sqrt{P})$.
From Proposition \ref{prop:pricecoords-v2}, the  bundle value of $L$ units of liquidity is linear in $L$,  and we also write $\mathcal{V}^{(2)}(P) = \mathcal{V}^{(2)}(1,P)$ for the bundle value of 1 unit of liquidity, so that $\mathcal{V}^{(2)}(L,P) = L \cdot \mathcal{V}^{(2)}(P)$.  

\subsubsection{Providing and Removing Liquidity}

LPs can add liquidity to the contract or remove liquidity  they own under the invariant that the price $P$  remains unchanged. For example, an LP who wants to add $L'$ units of liquidity to the contract with current liquidity-price state of $(L,P)$ and token-bundle state of $(x,y)$ must deposit $\mathcal{V}^{(2)}(L',P) = (x',y')$, which is a bundle of $A$ tokens and $B$ tokens. The effect is to change the token-bundle state to $(x+x',y+y')$ and the liquidity-price state to $(L + L',P)$. 
Similarly, an LP with claim to $L'$ units of liquidity  may remove a token bundle consisting of $\mathcal{V}^{(2)}(L',P) = (x',y')$, from the contract. The resulting token-bundle state is $(x - x',y-y')$ and the  liquidity-price state is $(L-L',P)$. This is formalized in the following proposition, whose proof is deferred to Appendix \ref{appendix:sec-2-proofs}.
\begin{restatable}{proposition}{Firstprop}
\label{prop-add-liquidity}
Suppose that $(x,y) \in \mathcal{R}^{(2)}(L)$ with $P =y/x$, and $(x',y') \in \mathcal{R}^{(2)}(L')$ such that $P' = y'/x'= P$. Then  $(x+x',y+y') \in \mathcal{R}^{(2)}(L+L')$, and in addition, $(y'+y)/(x+x') = P$. If $L' < L$, then $(x-x',y-y') \in \cR^{(2)}(L-L')$ and $(y-y')/(x-x') = P$.
\end{restatable}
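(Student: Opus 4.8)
The plan is to reduce everything to the explicit price coordinates of Proposition~\ref{prop:pricecoords-v2}. Since $(x,y)\in\mathcal{R}^{(2)}(L)$ with contract price $P=y/x$, that proposition gives $x = L/\sqrt{P}$ and $y = L\sqrt{P}$; and since $(x',y')\in\mathcal{R}^{(2)}(L')$ with the \emph{same} price $P' = P$, the same proposition gives $x' = L'/\sqrt{P}$ and $y' = L'\sqrt{P}$. The point is that a token bundle on a v2 reserve curve is completely pinned down by its liquidity level together with its price, so once we know the two bundles share the price $P$ there is nothing left to choose.

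Next I would simply add coordinatewise: $x + x' = (L+L')/\sqrt{P}$ and $y + y' = (L+L')\sqrt{P}$. From this, $(x+x')(y+y') = (L+L')^2$, which is exactly the defining equation of $\mathcal{R}^{(2)}(L+L')$ (note $L+L'>0$, so the membership in $\mathbb{R}^+\times\mathbb{R}^+$ is automatic), and dividing the two expressions gives $(y+y')/(x+x') = P$, proving the first half.

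For the second half, assume $L' < L$. Then $x - x' = (L-L')/\sqrt{P}$ and $y - y' = (L-L')\sqrt{P}$, and since $L - L' > 0$ both coordinates are strictly positive, so the difference bundle genuinely lies in $\mathbb{R}^+\times\mathbb{R}^+$; multiplying gives $(x-x')(y-y') = (L-L')^2$, so it lies on $\mathcal{R}^{(2)}(L-L')$, and again the ratio is $(y-y')/(x-x') = P$.

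There is essentially no real obstacle here: the only thing to be careful about is the positivity check in the subtraction case, which is precisely why the hypothesis $L' < L$ is needed (otherwise the "removed" bundle would not correspond to a valid contract state). One could alternatively argue purely algebraically from $xy = L^2$, $x'y' = (L')^2$, and $y/x = y'/x' = P$ without invoking Proposition~\ref{prop:pricecoords-v2}, but routing through the price coordinates makes the computation a one-liner, so that is the route I would take.
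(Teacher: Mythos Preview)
Your proof is correct. The paper takes precisely the alternative route you mention at the end: it expands $(x+x')(y+y') = xy + x'y' + x'y + xy'$, identifies the first two terms as $L^2$ and $(L')^2$, and then uses the equal-price condition $y/x = y'/x'$ to show $x'y = xy'$, so that the cross terms become $2\sqrt{xyx'y'} = 2LL'$, completing the square. Your approach via the price coordinates of Proposition~\ref{prop:pricecoords-v2} is shorter and makes the positivity check in the subtraction case transparent; the paper's direct expansion is self-contained and does not appeal to that earlier proposition. Both are equally valid, and neither offers any real advantage over the other for a statement this elementary.
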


\subsubsection{Trading Fees}

 When a trade occurs in Uniswap v2, a portion goes to the LPs as fees. Suppose a trader sends $\Delta x$ units of token $A$  to purchase units of token $B$ (the case where the roles of $A$ and $B$ are reversed is identical).  In this case, $\gamma \Delta x$ units of  $A$  are skimmed as trade fees and allocated to LPs  in proportion  to how much liquidity they have contributed,  for  {\em fee rate},  $\gamma \in (0,1)$.   
More specifically, let us assume that there 
are $d \in \mathbb{N}$ LPs, such that the $j$-th LP has provided $L_j$ units of liquidity and the total liquidity in the contract is given by $L = \sum_{j=1}^d L_j$. For the given trade, the $j$-th LP  receives $( \frac{L_j}{L} )\gamma \Delta x$ units of token $A$. The remaining $(1-\gamma) \Delta x$ units of $A$ sent from the trader are used to move along the reserve curve, shifting  the contract's token-bundle state to $(x',y')$, where $x' = x + (1-\gamma) \Delta x$, and $y'$ is such that $(x',y') \in \cR^{(2)}(L)$, and the trader receives $y' - y$ units of token $B$ in return for this trade. In Appendix \ref{appendix:example-v2-dynamics} we provide an in-depth example of v2 trade dynamics.

\subsection{Uniswap v3: Concentrated Liquidity Provision}
\label{sec:v3-description}

In v2 contracts, LPs provide assets to the contract to facilitate trades at any contract price and an LP's contributions to the contract is measured in units of v2 liquidity. In v3 contracts, LPs are given the option to allocate liquidity to be used only for trades in a finite price interval $[a,b]$. As we will see, we use an analogous, single-valued measure of an LP's contribution to prices in this range, which we refer to as {\em units of $[a,b]$-liquidity}. 

At a high level, providing $L$ units of $[a,b]$-liquidity has two key consequences regarding transaction fees earned by an LP: 

(i) this liquidity only earns fees when the contract price is in $[a,b]$, where fees are split proportionally amongst all LPs who have allocated liquidity at intervals including the contract price, and 

(ii) the value (in terms of either units of token $A$ or token $B$) of the bundle of tokens equivalent to $L$ units of $[a,b]$-liquidity is smaller as the interval $[a,b]$ becomes smaller.

Combining these two points, if an LP has a certain initial capital in terms of tokens $A$ and  $B$, they can potentially obtain more liquidity units over smaller intervals, and hence increase the fees they potentially accrue. However, this comes at the risk of not earning fees  when prices exit the liquidity's given interval. 
As with v2 contracts, when an LP provides $L$ units of $[a,b]$-liquidity, they send a bundle of $A$ and $B$ tokens equivalent to the bundle value of $L$ units of $[a,b]$-liquidity at the given contract price (albeit with different expressions for bundle value). As the token price changes, the LP may face losses in the redeemable token bundle value of $L$ units of $[a,b]$-liquidity relative to the value they would have accrued simply holding their initial liquidity before obtaining the liquidity from the contract. This  phenomenon is called {\it impermanent loss} (this phrasing recognizing that this loss disappears if the prices return to their original values). In Section~\ref{sec:LP-PnL}, we provide more details on these considerations, where we note that as the relevant interval decreases for $[a,b]$-liquidity the potential impermanent loss to an LP increases.

\subsubsection{\texorpdfstring{$[a,b]$}{[a,b]}-Liquidity}
In Uniswap v3, an LP provides liquidity for a specific price interval, denoted $[a,b]$, and we quantify a user's contribution in terms of {\it $[a,b]$-liquidity units}. We introduce the following notation,
\begin{align}
    \Delta^x_{P,P'} = \frac{1}{\sqrt{P'}} - \frac{1}{\sqrt{P}}, \qquad \Delta^y_{P,P'} = \sqrt{P'} - \sqrt{P},
\end{align}
where $\Delta^x_{P,P'}$ and $\Delta^y_{P,P'}$ represent the change in $x$ and $y$ values, respectively, along the  v2 reserve curve for unit liquidity, $\mathcal{R}^{(2)}(1)$, as the contract price changes from $P$ to $P'$.
We provide a map from $[a,b]$-liquidity to a token position, for Uniswap v3, that is analogous to the v2  function that gives the bundle value.
\begin{definition}[${[a,b]}$-Liquidity value]
\label{def:a-b-value}
Suppose that the contract price is  $P \in (0,\infty)$. We let $\mathcal{V}^{(3)}(L,a,b,P) \in \mathbb{R}^+ \times \mathbb{R}^+$ denote the {\em bundle value} of $L$ units of $[a,b]$-liquidity, which is  a bundle of $A$ and $B$ tokens respectively, and the number of tokens that is equivalent to liquidity $L$ on this interval at price $P$, with 
\begin{equation}
\mathcal{V}^{(3)}(L,a,b,P) = \left\{
        \begin{array}{ll}
            (L \Delta^x_{b,a}, 0)& \quad \mbox{if $P < a$} \\
            (0,L \Delta^y_{a,b}) & \quad \mbox{if $P > b$} \\
            (L\Delta^x_{b,P}, L\Delta^y_{a,P}) & \quad \mbox{if $P\in[a,b]$} 
        \end{array}
    \right.
\end{equation}
\end{definition}

As with the v2 value function,   this expression is linear in $L$, hence we also use the shorthand for $\mathcal{V}^{(3)}(a,b,P) = \mathcal{V}^{(3)}(1,a,b,P)$ so that $\mathcal{V}^{(3)}(L,a,b,P) = L \cdot \mathcal{V}^{(3)}(a,b,P)$. In particular,  an LP who wants to add $L'$ units of $[a,b]$-liquidity to a v3 contract, when the contract price is $P$,  must send the token bundle $\mathcal{V}^{(3)}(L',a,b,P)$ to the contract. An LP who wants to remove $L'$ units of $[a,b]$-liquidity will receive the token bundle $\mathcal{V}^{(3)}(L',a,b,P)$.

As mentioned above, as the interval $[a,b]$ becomes smaller, the bundle value of $L'$ units of $[a,b]$-liquidity falls; i.e., $L'$ units can be added with fewer  $A$ and $B$ tokens.
At the same time, fees for $[a,b]$-liquidity are limited to trades  at prices $P \in [a,b]$. 
In the converse, as the interval $[a,b]$ approaches the entire price interval, we recover the v2 liquidity value. The proof of the following is deferred to Appendix \ref{appendix:sec-2-proofs}.
\begin{restatable}{proposition}{Secondprop}
In the limit of an $[a,b]$-interval that approaches the entire price interval,
the value of Uniswap v3 liquidity is that of Uniswap v2 liquidity, i.e., 
\begin{align}
&\lim_{a \rightarrow 0,  b \rightarrow \infty} [\mathcal{V}^{(3)}(a,b,P)] = \mathcal{V}^{(2)}(P). 
\end{align}
\end{restatable}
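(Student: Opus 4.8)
The plan is to reduce the limit to the middle branch of Definition~\ref{def:a-b-value} and then take limits coordinate-by-coordinate. First I would fix $P \in (0,\infty)$ and observe that, since we are sending $a \to 0$ and $b \to \infty$, we may assume without loss of generality that $a$ and $b$ are chosen so that $0 < a < P < b$; indeed the limit only concerns the behavior of $\mathcal{V}^{(3)}(a,b,P)$ for $a$ arbitrarily close to $0$ and $b$ arbitrarily large, and for all such $a,b$ we have $P \in [a,b]$. Hence throughout the limit we are in the third case of Definition~\ref{def:a-b-value}, namely
\begin{align}
\mathcal{V}^{(3)}(a,b,P) = \left( \Delta^x_{b,P},\ \Delta^y_{a,P} \right) = \left( \frac{1}{\sqrt{P}} - \frac{1}{\sqrt{b}},\ \sqrt{P} - \sqrt{a} \right),
\end{align}
where the second equality just unfolds the definitions of $\Delta^x_{b,P}$ and $\Delta^y_{a,P}$.

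Next I would take the two limits separately. As $b \to \infty$, the term $1/\sqrt{b} \to 0$, so the first coordinate tends to $1/\sqrt{P}$; as $a \to 0$, the term $\sqrt{a} \to 0$, so the second coordinate tends to $\sqrt{P}$. Since each coordinate is a continuous (indeed monotone) function of $b$ and of $a$ respectively, and the two limits are independent of one another, the joint limit exists and equals $\bigl( 1/\sqrt{P},\ \sqrt{P} \bigr)$. Finally I would identify this with $\mathcal{V}^{(2)}(P)$: by definition $\mathcal{V}^{(2)}(P) = \mathcal{V}^{(2)}(1,P) = (1/\sqrt{P},\ \sqrt{P})$, which matches, completing the proof.

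There is no real obstacle here — the statement is essentially a direct computation. The only point that needs a word of care is the first step: one must note that for the purposes of the limit it suffices to consider $a < P < b$, so that the piecewise definition collapses to its interior case; after that the argument is immediate from continuity of $\sqrt{\cdot}$ and $1/\sqrt{\cdot}$ at the relevant points.
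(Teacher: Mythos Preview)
Your proposal is correct and follows essentially the same approach as the paper: reduce to the interior case $a<P<b$ of Definition~\ref{def:a-b-value}, compute the coordinatewise limits of $\Delta^x_{b,P}$ and $\Delta^y_{a,P}$, and identify the result with $\mathcal{V}^{(2)}(P)$. Your version is slightly more explicit about why the joint limit is well-defined, but the argument is otherwise identical.
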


\subsubsection{v3 Reserve Curves}
Suppose that a v3 contract has $L$ units of $[a,b]$-liquidity, in addition to other liquidity over the rest of the price space. In v2, trades move the contract state along the reserve curve, $\cR^{(2)}(L)$, and this keeps the total number of liquidity units in the contract  constant. The same holds for a v3 contract, albeit for trades for which the contract price remains in the interval. To explain this, we first define the reserve curve that corresponds to  $L$ units of $[a,b]$-liquidity. 
For this, we
define $\phi^A_{a,b}(z) = z + \frac{L}{\sqrt{b}}$ and $\phi^B_{a,b}(z) = z + L \sqrt{a}$, and  write $\phi_{a,b}(w,z) = (\phi^A_{a,b}(w),\phi^B_{a,b}(z))$.

\begin{definition}[Uniswap v3 reserve curve]
\label{def:v3-reserve-curve}
Given $L$ units of $[a,b]$-liquidity in Uniswap v3, we denote the {\em reserve curve} at $[a,b]$-liquidity $L$ by $\mathcal{R}^{(3)}(L,a,b)$, with 
\begin{align}
\mathcal{R}^{(3)}(L,a,b) = \{ (x,y) \in \mathbb{R}^+ \times \mathbb{R}^+ \mid \phi^A_{a,b}(x) \cdot \phi^B_{a,b}(y) = L^2 \}.
\end{align}
\end{definition}

For a given v3 reserve curve, $\mathcal{R}^{(3)}(L,a,b)$, we call $\mathcal{R}^{(2)}(L)$ the {\it virtual reserve curve} of the assets, appealing here to the v2 reserve curve. As the following proposition shows, the v3 reserve curve is  a positive affine transformation of the portion of $\mathcal{R}^{(2)}(L)$ that corresponds to token-bundle states with a contract price in $[a,b]$.
Furthermore,  prices are preserved for trades in this interval  since the transformation is positive affine.
An example of a v3 reserve curve and its virtual reserve curve is visualized in Figure \ref{fig:shift}.

\begin{restatable}{proposition}{Thirdprop}
\label{lemma:v3-virtual-affine-map}
Suppose that $(x,y) \in \mathcal{R}^{(3)}(L,a,b)$. It follows that  $\phi_{a,b}(x,y) \in \mathcal{R}^{(2)}(L)$, and if we let $P = \phi^B_{a,b}(y)/\phi^A_{a,b}(x)$, then $P \in [a,b]$ and $(x,y) = \mathcal{V}^{(3)}(L,a,b,P)$.  
\end{restatable}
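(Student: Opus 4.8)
The plan is to unpack the definitions and push everything through the v2 price-coordinate formula of Proposition~\ref{prop:pricecoords-v2}. First I would introduce the shorthand $u := \phi^A_{a,b}(x) = x + L/\sqrt{b}$ and $v := \phi^B_{a,b}(y) = y + L\sqrt{a}$. Since $x,y > 0$ and $a,b,L > 0$, both $u$ and $v$ are strictly positive, and the defining equation of $\mathcal{R}^{(3)}(L,a,b)$ is exactly $uv = L^2$. Hence $(u,v) \in \mathcal{R}^{(2)}(L)$, which is precisely the first assertion $\phi_{a,b}(x,y) \in \mathcal{R}^{(2)}(L)$.

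Next, set $P := v/u = \phi^B_{a,b}(y)/\phi^A_{a,b}(x)$, so that $(u,v)$ is a v2 state with liquidity $L$ and contract price $P$. Applying Proposition~\ref{prop:pricecoords-v2} gives $u = L/\sqrt{P}$ and $v = L\sqrt{P}$. Substituting the definitions of $u$ and $v$ then yields
\begin{align}
x = \frac{L}{\sqrt{P}} - \frac{L}{\sqrt{b}} = L\,\Delta^x_{b,P}, \qquad y = L\sqrt{P} - L\sqrt{a} = L\,\Delta^y_{a,P}.
\end{align}
From $x > 0$ we get $1/\sqrt{P} > 1/\sqrt{b}$, hence $P < b$; from $y > 0$ we get $\sqrt{P} > \sqrt{a}$, hence $P > a$. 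Thus $P \in [a,b]$ (and if one allows the degenerate boundary states $x=0$ or $y=0$, the same computation recovers the closed endpoints).

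Finally, since $P \in [a,b]$, Definition~\ref{def:a-b-value} gives $\mathcal{V}^{(3)}(L,a,b,P) = (L\,\Delta^x_{b,P}, L\,\Delta^y_{a,P})$, which equals $(x,y)$ by the display above, completing the proof. I do not anticipate a genuine obstacle: the argument is a short, direct computation. The only points that require a little care are orienting the inequalities correctly when converting $x>0,\ y>0$ into $P<b,\ P>a$ (the reciprocal-square-root and square-root are monotone in opposite directions), and remembering to invoke Proposition~\ref{prop:pricecoords-v2} on the shifted pair $(u,v)$ rather than on $(x,y)$ itself.
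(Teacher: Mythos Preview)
Your proof is correct and follows essentially the same route as the paper: both pass to the shifted pair on $\mathcal{R}^{(2)}(L)$, invoke the v2 price-coordinate formula $u=L/\sqrt{P}$, $v=L\sqrt{P}$, and unwind to recover $(x,y)=(L\Delta^x_{b,P},L\Delta^y_{a,P})$. The only cosmetic difference is ordering---the paper establishes $P\in[a,b]$ via a monotonicity-plus-endpoints argument before computing the bundle, whereas you compute the bundle first and read off $P\in[a,b]$ from the positivity of $x$ and $y$; your version is arguably the cleaner of the two.
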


The proof of Proposition \ref{lemma:v3-virtual-affine-map} can be found in Appendix \ref{appendix:sec-2-proofs}. It shows that the affine map $\phi_{a,b}$ naturally maps $\phi_{a,b}$ to $\cR^{(2)}(L)$. In this regard, we say a token bundle $(x,y)\in \cR^{(3)}(L,a,b)$ has a contract price given by $P = \frac{\phi^B_{a,b}(y)}{\phi^A_{a,b}(x)}$. 

\begin{figure}
    \centering
    \includegraphics[width=0.4\linewidth]{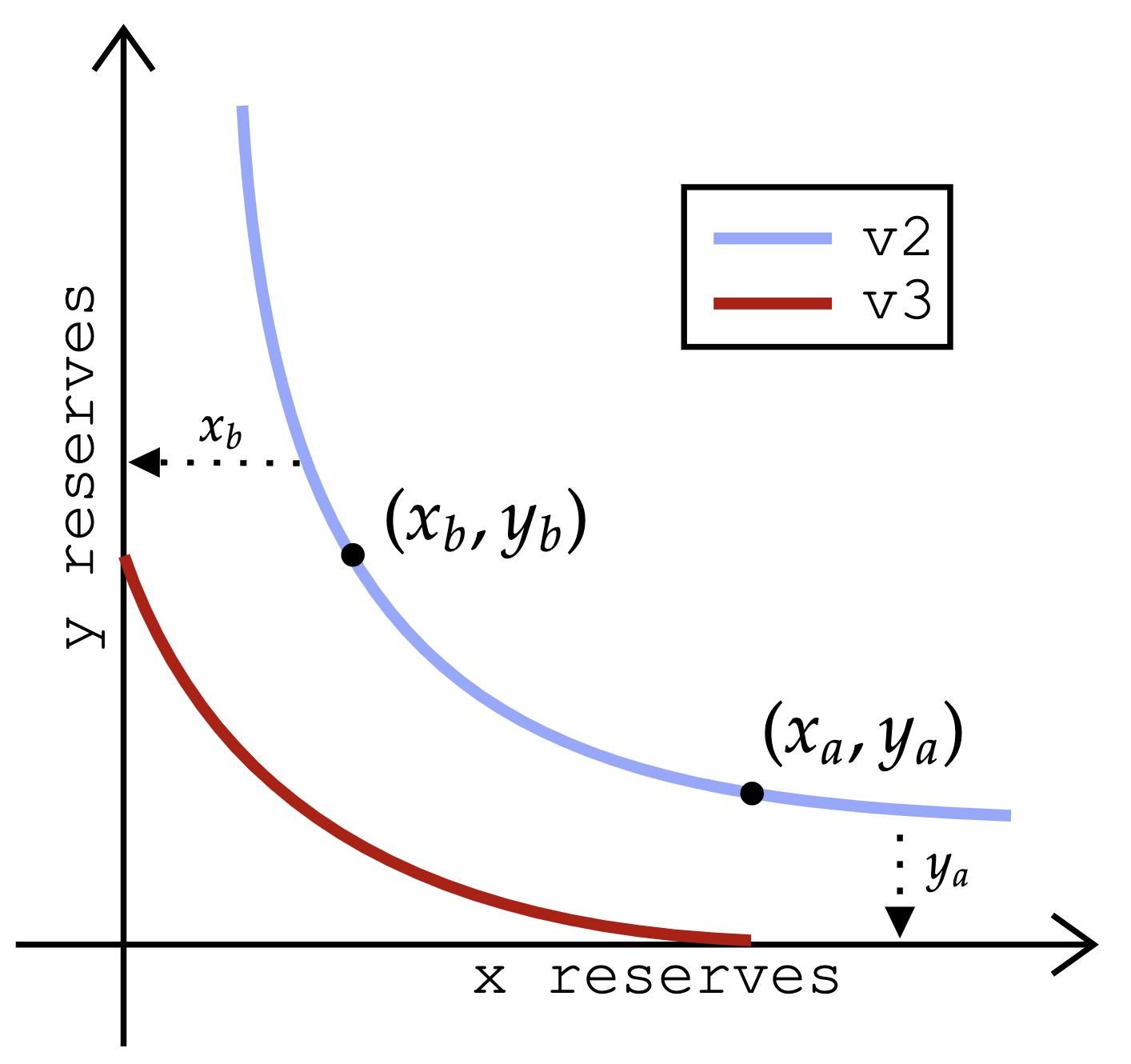}
    \caption{The affine shift of the Uniswap v2 reserve curve (blue line) to  the  v3 reserve curve (red line). The v3 reserve curve is only defined when the  price is in the  interval $[a,b]$, where $a < b$ are the price bounds specified by the LP.  
    \label{fig:shift}}
\end{figure}

\subsubsection{Trading and Fees}

Having described how LPs can provide and remove liquidity over different price intervals by sending a v3 contract the required token bundles, we now describe the trade dynamics for a v3 contract. Let us assume that $d$ LPs have provided liquidity to the contract, where we assume that the $j$-th LP has provided precisely $L_j$ units of $[a_j,b_j]$-liquidity \footnote{This is without loss of generality. To see why, suppose that an LP has provided $L_1$ units of liquidity to the interval $[a_1,b_1]$ and $L_2$ units of liquidity to $[a_2,b_2]$ with $L_1 > L_2$ and $b_1 = a_2$. This allocation is identical to the LP providing $L_1$ units of liquidity to $[a_1,b_2]$ and $L_2-L_1$ units of liquidity to $[a_2,b_2]$, where we can treat each allocation as a different LP. This argument can be extended to arbitrary liquidity allocations.}.
In addition,  suppose  the v3 contract has a current price given by $P \in (0,\infty)$. In what follows, we will show the dynamics of a trade that moves the contract price to $P' < P$ (the case where the price increases is symmetric). Before continuing, we make two key assumptions on the positions provided by LPs:
\begin{itemize}
    \item [A1] If the $j$-th and $r$-th LPs have allocations over intervals $[a_j,b_j]$ and $[a_r,b_r]$ respectively, then either the intervals are the same, the intervals are disjoint, or their intersection is at their boundaries.
    \item [A2] There exists an LP with an allocation $[a_i,b_i]$ such that $[P',P] \subseteq [a_i,b_i]$.
\end{itemize}



These assumptions are without loss of generality. For A1, this hinges on   Proposition~\ref{prop:v3subintervals}, with proof in Appendix \ref{appendix:sec-2-proofs}, which tells us that if an LP has an allocation worth $L$ units of $[a,b]$-liquidity, the token-bundle value of such an allocation can be decomposed into that of $L$ units of $[a,c]$ and $[c,b]$ liquidity for any choice of $c \in (a,b)$. Furthermore, this decomposition holds at any price. 
\begin{restatable}{proposition}{Fourthprop}
\label{prop:v3subintervals}
Consider an arbitrary closed interval $[a,b]$ and a value $c \in (a,b)$. Let $P$ be an arbitrary contract price, then 
\begin{align}
\mathcal{V}^{(3)}(L,a,c,P) + \mathcal{V}^{(3)}(L,c,b,P) = \mathcal{V}^{(3)}(L,a,b,P).
\end{align}
\end{restatable}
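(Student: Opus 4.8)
The plan is to prove the identity by a direct case analysis on the location of the price $P$ relative to the breakpoints $a < c < b$, with each case collapsing to one of two elementary cancellation identities for the primitives $\Delta^x$ and $\Delta^y$. Since $\mathcal{V}^{(3)}(L,\cdot,\cdot,P) = L\cdot\mathcal{V}^{(3)}(\cdot,\cdot,P)$, it suffices to verify the statement for $L = 1$.

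First I would record the algebra that does all the work. From the definitions $\Delta^x_{P,P'} = \tfrac{1}{\sqrt{P'}} - \tfrac{1}{\sqrt{P}}$ and $\Delta^y_{P,P'} = \sqrt{P'} - \sqrt{P}$, for any $p,q,r > 0$ the middle terms cancel and give
\[
\Delta^x_{q,p} + \Delta^x_{r,q} = \Delta^x_{r,p}, \qquad \Delta^y_{p,q} + \Delta^y_{q,r} = \Delta^y_{p,r},
\]
and moreover $\Delta^x_{p,p} = \Delta^y_{p,p} = 0$.

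Then I would split into the four closed cases $P \le a$, $a \le P \le c$, $c \le P \le b$, and $b \le P$. If $P \le a$, every one of the three bundle values lies on the token-$A$ axis: $\mathcal{V}^{(3)}(a,c,P) = (\Delta^x_{c,a},0)$, $\mathcal{V}^{(3)}(c,b,P) = (\Delta^x_{b,c},0)$, $\mathcal{V}^{(3)}(a,b,P) = (\Delta^x_{b,a},0)$ (at the endpoint $P=a$ the ``interior'' branch of Definition~\ref{def:a-b-value} is used, but $\Delta^y_{a,a}=0$ makes it agree with the stated values), and the claim is precisely $\Delta^x_{c,a} + \Delta^x_{b,c} = \Delta^x_{b,a}$. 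If $a \le P \le c$, the $[c,b]$-piece collapses to $(\Delta^x_{b,c},0)$ (using $\Delta^y_{c,c}=0$ at $P=c$), while $\mathcal{V}^{(3)}(a,c,P) = (\Delta^x_{c,P},\Delta^y_{a,P})$ and $\mathcal{V}^{(3)}(a,b,P) = (\Delta^x_{b,P},\Delta^y_{a,P})$; the $y$-coordinates already match and the $x$-coordinate equation is $\Delta^x_{c,P} + \Delta^x_{b,c} = \Delta^x_{b,P}$, an instance of the first cancellation identity (with $p = P$, $q = c$, $r = b$). The case $c \le P \le b$ is the mirror image: the $[a,c]$-piece collapses to $(0,\Delta^y_{a,c})$, the $x$-coordinates match, and the $y$-coordinate equation $\Delta^y_{a,c} + \Delta^y_{c,P} = \Delta^y_{a,P}$ is an instance of the second identity. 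Finally $b \le P$ is symmetric to $P \le a$, with everything on the token-$B$ axis and the identity $\Delta^y_{a,c} + \Delta^y_{c,b} = \Delta^y_{a,b}$.

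I do not expect a genuine obstacle; the only point requiring attention is that the four closed cases overlap at the shared endpoints $P \in \{a,c,b\}$, so one must check the two computations agree there — which is exactly what the vanishing $\Delta^x_{p,p} = \Delta^y_{p,p} = 0$ guarantees, since it is what reconciles the ``interior'' branch of $\mathcal{V}^{(3)}$ with the axis-only branch at an endpoint. Everything else is the one-line cancellation recorded at the start.
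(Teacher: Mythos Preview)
Your proposal is correct and follows essentially the same approach as the paper: reduce to $L=1$ by linearity, then do a case analysis on the position of $P$ relative to $a<c<b$, with each case reducing to the telescoping identity $\Delta^x_{q,p}+\Delta^x_{r,q}=\Delta^x_{r,p}$ (or its $\Delta^y$ analogue). Your version is slightly more thorough in that you explicitly record the cancellation identities up front and address the endpoint overlaps via $\Delta^x_{p,p}=\Delta^y_{p,p}=0$, whereas the paper treats only strict inequalities and dismisses two of the four cases as ``almost identical.''
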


To apply this result to our  scenario, suppose that the $j$-th and $r$-th LP do not satisfy the first assumption. Let the intersection over their positions be $[a',b'] = [a_j,b_j] \cap [a_r,b_r]$.
For each LP, we can decompose their liquidity to be over $[a',b']$ and at most two other price segments (this follows from Proposition~\ref{prop:v3subintervals}). The resulting positions, when treated as different LPs, satisfy our first assumption.  

With regards to A2, when a v3 contract is created, an LP allocation is created on the entire price space, this corresponding to a v2 allocation for some amount of liquidity determined by the entity creating the contract. This means that there must exist LPs with positions over intervals $[a_i,b_i]$ and $[a_j,b_j]$ such that $P \in [a_i,b_i]$ and $P' \in [a_j,b_j]$. If these intervals are not equal, then it must be the case that $P' < a_i < P$, in which case, we can decompose the price movement as occurring from $P$ to $P'' = a_i$ and subsequently to $P'$. Ultimately $[P,P''] \subseteq [a_i,b_i]$, and we obtain the desired assumption above. 


\paragraph{Trading with Active Liquidity.}

Returning to the example, there is a contract price $P$ and a trade that moves the contract price to $P' < P$ by sending $A$ tokens to the contract. If the allocation corresponding to the interval $[a_j,b_j]$ of the $j$-th LP contains $P$, then we say the LP is {\it active} and that their $L_j$ units of $[a_j,b_j]$-liquidity is {\em  active liquidity}. Assumptions A1 and A2  imply the existence of  values, $a^*$ and $b^*$, such that if the $j$-th LP is active, then $a_j = a^*$ and $b_j = b^*$. We call $[a^*,b^*]$ the {\it active interval} at price $P$. 

Without loss of generality, assume the first $s$ of the $d$ LPs are active, and say that $L = \sum_{j=1}^s L_i$ is the {\it total active liquidity}  at  price $P$ and $(x,y) = \cV^{(3)}(L,a^*,b^*,P) \in \cR^{(3)}(L,a^*,b^*)$ denote the {\em active bundle} at   price $P$. Traders will send assets to the v3 contract, which move the active bundle along the v3 reserve curve, which is given by $\cR^{(3)}(L,a^*,b^*)$ and we refer to as the {\it active v3 reserve curve}. 
Finally, we let $x^* = \frac{L}{\sqrt{a^*}}$ and $y^* = L\sqrt{b^*}$, where  values $x^*$ and $y^*$ denote the maximum amount of $A$ tokens and $B$ tokens, respectively, that can be achieved 
by bundles on the active reserve curve. 

We recall that  $\gamma \in (0,1)$ is the trade fee rate of the contract and begin by considering a trader who sends $\Delta x \leq \frac{1}{1-\gamma} (x^* - x)$ units of token $A$ to the contract. This  trade amount ensures that the active bundle can move to another bundle on the active reserve curve. First,  fees  of $\gamma \Delta x$ units of token $A$ are skimmed for LPs and this is shared among the active LPs, proportionally,
such that if the $j$-th LP is active, they will receive $\frac{L_j}{L}\gamma \Delta x$ units of token $A$.
%
The remaining $(1-\gamma)\Delta x \leq x^* - x$ of token $A$  moves the active bundle to a bundle with $x' =  x + (1-\gamma)\Delta x \leq x^*$ units of $A$ tokens. That $x' \leq x^*$, ensures that there is a corresponding $y'$ such that the bundle $(x',y')$ lies on the active reserve curve, with contract price of $P' =  \frac{\phi_{a^*,b^*}(y')}{\phi_{a^*,b^*}(x')} \in [a^*,b^*]$. The trader receives $y - y'$ units of $B$ tokens. 

If the trader sends $\Delta x > \frac{1}{1-\gamma} (x^* - x)$ units of token $A$,  the contract first trades $\frac{1}{1-\gamma} (x^* - x)$ such tokens, which moves the contract bundle to the boundary of the active reserve curve. To trade the remaining tokens the contract must exit the current active interval, and determine the new active interval along with the set of active LPs and active liquidity. Whatever gas fees are needed for the computation of the aforementioned active quantities within the contract are  charged to the trader \footnote{Gas fees are also charged to traders for the computational overhead in updating v2 contracts, but we model only the (significant) surplus gas fees paid by traders participating in v3 contracts for the computation of active intervals}. 
Ultimately, the remaining $\frac{1}{1-\gamma} (x^* - x) - \Delta x$ units of $A$ tokens  are then traded iteratively as per this  process. In Appendix \ref{appendix:example-v3-dynamics}, we provide an in-depth example of v3 trading dynamics. 

\paragraph{Price Buckets.}
A Uniswap v3 contract partitions the space of all prices, $(0,\infty)$, into intervals, which are called {\em price buckets}. The effect is that an LP can only provide  $[a,b]$-liquidity  for values of $a$ and $b$ that correspond to endpoints of buckets (and an interval over which liquidity is provided corresponds to a set of contiguous buckets).
We assume that there are $n+m+1$ buckets, with indices from the set $\{-m,\dots,0,\dots,n\}$. The $i$-th such bucket is denoted by $B_i$, and typically represents an interval $[a_i,b_i]$. We distinguish $B_{-m}$ and $B_n$ however as representing the intervals $(a_{-m},b_{-m}] = (0,b_{-m}]$ and $[a_n,b_n) = [a_n,\infty)$ respectively.   
Thus we have $\cup_{i} B_i = (0,\infty)$, and $b_i = a_{i+1}$ for all $i<n$. By convention, the {\em unit price} $P=1$, where  tokens $A$ and  $B$ are at parity within the contract, lies in the $0$-th bucket (and $a_0 < 1 < b_0$). In addition, we let $\vect{\mu} = \{B_{-m},\dots,B_0,\dots,B_n\}$ denote the set of buckets in the v3 contract. 

\section{LP Profit and Loss over Price Sequences}
\label{sec:LP-PnL}
\setcounter{equation}{0}

Going forward, we assume that v3 contracts are secondary to larger markets between token pairs, these existing on centralized exchanges. As a result, there is a {\em market price} that is determined by on a centralized exchange and tha also influences the contract price through arbitrage trading.  
In this regard, it is important to highlight that the contract price from the previous section need not coincide with the market price of token $A$ on external exchanges. As we will see, keeping track of both quantities is important to the profit and loss of an LP.  

To facilitate exposition, we will  denote the {\em market price} by $P_m$ and the {\em contract price} by $P_c$ when it is necessary to distinguish them. Furthermore, we  let $P = (P_c,P_m)$ denote a {\it contract-market price pair}.
In what follows, we consider the perspective of an LP who borrows capital to create a liquidity allocation in a v3 contract. 
\if 0 As per the previous section, we assume that the contract has a set of $(m+n+1)$ buckets denoted by $\vect{\mu} = \{ B_{-m}, \dots, B_0, \dots B_n \}$, where the $i$-th bucket generally represents the price interval $[a_i,b_i]$, and the $0$-th bucket contains the parity contract price $P_c = 1$.
\fi 
We let $\vect{\ell} = (\ell_{-m}, \dots, \ell_0, \dots, \ell_n)$ denote the {\it liquidity allocation}  
of the LP over each of the $m+n+1$ buckets in the contract, where $\ell_i$ denotes the units of $[a_i,b_i]$-liquidity held by the provider.

%
Let $\mathcal{B}: (\mathbb{R}^+)^2 \times \mathcal{R}^+ \rightarrow \mathbb{R}^+$ be the function that returns the {\em market worth in terms of token $B$} of a bundle of $A$ and $B$, under the assumption that token $A$ has a market price of $P_m$, that is
\begin{align}
\mathcal{B}((x,y),P_m) = P_m \cdot x + y.
\end{align}

To obtain $L'$ units of $[a_i,b_i]$-liquidity, i.e. corresponding to a particular bucket, when the contract price is $P_c$ requires an LP to send a bundle of $A$ and $B$ tokens given by $\mathcal{V}^{(3)}(L',a_i,b_i,P_c) = L' \cdot \mathcal{V}^{(3)}(a_i,b_i,P_c)$. At market price $P_m$, this bundle has an equivalent token $B$ worth given by $\mathcal{B}(\mathcal{V}^{(3)}(L',a_i,b_i,P_c),P_m)$ units of token $B$.

\subsection{Transaction Fees over a Price Sequence} \label{subsec:fee_over_price_seq}

LPs earn fees when trades occur (and thus the contract price changes). We  express the fees accrued by an LP with liquidity allocation $\vect{\ell}$ over a finite sequence of contract-market price pairs $\vect{P} = (P_0,...,P_T)$. For this, let $P_t = (P_{c,t},P_{m,t})$ denote the $t$-th contract-market price pair.
%
Without loss of generality, we assume that each individual contract price movement from $P_t$ to $P_{t+1}$ occurs within a single bucket (we can always split the price movement accordingly). We focus on a single contract price movement from $P_{c,t} \rightarrow P_{c,t+1}$, where $P_{c,t}, P_{c,t+1} \in [a_i, b_i]$. We let $L'_{t\rightarrow t+1}=\ell_i$ denote the liquidity the LP has in this bucket, and let $L_{t \rightarrow t+1}$ denote the total liquidity of all LPs have in this bucket.

We first consider an upward contract price movement, $P_{c,t} \rightarrow P_{c,t+1}$,   where $P_{c,t+1} > P_{c,t}$. Here, $\Delta y = L_{t \rightarrow t+1} \Delta^y_{P_{c,t},P_{c,t+1}}$ units of token $B$ are used for trading with the contract. Since a $\gamma$ proportion of all funds sent to the contract are skimmed for LP fees, this  means  $(1/(1-\gamma))L_{t \rightarrow t+1} \Delta^y_{P_{c,t},P_{c,t+1}}$ units of token $B$ are sent to the contract to move the price. Of this, a proportional $ L'_{t\rightarrow t+1}/{L_{t \rightarrow t+1}}$ quantity of $\Delta y$ is traded using the $L'_{t \rightarrow t+1}$ units of liquidity of the LP (liquidity that is active in the interval $[P_{c,t},P_{c,t+1}]$), and of this  $\gamma$ fraction is skimmed as fees. Combining, the LP earns
\begin{align}
F^B_{P_t \rightarrow P_{t+1}}(\vect{\ell}) = L'_{t\rightarrow t+1} \left(\frac{\gamma}{1-\gamma} \Delta^y_{P_{c,t},P_{c,t+1}} \right)
\end{align}
units of token $B$ for this contract price movement and $F^A_{P_t \rightarrow P_{t+1}}(\vect{\ell})=0$ units of $A$ tokens.
 
The analysis for a downward contract price movement from $P_{c,t}$ to $P_{c,t+1}$  with $P_{c,t+1} < P_{c,t}$, is almost identical. In this case, 
some number of units of token $A$ are sent to the contract by a trader, to move the contract price.
In particular, $\Delta x = (1/(1-\gamma)) L_{t \rightarrow t+1} \Delta^x_{P_{c,t},P_{c,t+1}}$ units of token $A$, of which the LP also receives a $\gamma L'_{t \rightarrow t+1}/L_{t \rightarrow t+1}$ portion. Putting everything together, the LP earns
\begin{align}
F^A_{P_t \rightarrow P_{t+1}}(\vect{\ell}) =  L'_{t \rightarrow t+1}  \left( \frac{\gamma}{1-\gamma} \Delta^x_{P_{c,t},P_{c,t+1}} \right)
\end{align}
units of token $A$ for this price movement and $F^B_{P_t \rightarrow P_{t+1}}(\vect{\ell})=0$ units of $B$ tokens. 
Finally, if the contract price does not move, then no transaction fees are accrued. In other words, if $P_{c,t} = P_{c,t+1}$, then $F^A_{P_t\rightarrow P_{t+1}}(\vect{\ell}) = F^B_{P_t \rightarrow P_{t+1}}(\vect{\ell}) = 0$.

Note that the transaction fee earned by a LP is  determined by $P_t, P_{t+1}$, and $L'_{t \rightarrow t+1}$, and does not depend on the total liquidity $L_{t \rightarrow t+1}$ that is active for this price movement. Similarly, in Uniswap v2, the transaction fee that an LP earns from a single price movement is totally decided given $P_t, P_{t+1}$, and $L'$, where $L'$ is the LP's liquidity over the entire price interval $[0, \infty)$, regardless of the total liquidity $L$ over the entire price interval. This means that 1 unit of liquidity over $[0, \infty)$ in v2 and 1 unit of liquidity over a price interval $[a, b]$ in v3 would gain the same amount of transaction fees if the same price movement $P_t$ to $P_{t+1}$ happens in v2 and v3 pools (given $P_{c,t}, P_{c,t+1} \in [a, b]$). 

We  now express the total transaction fees earned under liquidity allocation $\vect{\ell}$ and price sequence $\vect{P} = (P_0, \dots P_T)$. In what follows, we let $\mathbb{I}(A)$  denote the indicator function for event $A$.  
\begin{definition}[Trading fees]
Suppose that a provider has liquidity allocation $\vect{\ell}$ over the contract-market price pair sequence given by $\vect{P} = (P_0, \dots P_T)$. Let $F^A(\vect{\ell},\vect{P})$ and $F^B(\vect{\ell},\vect{P})$ denote the accrued amounts of token $A$ and token $B$  to the provider respectively as {\em trade fees}, expressed as follows: 
\begin{align}
    F^B(\vect{\ell},\vect{P}) = \sum_{t=0}^{T-1} F^B_{P_t \rightarrow P_{t+1}}(\vect{\ell}) \cdot \mathbb{I}(P_{c,t+1} > P_{c,t}), \quad F^A(\vect{\ell},\vect{P}) = \sum_{t=0}^{T-1} F^A_{P_t \rightarrow P_{t+1}}(\vect{\ell}) \cdot \mathbb{I}(P_{c,t+1} < P_{c,t}).
\end{align}

In addition,  let $F(\vect{\ell},\vect{P})$ denote the accrued trading fees in terms of token $B$ value at the final market price, which is given by:
\begin{align}
    F(\vect{\ell},\vect{P}) = \cB((F^A(\vect{\ell},\vect{P}),F^B(\vect{\ell},\vect{P})),P_{T,m}) = P_{T,m} \cdot  F^A(\vect{\ell},\vect{P}) + F^B(\vect{\ell},\vect{P}).
\end{align}

\end{definition}

Notice that $F(\vect{\ell},\vect{P})$ is linear in $\vect{\ell}$ for any contract-market price sequence $\vect{P}$.

\subsection{Impermanent Loss}

%
Suppose that an LP borrows the initial capital, which is a bundle of tokens A and  B, to purchase $\ell_i$ units of $[a_i,b_i]$-liquidity in a v3 contract, and needs to repay this  bundle in the future. We assume the initial contract-market price pair is given by $P = (P_c,P_m)$, in which case the capital borrowed is precisely the bundle $\mathcal{V}^{(3)}(\ell_i,a_i,b_i,P_c) = \ell_i \cdot \mathcal{V}^{(3)}(a_i,b_i,P_c)$. Suppose the contract-market price pair changes to $P' = (P_c',P_m') \neq P$ . At this price pair, the capital that was borrowed  has a token $B$ worth, given by $v^{(3)}_h(\ell_i,B_i,P,P') = \mathcal{B}(\mathcal{V}^{(3)}(\ell_i,a_i,b_i,P_c),P_m')$. We call this the token $B$ {\it holding value} for the v3 asset.
On the other hand, at price pair $P'$, the $\ell_i$ units of $[a_i,b_i]$-liquidity have a token $B$ worth of $v^{(3)}_p(\ell_i,B_i,P')=\mathcal{B} (\mathcal{V}^{(3)}(\ell_i, a_i,b_i,P_c'),P_m').$
We call this the token $B$ {\it purchase value} for the v3 asset. 

The discrepancy in the purchase and holding value is the impermanent loss, and represents a potential loss suffered by the LP, as they have to repay the equivalent token $B$ value of the initial borrowed capital. 
%
For v2 contracts, we can obtain similar expressions by letting $v^{(2)}_h(\ell,P,P') = \mathcal{B}(\mathcal{V}^{(2)}(\ell,P_c),P_m')$ be the token $B$ holding value, and  $v^{(2)}_p(\ell,P')=\mathcal{B} (\mathcal{V}^{(2)}(\ell,P_c'),P_m')$ be the token $B$ purchase value of $\ell > 0$ units of v2 liquidity. 
%
\begin{definition}[Impermanent Loss]
\label{def:add-imp-loss}
 Suppose that for a given bucket, $B_i = [a_i,b_i]$, an LP has obtained $\ell_i$ units of $[a_i,b_i]$-liquidity at initial price pair $P = (P_c,P_m)$. As the contract price shifts from $P$ to $P' = (P_c',P_m')\neq P$,  the LP suffers a v3 {\em impermanent loss} of $\mathit{IL}^{(3)}(\ell_i,B_i,P,P')$:
\begin{align}
\mathit{IL}^{(3)}(\ell_i,B_i,P,P') = v^{(3)}_h(\ell_i,B_i,P,P') - v^{(3)}_p(\ell_i,B_i,P').
\end{align}

If instead, an LP has obtained $\ell > 0$ units of v2 liquidity at initial price pair $P$, then as the contract price shifts from $P$ to $P'$,  the LP suffers a v2 {\em impermanent loss} of $\mathit{IL}^{(2)}(\ell,P,P')$:
\begin{align}
\mathit{IL}^{(2)}(\ell,P,P') = v^{(2)}_h(\ell,P,P') - v^{(2)}_p(\ell,P').
\end{align}


For a given liquidity allocation $\vect{\ell}$ obtained at initial price pair $P = (P_c,P_m)$,  let $\mathit{IL}^{(3)}(\vect{\ell},P,P')$ denote the {\em overall impermanent loss} that an LP suffers from the contract-market price movement from $P$ to $P'$:
\begin{align}
\mathit{IL}^{(3)}(\vect{\ell},P,P') = \sum_{i=-m}^n \mathit{IL}^{(3)}(\ell_i,B_i,P,P').
\end{align}
%
%
Since we focus on v3 contracts, we also use the shorthand $\mathit{IL}$ to refer to v3 impermanent loss.
\end{definition}

In practice, when the contract price deviates enough from the market price, this presents an arbitrage opportunity. For this reason, contract prices are expected to  track market prices reasonably closely, with the degree of closeness  depending  on the fee rate, $\gamma$ of the Uniswap contract, as well as gas fees incurred by trades. 
Proposition~\ref{lemma:imp-loss-linear-nonneg} tells us that when the contract price perfectly tracks the market price, then a trader's impermanent loss is always non-negative. 
\begin{restatable}{proposition}{Fifthprop}
\label{lemma:imp-loss-linear-nonneg}
For any choice of initial contract-market prices, $P = (P_c,P_m)$, and end prices $P' = (P_c',P_m')$, the v3 impermanent loss, $IL^{(3)}(\vect{\ell},P,P')$, is linear in $\vect{\ell}$. In addition, if $P_c' = P_m'$, then $IL^{(3)}(\vect{\ell},P,P')$ is non-negative for any $\vect{\ell}$. Similarly, for any choice of $P$ and $P'$, the v2 impermanent loss $IL^{(2)}(\ell,P,P')$ is linear in $\ell$, and when $P_c' = P_m'$, it is non-negative for any choice of $\ell > 0$. 
\end{restatable}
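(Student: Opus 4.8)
The plan is to separate the two claims. Linearity in $\vect{\ell}$ is immediate from the definitions, so the work is in the non‑negativity claim, which I would reduce (using linearity) to a one–dimensional statement: that a certain point of the v3 reserve curve minimizes the ``token‑$B$ worth'' functional $\mathcal{B}(\cdot,P_m')$ over that curve.

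For linearity: by Definition~\ref{def:a-b-value}, $\mathcal{V}^{(3)}(L,a_i,b_i,P_c) = L\cdot\mathcal{V}^{(3)}(a_i,b_i,P_c)$ is linear in $L$, and $\mathcal{B}((x,y),P_m') = P_m' x + y$ is linear in $(x,y)$; hence $v^{(3)}_h(\ell_i,B_i,P,P')$ and $v^{(3)}_p(\ell_i,B_i,P')$ are linear in $\ell_i$, so each summand $\mathit{IL}^{(3)}(\ell_i,B_i,P,P')$ is linear in $\ell_i$ and $\mathit{IL}^{(3)}(\vect{\ell},P,P')=\sum_i \mathit{IL}^{(3)}(\ell_i,B_i,P,P')$ is linear in $\vect{\ell}$. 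The v2 identity $\mathcal{V}^{(2)}(\ell,P_c)=\ell\cdot\mathcal{V}^{(2)}(P_c)$ gives linearity of $\mathit{IL}^{(2)}$ in $\ell$ the same way.

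For non‑negativity, write $P' := P_c' = P_m'$. Since $\mathit{IL}^{(3)}(\cdot,B_i,P,P')$ is linear and $\ell_i\ge 0$, it suffices to show $\mathit{IL}^{(3)}(1,B_i,P,P')\ge 0$ for each $i$, i.e.
\[
\mathcal{B}\bigl(\mathcal{V}^{(3)}(a_i,b_i,P_c),P'\bigr)\;\ge\;\mathcal{B}\bigl(\mathcal{V}^{(3)}(a_i,b_i,P'),P'\bigr).
\]
Both arguments on the two sides lie on the arc $\Gamma_i := \{\mathcal{V}^{(3)}(a_i,b_i,Q) : Q\in(0,\infty)\}$; by Definition~\ref{def:a-b-value} and Proposition~\ref{lemma:v3-virtual-affine-map}, $\Gamma_i$ equals $\mathcal{R}^{(3)}(1,a_i,b_i)$, which is the image under the translation $\phi_{a_i,b_i}^{-1}$ of the sub‑arc of $\mathcal{R}^{(2)}(1)=\{xy=1\}$ carrying contract prices in $[a_i,b_i]$. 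Since $\{xy=1\}$ is convex and translation preserves convexity, $\Gamma_i$ is a convex arc, and $\mathcal{B}(\cdot,P')$ is a linear functional; so it is enough to show that $\mathcal{V}^{(3)}(a_i,b_i,P')$ minimizes $\mathcal{B}(\cdot,P')$ over $\Gamma_i$, as then the right‑hand side above is a lower bound for the value of $\mathcal{B}(\cdot,P')$ at every point of $\Gamma_i$, in particular at $\mathcal{V}^{(3)}(a_i,b_i,P_c)$.

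To prove this minimization claim, note the level sets of $\mathcal{B}(\cdot,P')$ are lines of slope $-P'$. If $P'\in[a_i,b_i]$, the point $\mathcal{V}^{(3)}(a_i,b_i,P')$ is the point of $\Gamma_i$ at which the contract price equals $P'$, hence at which $\Gamma_i$ has tangent slope $-P'$ (contract price is the negative of the slope on $\mathcal{R}^{(2)}(1)$, a property preserved by the translation); by convexity $\Gamma_i$ lies on one side of the supporting line of that slope through this point, giving the minimum. If $P'<a_i$ (resp.\ $P'>b_i$), then by Definition~\ref{def:a-b-value} the point $\mathcal{V}^{(3)}(a_i,b_i,P')$ is the endpoint $(\Delta^x_{b_i,a_i},0)$ (resp.\ $(0,\Delta^y_{a_i,b_i})$) of $\Gamma_i$; parametrising $\Gamma_i$ by its contract price $Q\in[a_i,b_i]$ one gets $d\,\mathcal{B}((x,y),P')=(P'-Q)\,dx$ along $\Gamma_i$, and since $P'-Q$ has constant sign on $[a_i,b_i]$ while $dx$ has the opposite constant sign as one moves into $\Gamma_i$ away from that endpoint, $\mathcal{B}(\cdot,P')$ strictly increases off the endpoint, so the endpoint is the minimizer. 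This yields $\mathit{IL}^{(3)}(\vect{\ell},P,P')\ge 0$. For v2 the same argument applies with $\Gamma=\mathcal{R}^{(2)}(1)$ the full convex curve and no endpoint cases; alternatively it is the one‑line estimate $\mathcal{B}(\mathcal{V}^{(2)}(\ell,P_c),P')-\mathcal{B}(\mathcal{V}^{(2)}(\ell,P'),P') = \ell\bigl(P'/\sqrt{P_c}+\sqrt{P_c}-2\sqrt{P'}\bigr)\ge 0$ by AM--GM.

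The only genuine obstacle is bookkeeping: aligning the three pieces of the definition of $\mathcal{V}^{(3)}$ (price below, inside, above $[a_i,b_i]$) with where $\mathcal{B}(\cdot,P')$ attains its minimum on $\Gamma_i$ — interior tangency exactly when $P'\in[a_i,b_i]$, an endpoint otherwise. An alternative that avoids convexity entirely is to compute $\mathit{IL}^{(3)}(1,B_i,P,P')$ directly in the (at most nine) cases for the positions of $P_c$ and $P_c'=P_m'$ relative to $[a_i,b_i]$, each collapsing to an AM--GM–type inequality; I would prefer the convexity argument since it is uniform and makes transparent why $P_c'=P_m'$ is precisely the hypothesis that matches the functional's level‑set slope to the slope of the reserve curve.
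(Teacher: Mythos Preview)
Your proof is correct but follows a genuinely different route from the paper. The paper works in \emph{price space}: it computes $v_p^{(3)}(1,B_i,P')$ explicitly as a piecewise function of $P'$, verifies it is differentiable and concave in $P'$, and checks that $v_h^{(3)}(1,B_i,P,P')$ (which is affine in $P'$) is exactly the tangent line to $v_p^{(3)}$ at $P'=P$; concavity then gives $v_h\ge v_p$. You instead work in \emph{token-bundle space}: you fix the linear functional $\mathcal{B}(\cdot,P')$ and observe that both the initial bundle $\mathcal{V}^{(3)}(a_i,b_i,P_c)$ and the terminal bundle $\mathcal{V}^{(3)}(a_i,b_i,P')$ lie on the convex arc $\mathcal{R}^{(3)}(1,a_i,b_i)$, and show the terminal bundle minimizes $\mathcal{B}(\cdot,P')$ over that arc via a supporting-line (interior tangency) or endpoint-monotonicity argument. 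Your approach makes the role of the hypothesis $P_c'=P_m'$ especially transparent---it is precisely what aligns the slope of the level sets of $\mathcal{B}(\cdot,P')$ with the slope of the reserve curve at the terminal point---and it avoids the explicit piecewise second-derivative check. The paper's approach, on the other hand, yields an explicit closed form for $v_p^{(3)}$ as a byproduct, which is useful elsewhere. The two arguments are dual in spirit: the paper's concavity of $P'\mapsto v_p^{(3)}(1,B_i,P')$ is essentially the Legendre-type shadow of the convexity of the reserve curve you exploit directly.
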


\subsection{Profit and Loss}
\label{sec:prof-and-loss}

We now  describe the overall profit and loss of an LP with liquidity allocation $\vect{\ell}$ as contract-market prices follow the sequence  $\vect{P} = (P_0,...,P_T)$, and where the LP borrows the capital required to create their allocation $\vect{\ell}$. The LP accrues fees over the price sequence, and at the end of the sequence, they remove their allocation $\vect{\ell}$ from the contract, thereby receiving an overall bundle of $A$ and $B$ tokens that is a function of the end contract price, $P_{c,T}$. Finally, the LP must repay the capital used to initially create the position, hence the overall profit and loss of the LP's allocation simply consists of their accrued fees minus their impermanent loss, where quantities are measured in terms of token $B$ using the final market price as reference of exchange between $A$ and $B$ tokens.
%
\begin{definition}[Profit and Loss]
\label{def:PnL-sequence}
We denote $\mathit{PnL}(\vect{\ell},\vect{P})$ as the overall {\em profit and loss} of an LP with liquidity allocation $\vect{\ell}$ over  contract-market price  sequence  $\vect{P} = (P_0, \dots, P_T)$, i.e., 
\begin{align}
\mathit{PnL}(\vect{\ell},\vect{P}) = F(\vect{\ell},\vect{P}) - \mathit{IL}(\vect{\ell},P_0,P_T).
\end{align}
\end{definition}

Since $F$ and $\mathit{IL}$ are both linear in $\vect{\ell}$, so is $\mathit{PnL}$ for any price sequence $\vect{P}$.  

\section{Optimal Liquidity Provision}
\label{sec:riskneutral}
\setcounter{equation}{0}


In this section, we assume that an LP has a probabilistic model for  contract-market prices sequences, and  formulate the problem of finding an optimal liquidity provision  over a finite time horizon as an optimization problem. For a risk neutral LP this is a linear program and for a risk averse LP this is a convex optimization problem. 
We use the notation $\cP$ to denote a distribution over contract-market price sequences and call an instance of $\cP$ the {\it belief profile of an LP}, prior to opening a v3 liquidity allocation. 
\if 0
We study distributions $\cP$ that do not depend on the aggregate allocation of liquidity in a v3 contract, and this leads to a decision-theoretic analysis for computing the optimal liquidity allocation of a single LP. This is reasonable when considering a  smaller  LP whose decisions on how to allocate liquidity do not have a large effect on a contract's aggregate liquidity allocation.
\fi

\subsection{Liquidity-independent LP Beliefs}

In general,  liquidity provision in Uniswap v3 is a  game-theoretic problem, where the  agents include the LPs as well as traders. An important manifestation of the strategic interdependence of agents in this game is the distribution $\cP$ on contract-market prices. As a concrete example, consider multiple LPs providing liquidity over buckets. If the aggregate liquidity allocation places large amounts of liquidity around the current contract-market prices, then we can expect, at least in the short term, that $\cP$ will follow contract-market prices closer to the current price, as higher liquidity results in less price slippage. At the same time, less slippage makes the pool more attractive to traders, and can potentially increase the volume of trade, and hence further change $\cP$.

In our model, we make the simplifying assumption that $\cP$ is independent of the  liquidity provided by LPs, and this is what we mean by a {\it Liquidity-independent} belief profile. Given this, the question of how to optimally allocate liquidity is  simplified and  becomes decision-theoretic, such that an LP's profit and loss only depends on the liquidity allocation they provide to the contract and not on the investment by other LPs. Though the model fails to capture the coupling between investment and price dynamics, which further entails game-theoretic aspects of liquidity provision, we view  this as an important first step in understanding key design considerations in Uniswap v3 contracts. 

That  a Liquidity-independent LP belief profile  reduces the question of how to optimally compute LP positions to a decision-theoretic problem arises  from our analysis in Section~\ref{sec:LP-PnL}. Since LPs earn fees proportional to the liquidity they provide over contract prices traded, at face value it may seem that their profit and loss depends on the liquidity allocations of other LPs (as this affects the proportion of liquidity an LP owns at given prices). However, 
for a liquidity-independent price movement, the rules of the CFMM mean that the volume traded is proportional to the liquidity supporting that price movement. As a result, as an LP provides more liquidity, not only are they increasing their proportional share of  fees, but also the total volume traded for a given price sequence. The combination of these two dependencies balance in such a way that earned fees are linear in the liquidity allocation of an LP  (the same is true for the impermanent loss).

\subsection{Maximizing Expected Profit and Loss: Risk Neutral LP}
\label{sec:maximizing-PnL}

We assume that an LP borrows the capital required to create a liquidity allocation  for a fixed time horizon. Once the time horizon ends, the LP removes their liquidity from the contract, and uses this capital alongside accrued fees to pay the amount owed for creating the position. 
For a risk-neutral LP, the relevant quantity to optimize is the {\em expected profit and loss}  over  a distribution of  price sequences corresponding to  belief profile $\cP$,
\begin{align}
\mathit{PnL}_{\cP}(\vect{\ell}) = \mathbb{E}_{\vect{P} \sim \cP}(\mathit{PnL}(\vect{\ell},\vect{P})).
\end{align}

Since  $\mathit{PnL}(\vect{\ell},\vect{P})$ is linear for any given choice of $\vect{P}$, it follows that $\mathit{PnL}_{\cP}(\vect{\ell})$ is also linear in $\vect{\ell}$. We consider an LP that has an initial budget of $D$ 
units of token $B$ and wishes to create an optimal liquidity allocation. 
For each bucket $B_i \in \vect{\mu}$, we  let $w_i = \mathcal{B}(\cV^{(3)}(1,a_i,b_i,P_{c,0}),P_{m,0})$
denote the token $B$ worth of 1 unit of liquidity at the intial parity contract-market price, $P_0 = (P_{c,0},P_{m,0}) = (1,1)$.  
Summing over buckets, the budget constraint is $\sum_i \ell_i w_i \leq D$. The  optimization problem is:
\begin{align}
\label{eq:optimization-PnL-risk-neutral}
\max_{\vect{\ell}} \quad & \mathit{PnL}_{\cP}(\vect{\ell}) \nonumber\\
\textrm{s.t.} \quad & \sum_i \ell_i w_i \leq D \\
  &\ell_i \geq 0, \quad \mbox{for all $i$} \nonumber   
\end{align}

The linearity of the objective function allows us to write  $\mathit{PnL}_{\cP}(\vect{\ell}) = \sum_{i=-m}^n \alpha_i \ell_i$, where $\alpha_i$ is the expected PnL of a single unit of liquidity in bucket $B_i \in \vect{\mu}$. 
Based on this, the optimal bucket has index $i^* = \argmax_{i} [\alpha_i/{w_i}]$, and the optimal PnL is given by $D \alpha_{i^*}/{w_{i^*}}$. This is  linear in $D$, and so we can assume without loss of generality that $D = 1$. The {\it optimal normalized PnL},  $\OPT(\cP,\vect{\mu})$, is 
\begin{align}
\OPT(\cP,\vect{\mu}) = \max_{i \in\{-m, \dots, n\}} \left[\frac{\alpha_i}{w_i}\right].
\end{align}

It is the Liquidity-independent nature of  $\cP$ that  leads to linearity in $\vect{\ell}$ in the objective $PnL_\cP(\vect{\ell})$, and in turn to this linear optimization problem.

\subsection{Maximizing Expected Profit and Loss: Risk-Averse LP}
\label{sec:riskaverse}
\setcounter{equation}{0}

Although an LP  maximizes expected PnL to allocate  liquidity to the single bucket with the highest normalized PnL, such an allocation is inherently  risky: if the price ever goes outside of this  bucket over the LPs time horizon then  not only does the LP miss out on fees, but the LP also suffers an 
impermanent loss.
Indeed, empirical evidence shows that LPs tend to use liquidity allocations that make use of multiple buckets,  spreading their liquidity  over a large range of prices. 

For these reasons, we introduce the notion of risk-aversion, and use the exponential  utility function (\textit{constant absolute risk aversion})~\cite{arrow1965aspects, pratt1978risk}.  
%
\begin{definition}
\label{def:exponential_utility}
Let $u_a(x)$, for parameter $a\in \mathbb{R}$, denote the {\em exponential utility function}, with 

\begin{align}
    u_a(x) &= 
    \begin{cases}
    \left(1-e^{-ax}\right) / a & \mbox{if $a\neq 0$, and} \\
    x & \mbox{otherwise.}
    \end{cases}
\end{align}

If $a < 0$ the LP is risk-seeking,  if $a > 0$ the provider is risk-averse, and if $a=0$ the LP is risk-neutral. 
\end{definition}


We can now express the utility that an LP obtains for a given price sequence as follows.
\begin{definition}[Risk-averse PnL]
Suppose that an LP has a liquidity allocation  $\vect{\ell}$ over a bucket instance $\vect{\mu}$. For a given contract-market price sequence, $\vect{P} = (P_0,\dots,P_T)$, we denote the {\em risk-averse profit and loss} of the LP by $\mathit{PnL}^a(\vect{\ell},\vect{P})$, for risk-profile $a \geq 0$, given by:
\begin{align}
\mathit{PnL}^a(\vect{\ell},\vect{P}) = u_a(\mathit{PnL}(\vect{\ell},\vect{P})).
\end{align}

If $\vect{P}$ is  generated according to an LP belief profile $\cP$, then we let $\mathit{PnL}^a_{\cP}$ denote the expected risk-averse profit and loss of the LP:
\begin{align}
\mathit{PnL}^a_{\cP}(\vect{\ell}) = \mathbb{E}_{\vect{P}\sim \cP}(\mathit{PnL}^a(\vect{\ell},\vect{P}))
\end{align}

\end{definition}

\begin{lemma}
\label{lemma:convex-risk-averse}
$\mathit{PnL}^a_{\cP}(\vect{\ell})$ is concave in $\vect{\ell}$ for any choice of $a$ and $\cP$.
\end{lemma}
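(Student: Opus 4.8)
The plan is to show concavity by composing two facts established earlier in the excerpt: that $\mathit{PnL}(\vect{\ell},\vect{P})$ is linear (hence affine) in $\vect{\ell}$ for any fixed price sequence $\vect{P}$, and that the exponential utility function $u_a$ is concave. First I would fix an arbitrary price sequence $\vect{P}$ and recall from the discussion following Definition~\ref{def:PnL-sequence} that $\vect{\ell} \mapsto \mathit{PnL}(\vect{\ell},\vect{P})$ is linear, so $\vect{\ell} \mapsto \mathit{PnL}^a(\vect{\ell},\vect{P}) = u_a(\mathit{PnL}(\vect{\ell},\vect{P}))$ is the composition of $u_a$ with an affine map of $\vect{\ell}$.

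Next I would verify that $u_a$ is concave on $\mathbb{R}$ for every $a \geq 0$. For $a = 0$ this is the identity, which is (weakly) concave. For $a > 0$, the map $x \mapsto (1 - e^{-ax})/a$ has second derivative $-a e^{-ax} < 0$, so it is strictly concave. A concave function composed with an affine function is concave, so $\vect{\ell} \mapsto \mathit{PnL}^a(\vect{\ell},\vect{P})$ is concave for each fixed $\vect{P}$.

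Finally, I would take the expectation over $\vect{P} \sim \cP$. Since $\mathit{PnL}^a_{\cP}(\vect{\ell}) = \mathbb{E}_{\vect{P}\sim\cP}[\mathit{PnL}^a(\vect{\ell},\vect{P})]$ is a convex combination (an average, possibly a continuous mixture) of the concave functions $\vect{\ell}\mapsto\mathit{PnL}^a(\vect{\ell},\vect{P})$, and a nonnegative-weighted sum — or more generally an integral — of concave functions is concave, the expected risk-averse PnL is concave in $\vect{\ell}$. Concretely, for $\vect{\ell}_1,\vect{\ell}_2$ and $\lambda\in[0,1]$ one applies concavity pointwise inside the expectation and uses linearity/monotonicity of $\mathbb{E}$. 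This holds for any $a$ (the risk-seeking case $a < 0$ is not claimed, and indeed $u_a$ would then be convex); the statement as written only asserts it "for any choice of $a$," which in context means any $a\geq 0$ consistent with Definition~\ref{def:exponential_utility}'s use in the risk-averse PnL.

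There is no substantial obstacle here — the only thing to be slightly careful about is the interchange of expectation and the concavity inequality, which is immediate since the inequality $\mathit{PnL}^a(\lambda\vect{\ell}_1 + (1-\lambda)\vect{\ell}_2,\vect{P}) \geq \lambda\,\mathit{PnL}^a(\vect{\ell}_1,\vect{P}) + (1-\lambda)\,\mathit{PnL}^a(\vect{\ell}_2,\vect{P})$ holds for every realization $\vect{P}$ and expectation preserves inequalities. One should also note integrability is not an issue for the argument since we only need the inequality, not finiteness per se (and in the experiments $\cP$ has bounded support).
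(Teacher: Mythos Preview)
Your proof is correct and follows essentially the same approach as the paper: concavity of $u_a$, linearity of $\mathit{PnL}(\vect{\ell},\vect{P})$ in $\vect{\ell}$, composition with an affine map, and preservation of concavity under expectation. You are in fact more careful than the paper in flagging that $u_a$ is concave only for $a\ge 0$ (for $a<0$ the sign of $1/a$ flips the convexity of $e^{-ax}$), which is the intended regime in context.
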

\begin{proof}
Function $u_a(x)$ is smooth and concave for any choice of $a$, as $e^{-ax}$ within the expression is convex. We have previously seen that $\mathit{PnL}(\vect{\ell},\vect{P})$ is linear in $\ell$ for any choice of $\vect{P}$, hence $\mathit{PnL}^a(\vect{\ell},\vect{P})$ is  concave in $\vect{\ell}$ for any choice of $\vect{P}$. Finally, concavity is preserved over  expectation.  
\end{proof}

We can now define the optimization problem for a risk-averse LP seeking to optimally create a liquidity allocation, subject to budget-constraints:
\begin{align}
\label{eq:optimization-PnL-risk-averse}
\min_{\vect{\ell}} \quad & -\mathit{PnL}^a_{\cP}(\vect{\ell}) \nonumber\\
\textrm{s.t.} \quad & \sum_i \ell_i w_i \leq D\\
  &\ell_i \geq 0, \quad \mbox{for all $i$} \nonumber    
\end{align}

As before, it is WLOG to focus on the scenario where $D = 1$, and in such a setting, we let $\OPT_a(\cP,\vect{\mu})$ denote the  value of the optimal  solution, with $\vect{\ell}^*$ to denote the corresponding liquidity allocation. As a consequence of Lemma~\ref{lemma:convex-risk-averse}, the optimization problem is convex. In the next section we exploit convexity to provide efficient methods for solving  $\OPT_a(\cP,\vect{\mu})$.

\subsubsection{Computational Methods}
\label{sec:risk-averse-comp-methods}

The main difficulty in solving the optimization problem~\eqref{eq:optimization-PnL-risk-averse} is that the objective function is an expectation with respect to $\cP$. In this regard, we approximate $\mathit{PnL}^a_{\cP}(\vect{\ell})$ as the average $\mathit{PnL}$ over a sample of price paths from  $\cP$.
We take $N$ i.i.d. sample price paths, $\vect{P}_1, \dots, \vect{P}_N \sim \cP$, and define the following objective:
\begin{align}
\mathit{PnL}^a (\vect{\ell} \mid \vect{P}_1,\dots,\vect{P}_N)
=
\frac{1}{N}\sum_{i=1}^N \mathit{PnL}^a (\vect{\ell},\vect{P}_i).
\end{align}

Taking expectations, we get:
\begin{align}
\mathbb{E}_{\vect{P}_1,\dots,\vect{P}_N \sim \cP} \left[ \mathit{PnL}^a (\vect{\ell} \mid \vect{P}_1,\dots,\vect{P}_N)\right] = 
\mathit{PnL}^a_{\cP}(\vect{\ell}).
\end{align}

We can define a corresponding optimization problem for a risk-averse LP seeking to approximately optimally create a liquidity allocation, subject to budget-constraints:
\begin{align}
\label{eq:optimization-PnL-risk-averse-approx}
\min_{\vect{\ell}} \quad & -\mathit{PnL}^a (\vect{\ell} \mid \vect{P}_1,\dots,\vect{P}_N) \nonumber\\
\textrm{s.t.} \quad & \sum_i \ell_i w_i \leq D\\
  &\ell_i \geq 0, \quad \mbox{for all $i$} \nonumber    
\end{align}

This is a convex optimization problem, and we can evaluate gradients for the objective function via standard methods. As before, we focus on the case where $D=1$, and in this case we use {\em projected gradient descent} to solve~\eqref{eq:optimization-PnL-risk-averse-approx}.

\subsection{Optimal Risk-averse PnL as a function of Bucket Characteristics}
\label{subsec:opt-pnl-buckets-characteristics}

$\OPT_a(\cP,\vect{\mu})$ is intricately tied to the characteristics of the buckets available in a v3 contract.
In practice, the Uniswap v3 contract uses buckets $\vect{\mu}$ with endpoints that correspond to multiplicative increases and decreases of the parity price, $P = 1$. The contract maintains a fixed set of price ticks $\{t(-q),\dots t(0),\dots t(q)\}$, where $t(i) = 1.0001^i$ and $q = 2^{23}$. Each contract also has a positive integer variable  {\tt tickspacking}, which we denote by $\Delta$, and dictates which tick values can be used as bucket endpoints. A tick can only be a bucket endpoint if $i \equiv 0 \mod \Delta$. We can express the  bucket structure by letting the $i$-th bucket, $B_i$, represent an interval $[a_i,b_i]$ such that $a_i = 1.0001^{\Delta i}$ and $b_i = 1.0001^{\Delta(i+1)}$. 
\begin{proposition}
\label{prop:small-bucket-more-pnl}
Suppose that $\vect{\mu}(\Delta)$ is the bucket list that results from setting {\tt tickspacing} to $\Delta \geq 1$. Furthermore, let $\Delta' = q \Delta$ for any integer $q > 2$. For any choice of LP belief profile $\cP$, and risk-aversion parameter $a$, we have:
\begin{align}
\OPT_a(\cP,\vect{\mu}(\Delta')) \leq  \OPT_a(\cP,\vect{\mu}(\Delta)).
\end{align}
\end{proposition}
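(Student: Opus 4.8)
The plan is to exhibit an explicit \emph{embedding} of the feasible allocations for the coarse contract $\vect{\mu}(\Delta')$ into the feasible allocations for the fine contract $\vect{\mu}(\Delta)$ that preserves the risk-averse expected profit and loss. Since $\OPT_a(\cP,\vect{\mu})$ is the optimal value of the liquidity-provision problem, i.e. the maximum of $\mathit{PnL}^a_\cP(\vect{\ell})$ over $\vect{\ell}$ with $\ell_i\ge 0$ and $\sum_i \ell_i w_i \le 1$, once such an embedding is in place the desired inequality is immediate: every value of $\mathit{PnL}^a_\cP$ attainable on the coarse contract is attainable on the fine one, so the coarse optimum cannot exceed the fine optimum.

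First I would record the combinatorial nesting of the two bucket structures. Since $\Delta\mid\Delta'$, every endpoint of a bucket of $\vect{\mu}(\Delta')$ (a power $1.0001^{\Delta' k}=1.0001^{\Delta(qk)}$) is also an endpoint of a bucket of $\vect{\mu}(\Delta)$; as both bucket lists cover $(0,\infty)$, each bucket $B'_j=[a'_j,b'_j]$ of $\vect{\mu}(\Delta')$ (including the two semi-infinite end buckets) is a concatenation of $q$ consecutive buckets of $\vect{\mu}(\Delta)$. Applying Proposition~\ref{prop:v3subintervals} repeatedly at the internal cut points of $B'_j$ then yields, for every $L$ and every contract price $P$,
\[
\mathcal{V}^{(3)}(L,a'_j,b'_j,P)=\sum_{i:\,B_i\subseteq B'_j}\mathcal{V}^{(3)}(L,a_i,b_i,P).
\]

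Next I would define the embedding and verify it preserves both the budget and the PnL. Given a coarse allocation $\vect{\ell}'=(\ell'_j)_j$, let $\vect{\ell}$ be the fine allocation with $\ell_i:=\ell'_j$ whenever $B_i\subseteq B'_j$. Evaluating the display at $L=1$, $P=1$ and using linearity of $\cB$ gives $w'_j=\sum_{i:\,B_i\subseteq B'_j} w_i$, whence $\sum_i\ell_i w_i=\sum_j\ell'_j w'_j\le 1$, so $\vect{\ell}$ is feasible for $\vect{\mu}(\Delta)$. For the PnL I would prove the pathwise identity $\mathit{PnL}(\vect{\ell},\vect{P})=\mathit{PnL}(\vect{\ell}',\vect{P})$ for every price sequence $\vect{P}$, after which applying $u_a$ and taking $\mathbb{E}_{\vect{P}\sim\cP}$ gives $\mathit{PnL}^a_\cP(\vect{\ell})=\mathit{PnL}^a_\cP(\vect{\ell}')$. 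The impermanent-loss parts match by the display plus linearity of $\cB$:
\[
\mathit{IL}(\vect{\ell},P_0,P_T)=\sum_j\sum_{i:\,B_i\subseteq B'_j}\mathit{IL}^{(3)}(\ell'_j,B_i,P_0,P_T)=\sum_j\mathit{IL}^{(3)}(\ell'_j,B'_j,P_0,P_T).
\]
The fee parts match because a contract price movement confined to a fine bucket $B_i$ is also confined to the coarse bucket $B'_j\supseteq B_i$, the LP's active liquidity for that movement is $\ell_i=\ell'_j$ under both allocations, and by Section~\ref{subsec:fee_over_price_seq} the fee contributed by such a movement, $\ell'_j\tfrac{\gamma}{1-\gamma}\Delta^{y}_{P_{c,t},P_{c,t+1}}$ (resp.\ with $\Delta^{x}$), depends on the allocation only through this active amount; summing over the sequence gives $F(\vect{\ell},\vect{P})=F(\vect{\ell}',\vect{P})$, and the final market price $P_{T,m}$ is a property of the path. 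Conclusion: $\vect{\ell}'\mapsto\vect{\ell}$ maps the coarse feasible region into the fine feasible region and preserves the objective, so $\OPT_a(\cP,\vect{\mu}(\Delta'))\le\OPT_a(\cP,\vect{\mu}(\Delta))$.

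The step that needs care — and which I expect to be the main obstacle to a fully rigorous write-up — is the fee argument, because $F(\vect{\ell},\vect{P})$ is defined under the convention that each movement $P_t\to P_{t+1}$ lies within one bucket, and ``one bucket'' means something coarser for $\vect{\mu}(\Delta')$ than for $\vect{\mu}(\Delta)$. I would first show that refining a price sequence by inserting extra bucket endpoints leaves $\mathit{PnL}(\vect{\ell}',\vect{P})$ unchanged for a fixed allocation: the quantities $\Delta^{x}_{P_{c,t},P_{c,t+1}}$ and $\Delta^{y}_{P_{c,t},P_{c,t+1}}$ telescope along a monotone refinement of any movement that stays inside a (coarse) bucket, and the impermanent loss sees only $P_0$ and $P_T$. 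This makes the PnL assigned to a path consistent across the two granularities, so the step-by-step matching above can be performed on the common (fine) refinement. The remaining bookkeeping — the limiting form of Proposition~\ref{prop:v3subintervals} for the semi-infinite end buckets, and checking that a refinement of an upward movement is a chain of upward sub-movements so the indicators in $F^A$ and $F^B$ split correctly — is routine.
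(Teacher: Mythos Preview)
Your proposal is correct and follows essentially the same approach as the paper: embed any feasible coarse allocation into the fine bucket list by assigning the same liquidity to each of the $q$ sub-buckets, invoke Proposition~\ref{prop:v3subintervals} to match bundle values (hence budget and impermanent loss), and appeal to Section~\ref{subsec:fee_over_price_seq} to match fees. Your treatment is in fact more careful than the paper's, which leaves the telescoping of the fee terms across refined price sequences implicit; your identification of that subtlety is apt but, as you note, routine to resolve.
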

\begin{proof}
These bucket designs are nested, with $\vect{\mu}(\Delta')$ a coarsened version of $\vect{\mu}(\Delta)$, which means that any liquidity allocation over $\vect{\mu}(\Delta')$ can also be represented by a allocation over $\vect{\mu}(\Delta)$. Suppose an allocation over $\vect{\mu}(\Delta')$ consists of allocating $\ell_i$ units of liquidity to $B_i$. An LP who creates an allocation with $\ell_i$ units of liquidity in each of the $q$ buckets in $\vect{\mu}(\Delta)$ that correspond to this bucket has an overall bundle value of the allocation 
that is equal to $\ell_i$ units of liquidity in $B_i$ (in a similar vein to Proposition~\ref{prop:v3subintervals}). Furthermore, as per Section~\ref{subsec:fee_over_price_seq}, this  results in the same fees for the LP. 
\end{proof}

\subsection{Trader Gas Fees}
\label{sec:trader-gas-fees}

Finer bucket lists allow LPs to create more complicated liquidity allocations that potentially make use of more active endpoints, and as per Proposition~\ref{prop:small-bucket-more-pnl}, this  leads to weakly improving expected utility for LPs in v3 contracts. However, this kind of refinement also results in increased gas fees for traders, since a  trade that pushes the contract price outside of the current active interval of the v3 contract requires that a new active interval is computed, alongside the set of active LPs and total active liquidity. 
This computational overhead is passed on to traders through gas fees.

For each bucket, $B_i \in \vect{\mu} = \{B_{-m}, \dots, B_n\}$ such that $i \neq -m$, we define $c_i(\cP,\vect{\mu})$, which is the expected number of crossings of its left endpoint, $a_i$, over the course of the  stochastic process associated with belief profile $\cP$.  What matters for gas fees  is the number of  active bucket endpoints that are crossed: for there to be computational overhead, there needs to be an LP with an allocation that uses an endpoint. We assume that each bucket  endpoint is active, in this sense, in what we call  the {\em bucket coverage assumption}. This is justified empirically:
%
  we find  that for  buckets close to the contract price there  is almost always  at least  one LP  with an allocation ending in that bucket's endpoint (see Appendix \ref{appendix:empirical-liquidity-dist}.)
\begin{definition}[Trader Gas Cost]
Let us consider a liquidity provision instance dictated by LP belief profile $\cP$ and buckets given by $\vect{\mu}$. We let $\GAS(\cP,\vect{\mu})$ denote the {\em expected gas fees} incurred by all traders over the time horizon $T$. Under the bucket coverage assumption, this is  given by:
\begin{align}
\GAS(\cP,\vect{\mu}) = \sum_{i=-m+1}^n c_i(\cP,\vect{\mu}).
\end{align}
\end{definition}

Since we are interested in the relative gas cost between distinct bucketing schemes, 
we normalize the gas fees per crossing of an active endpoint to be 1. 
In practice this quantity depends on the price of ETH.
%
As in the computation of $\OPT_a(\cP,\vect{\mu})$, we can take a sample-based approach to compute $\GAS(\cP,\vect{\mu})$, whereby we sample $N$ sequences of contract-market price pairs   from $\cP$ and compute the average number of bucket endpoint crossings as an approximation to $\GAS(\cP,\vect{\mu})$.

\subsection{The Uniswap v3 Contract Design Problem}

We are interested in how choices for bucket design $\vect{\mu}$ impact both $\OPT_a$ and $\GAS$, for fixed LP belief profiles $\cP$ and  a risk-aversion parameter $a \geq 0$. Both objectives are important:   higher PnL may attract liquidity providers to reduce slippage in trades, while increased gas costs can dissuade traders from participating, thereby reducing fees. We  focus on parametrized families of bucket sets.
\begin{definition}[Exponential Bucket Scheme]
Suppose that $\theta > 1$ is a real number and $\Delta,m,n \geq 1$ are integers. In an {\em $(\theta,\Delta,m,n)$-exponential bucket scheme}, buckets in $\vect{\mu}(\theta,\Delta,m,n)$ are parametrized as follows:
\begin{itemize}
    \item $B_{-m} = (0,b_{-m}]$, with $b_{-m} = \theta^{\Delta (-m+1)} $, 
    \item $B_n = [a_n,\infty)$, with $a_n = \theta^{\Delta n}$, \mbox{and}
    \item for $-m < i < n$, $B_i = [a_i,b_i] = [\theta^{\Delta i}, \theta^{\Delta (i+1)}]$.
\end{itemize}
\end{definition}

For convenience, we also let $\vect{\mu}(\theta,\Delta)$ denote an exponential bucket scheme for  $m,n\rightarrow \infty$, such that $B_i$ is a finite bucket for all $i \in \mathbb{Z}$. Exponential bucket schemes are a natural extension of the  bucketing scheme used by v3 contracts. In particular, a v3 contract with {\tt tickspacing} = $\Delta$ is  equivalent to a $(1.0001,\Delta)$-exponential bucketing scheme.
We can  express the Pareto frontier of such bucketing schemes, considering the dual 
objectives $\OPT$ and $\GAS$.  
%
\begin{definition}[The Uniswap v3 OPT-GAS Pareto Frontier]
Suppose that we fix an LP optimization profile given by $\cP$, risk-aversion parameter  $a \geq 0$, and a family of bucket sets parametrized by $\vect{\zeta}$ and denoted by $\vect{\mu}(\vect{\zeta})$. Furthermore, let $\vect{\zeta}' \neq \vect{\zeta}$ be such that:
\begin{align}
    \GAS(\cP,\vect{\mu}(\vect{\zeta}')) \leq \GAS(\cP,\vect{\mu}(\vect{\zeta})), \qquad \mbox{and}\quad \OPT_a(\cP,\vect{\mu}(\vect{\zeta}')) \geq \OPT_a(\cP,\vect{\mu}(\vect{\zeta})), 
\end{align}
where one of the inequalities is strict. We say that bucketing scheme $\vect{\mu}(\vect{\zeta}')$ {\it Pareto dominates} $\vect{\mu}(\vect{\zeta})$ for  LP belief profile $\cP$ and risk-aversion parameter $a \geq 0$. We denote this relationship over the parameter space of $\vect{\mu}$ by $\vect{\zeta'} \succ_{\cP} \vect{\zeta}$. In addition, we let $Pareto(\cP,\vect{\mu})$ denote the set of all parameters, $\vect{\zeta}$ which are not Pareto dominated. 
\end{definition}

\section{Simulation Results}
\label{sec:computational-results}

In this section, we take a computational approach to investigate the Uniswap v3 contract design problem\footnote{Our code of the computational method: \href{https://github.com/Evensgn/uniswap-modeling}{\url{https://github.com/Evensgn/uniswap-modeling}}}. For our results, we utilize a specific family of LP belief profiles that can be seen as an extension of the model from Capponi and Jia~\cite{capponi2021adoption}, repeated 
here over a longer time horizon. 

\subsection{Modeling Liquidity-independent Contract-market Price Changes}
\label{LP-belief-model-assumptions}

\begin{figure}[t]
    \centering
         \includegraphics[width=0.9\textwidth]{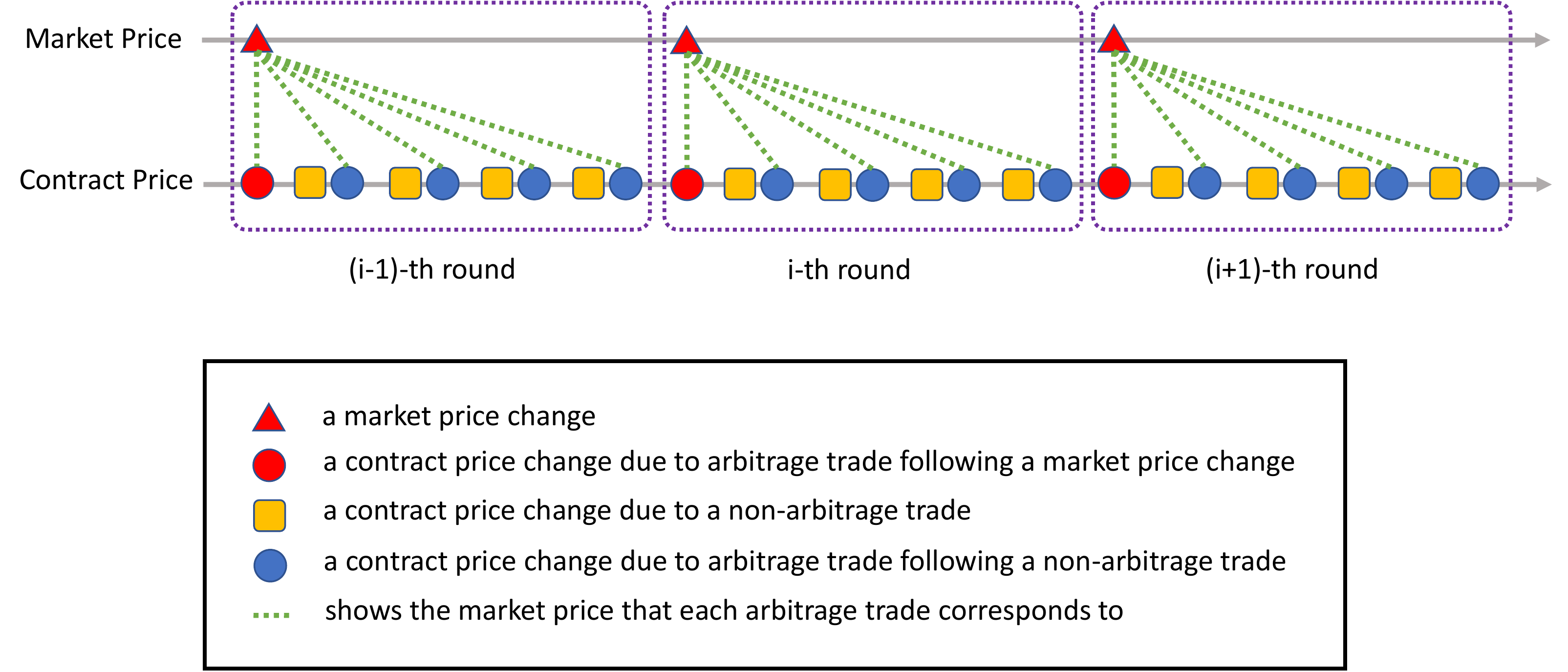}
    \caption{Market price changes triggering contract price changes, illustrated for three rounds, where each round contains four non-arbitrage and four arbitrage trades following a market price change.
    \label{fig:price-model}}
\end{figure}


Our family of Liquidity-independent belief profiles  generates contract-market prices via a repeated, two-stage process over $T$ rounds. We model a stochastic market price, with this inducing a stochastic contract price. In each round, in a  first stage we sample the new market price according to an  exogenous stochastic process, and in a second stage we model non-arbitrage traders who either increase or decrease contract price in a stochastic manner. At any point in either stage, if there is a large enough discrepancy between the contract and market price (as a function of the transaction fee rate, $\gamma$), then  arbitrageurs trade with the contract to bring the contract price close to market price.\footnote{Even though  arbitrageurs need to pay a gas fee when making an arbitrage trade, we ignore this because the gas fee does not grow with the trade size and becomes negligible for a large  trade.} Figure~\ref{fig:price-model} illustrates the way that we model changes in contract price in response to  market price changes, non-arbitrage trades, and arbitrage trades.

\subsubsection{Market Prices}
\label{sec:market-prices}

Market prices take values in set $\cZ(r,s,\omega) = \{Z_{-r} , \dots , Z_0 , \dots Z_s\}$, which is parametrized by $r,s \in \mathbb{N}$ and $\omega  > 1$, such that $Z_i = \omega^i$. As mentioned above, the beginning of a round marks an exogenous stochastic change in market price and we let $z_t \in \cZ(r,s,\omega)$ denote the market price in the $t$-th round. We model the sequence of market prices $\vect{z} = (z_0,\dots, z_T)$ with $z_0 = 1$ as approximating a {\em geometric random walk} (GRW), as is typical for modeling time series data in financial markets~\cite{LeRoy1992}. 

For this,  we assume that conditional on the price being $z_t = Z_i$ at time step $t$, the price will move to $z_{t+1} = Z_j$ where $j-i$ is distributed according to a binomial distribution, suitably truncated under the constraint that $Z_j \in \cZ$. In particular, we assume an integral {\it bandwidth} parameter of $W \geq 1$. 
Let $Y' \sim Binom(2W,p)$, such that $Y = Y'-W$ encodes the maximal change in index from the current price $Z_i$. This means that the price transitions as follows:
\begin{equation}
z_{t+1} = \left\{
        \begin{array}{ll}
            Z_{-r}& \quad \mbox{if $i+Y < -r$, or} \\
            Z_{s}& \quad \mbox{if $i+Y > s$, or} \\
            Z_{i+Y}& \quad \mbox{otherwise.} 
        \end{array}
    \right.
\end{equation}

By imposing a constant price ratio of $\omega$ in $\cZ$, and having price indices transition according to a binomial distribution, the stochastic price process approximates a {\it geometric binomial random walk} (GBRW). An exception is when the price is close to the boundaries of $\cZ$, where it deviates from GBRW due to truncation. We also require that the stochastic process is approximately on-average stable ($\mathbb{E}[\frac{z_{t+1}}{z_t}] = 1$), which we achieve by choosing a suitable value of $p \in [0,1]$ to govern the draw $Y' \sim Binom(2W,p)$, and subsequently, $Y = Y'-W$. 

\begin{proposition}[Informal]
\label{prop:binomial-martingle}
Suppose that $m = \log (\omega)$. If we let $p = \frac{m+2-\sqrt{m^2 + 4}}{2m}$, then for large bandwidth values, it follows that $\mathbb{E}[\frac{z_{t+1}}{z_t}] \approx 1$. Consequently, the price ratio stochastic process is approximately on-average stable. 
\end{proposition}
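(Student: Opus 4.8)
The plan is to compute $\mathbb{E}[z_{t+1}/z_t]$ in closed form from the one-step transition rule, ignoring the boundary truncation (harmless for a step that starts from a price well inside $\cZ(r,s,\omega)$), and then to choose $p$ so that this expectation equals $1$ up to an error that is negligible once the tick ratio is small relative to the bandwidth. Away from the boundaries, $z_{t+1}/z_t = \omega^{Y} = e^{mY}$ with $Y = Y'-W$ and $Y' \sim \mathrm{Binom}(2W,p)$, so the binomial moment generating function gives
\begin{align}
\mathbb{E}\!\left[\frac{z_{t+1}}{z_t}\right] \;=\; e^{-mW}\,\mathbb{E}\bigl[e^{mY'}\bigr] \;=\; e^{-mW}\bigl(1-p+pe^{m}\bigr)^{2W} \;=\; \left(\frac{(1-p+pe^{m})^{2}}{e^{m}}\right)^{\!W}.
\end{align}
Thus the ratio process is \emph{exactly} on-average stable iff $(1-p+pe^{m})^{2}=e^{m}$, which would give $p = 1/(1+\sqrt{\omega})$; the proposition instead records the value of $p$ that makes the process on-average stable to second order in $m$, which is the natural choice when the discrete walk is meant to approximate a geometric Brownian motion.

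Concretely, I would expand $\log\mathbb{E}[e^{mY}] = W\bigl(2\log(1-p+pe^{m})-m\bigr)$ to second order in $m$, or equivalently use the cumulant form $\log\mathbb{E}[e^{mY}] = m\,\mathbb{E}[Y] + \tfrac{m^{2}}{2}\mathrm{Var}(Y) + O(Wm^{3})$ together with $\mathbb{E}[Y]=W(2p-1)$ and $\mathrm{Var}(Y)=2Wp(1-p)$, to obtain
\begin{align}
\log\mathbb{E}\!\left[\tfrac{z_{t+1}}{z_t}\right] \;=\; W\bigl[m(2p-1) + m^{2}p(1-p)\bigr] + O(Wm^{3}).
\end{align}
Setting the bracketed leading term to zero gives $mp^{2}-(m+2)p+1=0$, whose roots are $p=\frac{(m+2)\pm\sqrt{(m+2)^{2}-4m}}{2m}=\frac{m+2\pm\sqrt{m^{2}+4}}{2m}$. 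One then checks that the $-$ root lies in $(0,1)$ for every $m>0$ (it is positive since $(m+2)^{2}>m^{2}+4$, below $1$ by the same computation, and tends to $1/2$ as $m\to 0^{+}$), so it is the unique admissible value — exactly the stated $p=\frac{m+2-\sqrt{m^{2}+4}}{2m}$. With this $p$ the skewness $1-2p$ is itself $O(m)$, so the remainder is in fact $O(Wm^{4})$, and hence $\mathbb{E}[z_{t+1}/z_t]\to 1$ in any regime where $m$ shrinks with $W$ fast enough (e.g. holding the per-round log-variance $\asymp Wm^{2}$ fixed, which is the natural ``large bandwidth'' scaling), which is the claimed $\approx 1$.

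The main obstacle — and the reason the statement is flagged as informal — is making precise both ``$\approx$'' and ``for large bandwidth values'': since $e^{mY}$ is unbounded, the ordinary central limit theorem for $\mathrm{Binom}(2W,p)$ does not control this exponential moment, so the argument must go through the exact base $(1-p+pe^{m})^{2}/e^{m}$ and a quantitative expansion of its logarithm, and one has to commit to a scaling (such as $m=\Theta(W^{-1/2})$) that forces the $O(Wm^{4})$ remainder to $o(1)$; with $m$ fixed and $W\to\infty$ the expectation actually diverges, so the qualifier is essential. A secondary and more routine point is to verify that discarding the truncation at $Z_{-r}$ and $Z_{s}$ is harmless for a single step: truncation activates only within $W$ ticks of a boundary, and there it can only shrink $|\log(z_{t+1}/z_{t})|$, so for prices in the bulk it has no effect and near the boundary it perturbs the one-step expectation in a bounded, controllable way.
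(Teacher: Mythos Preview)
Your proposal is correct and reaches the same quadratic as the paper via essentially the same idea: the paper's informal proof approximates $Y=\mathrm{Binom}(2W,p)-W$ by a Gaussian (CLT), treats $\omega^{Y}$ as log-normal, and sets the log-normal mean $e^{m\mu+m^{2}\sigma^{2}/2}$ equal to $1$ --- which is exactly your cumulant-to-second-order computation $m(2p-1)+m^{2}p(1-p)=0$. Your route through the exact binomial MGF is cleaner, and your caveats (the exact solution $p=1/(1+\sqrt{\omega})$, and the need for $m$ to shrink with $W$ lest the expectation diverge) are valid refinements that the paper's heuristic does not address.
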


\begin{proof}[Proof (informal)]
First of all, we make use of the fact that if $X_n \sim Binom(n,p)$, then as $n \rightarrow \infty$, $X_n \overset{d}{\to} X$, where $X \sim \cN(np,np(1-p))$. We recall that $Y + W \sim Binom(2W, p)$, hence it follows that we can approximate $Y$ as being distributed according to $\cN(2Wp-W,2Wp(1-p))$. Let $Q = \omega^Y = \frac{z_{t+1}}{z_t}$, which in turn can be approximated as distributed according to a log normal distribution with parameters $(\mu,\sigma^2) = (2Wp-W,2Wp(1-p))$. Under the approximation, it follows that $\mathbb{E}[Q] = e^{m\mu + \frac{m^2\sigma^2}{2}}$. The given choice of $p$ ensures that this is   equal to 1. 
\end{proof}


\paragraph{Empirically Informing the GBRW.}

We require 
five parameters, $(r,s,\omega,W,T)$, to define how the market price evolves. The parameters $(r,s,\omega)$  give rise to the price space, $\cZ(r,s,\omega)$. The parameters $\omega > 1$ and $W \geq 1$  define the approximate GBRW that dictates random transitions over $\cZ(r,s,\omega)$ and $T \geq 1$ controls the number of times market price can change. Proposition~\ref{prop:binomial-martingle} implies that the design choice mostly depends on the multiplicative ratio $\omega$, and is not dependent on the total number of price tick $r+s+1$. This provides a way to obtain a price sequence that approximates an on-average stable geometric random walk, where $\omega$ and $W$ govern the overall volatility, as larger values of $\omega$ imply that price index changes result in larger multiplicative price changes, and larger bandwidth values, $W$, imply that the random walk can make larger jumps
in a given time step. 

 
The parameters $\omega$ and $W$ directly impact the volatility of the  GBRW. We fit these two parameters using historical price data via a maximum likelihood estimate (see Appendix \ref{append:MLE}). 
We use prices between Ethereum (ETH) and Bitcoin (BTC) for the low volatility regime, as the prices of these two tokens are highly correlated. We use prices between Ethereum (ETH) and USDT for the high volatility regime. For ETH/BTC prices, we estimate values $\omega = 1.0005$ and $W=3$, and for ETH/USDT, we estimate values $\omega = 1.0005$ and $W = 7$. Given this, we adopt $\omega = 1.0005$ in all  LP belief profiles and consider $W \in  \{3,5,7\}$ to represent low/medium/high market price volatility regimes.



We set $r = s = 150$, so that the GBRW takes prices in the interval $[\omega^{-r},\omega^s] \approx [0.9278, 1.0778]$, and consider time horizon  $T = 100$. Given the fact that our empirical results are informed from per-minute price data  (see Appendix \ref{append:MLE}), this time-scale corresponds to a roughly two hour window. This choice of a smaller price range and time horizon is for two reasons.
First,  we consider optimal LP strategies with respect to introducing liquidity at time $t=0$ and removing it at time $t=T$. In practice we expect such simple strategies to be more prevalent at smaller time scales. 
Moreover, using a GBRW with a coarse discretization of price space over large time horizons runs the risk of inadequately modeling smaller order price oscillations, which in turn impacts LP profit and loss (in terms of fees mostly); working with a smaller time horizon and price space mitigates this risk.


\subsubsection{Contract Price Updates}

As mentioned before, we model contract-price updates via rounds that undergo a two-stage process. 
The first stage in a round samples the market price, as per the approximate GBRW process from before.
The second stage  modifies the contract price according to both non-arbitrage and arbitrage trading. These modifications require 3 relevant parameters: $(k,\lambda,\gamma)$. The first parameter $k \in \mathbb{N}$ specifies the number of non-arbitrage trades that occur in a given round. The second parameter $\lambda \in (0,1)$ specifies the multiplicative price impact a single non-arbitrage trade has in a given round. The final parameter $\gamma \in [0,1]$ is the trading fee present in the v3 contract at hand. 

In more detail, the $t$-th round consists of a sequence of $2(k+1)$ contract-market price pairs  given by $P^t = (P^t_{0} ,\dots P^t_{2k+1})$. The $j$-th  pair is given by $P^t_{j} = (P^t_{c,j}, P^t_{m,j})$, and for all pairs it holds that $P^t_{c,j} = z_t$; i.e., the market price remains unchanged within this second stage of the round. 

For all odd $j$, the transition from $P^t_{j}$ to $P^t_{j+1}$ consists of a single {\em non-arbitrage trade}. With probability 1/2, the trader buys token $A$ from the contract such that contract price increases to $P^t_{c,j+1} = (1 - \lambda)^{-1}P^t_{c,j}$, and with probability 1/2, the trader sells token $A$ to the contract such that contract price decreases to $P^t_{c,j+1} = (1 - \lambda)P^t_{c,j}$. 
This form of price-based non-arbitrage trade is similar to that used in Capponi and Jia~\cite{capponi2021adoption}.

For all even $j < 2(k+1)$, the transition from $P^t_{j}$ to $P^t_{j+1}$ consists of a (potentially empty) {\em arbitrage trade}. Given the fee rate, $\gamma$, and any market price $P_m \in (0,\infty)$, we let $I_\gamma(P_m) = [(1-\gamma)P_m, (1-\gamma)^{-1} P_m]$ be the {\it no-arbitrage interval} around market price $P_m$. 
As shown in Angeris et al.~\cite{angeris2019analysis}, if a contract price $P_c$ is such that $P_c \in I_\gamma(P_m)$, then even if $P_c \neq P_m$, arbitrage is not profitable due to transaction fees. More specifically, no-arbitrage conditions for trading with a Uniswap contract with contract price $P_c \in (0,\infty)$ precisely amount to having $P_c \in I_\gamma(P_m)$. Consequently, if at $P^t_j$, the contract price is outside the no-arbitrage interval of the market price, we will assume that arbitrageurs trade in such a way that the contract price reaches the closest point in the no-arbitrage interval. For a given $x \in \mathbb{R}$ and interval $I \subseteq \mathbb{R}$ we let $\pi(x,I) = \argmin_{y \in I}|x - y|$ be the projection of $x$ onto $I$; i.e., the closest point in $I$ to $x$. Using this notation, we  say an arbitrage trader updates the contract price from $P^t_{c,j}$ to $P^t_{c,j+1} = \pi(P^t_{c,j}, I_\gamma(z_t))$. 

\paragraph{Relevant Parameter Regimes.}
Going forward we fix the empirically-informed values of $r,s = 150$, $\omega = 1.0005$ and $T = 100$ for precisely the reasons specified at the end of \ref{sec:market-prices}: 1) static LP strategies are more realistic at smaller time scales 2) finer discretizations of price space (and correspondingly smaller price ranges) mitigate the risk of model error arising from smaller order price oscillations. Consequently, we focus on modulating four parameters, $(W,k,\lambda,\gamma)$ to change the way in which market prices are updated and also how non-arbitrage trades and arbitrage trades induce contract price updates. Parameters $k$ and $\lambda$  modulate the number of non-arbitrage trades in a round, and the per-trade price impact of said trades, respectively. Parameters $W$ and $\gamma$  modulate how market prices evolve and when arbitrage trading kicks in. 

In Sections~\ref{sec:effect-delta-opt-gas}, \ref{sec:Pareto-results}, and \ref{sec:riskaverse-results} we fix $(W,k,\lambda,\gamma) = (5,10,0.00025,0.01)$. Here, the choice of $W = 5$ represents intermediate market price volatility and $k$ and $\lambda$ are chosen such that two consecutive non-arbitrage trades, if uninterrupted by arbitrage trades, result in a multiplicative change of $1.0005$ which equals the value of the parameter $\omega$ from the GBRW. Finally, the choice of $\gamma = 0.01$ is the highest fee tier available in Uniswap contracts, which can have fee tiers $\gamma \in \{0.0005, 0.003, 0.01\}$. 

In Section~\ref{sec:modulating-extra-params}, we vary $k,$ $\lambda$ and $\gamma$ to see how $\OPT_a$ and $\GAS$ depend on different scales of non-arbitrage and arbitrage trade while maintaining intermediate market price volatility fixed. In Section \ref{sec:price-vol-empirical}, we explore the impact overall contract-market price volatility has on our results. We do so by exploring two further parameter settings for $(W,k,\lambda,\gamma)$ given by $(3,5,0.002,0.01)$ and $(7,15,0.003,0.01)$. The former setting corresponds to a lower contract-market price volatility and the latter to a higher contract-market price volatility, compared with the baseline of $(5,10,0.00025,0.01)$.

\begin{figure}[t]
    \centering
         \includegraphics[width=0.48\textwidth]{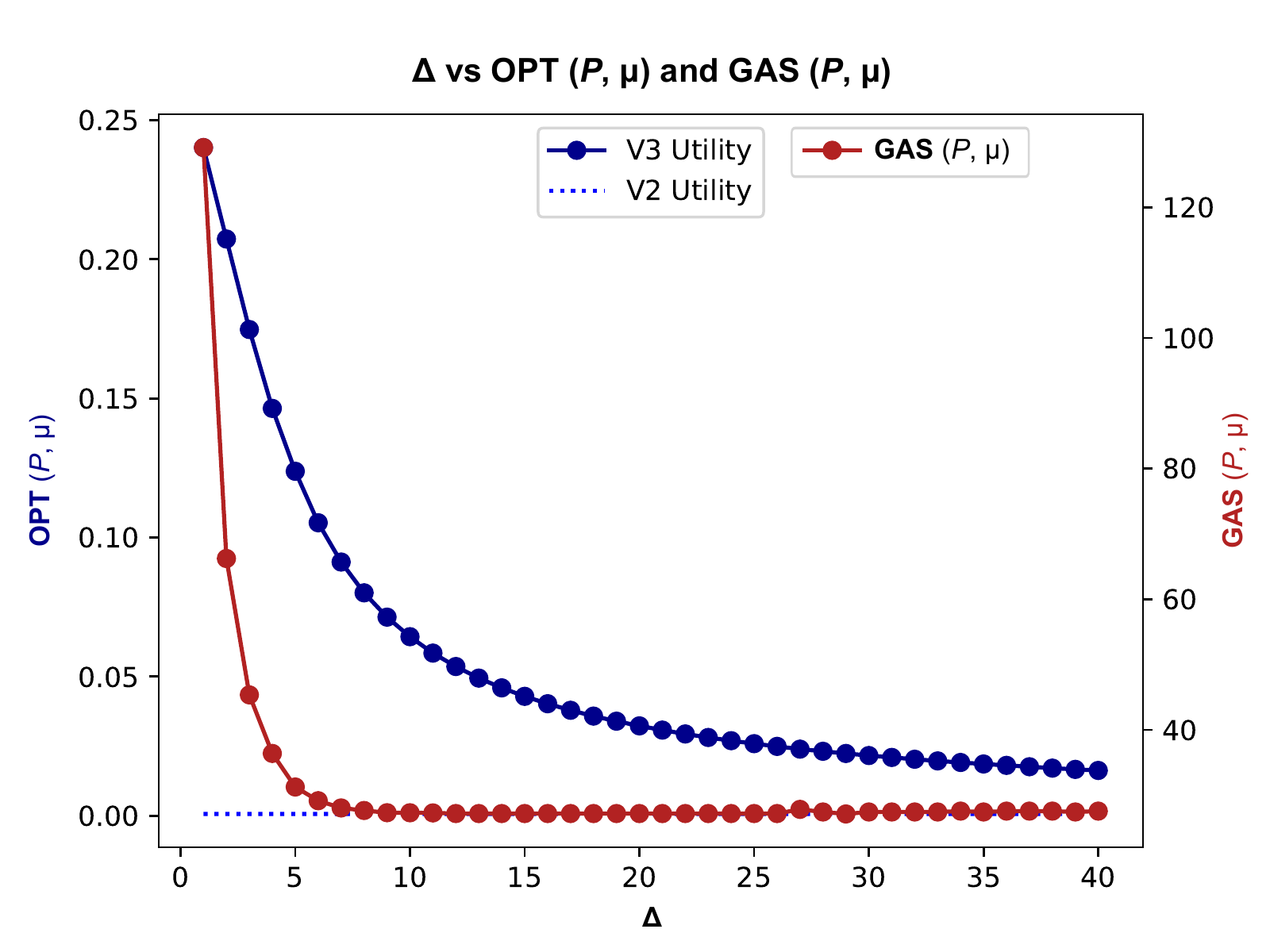}
          \includegraphics[width=0.48\textwidth]{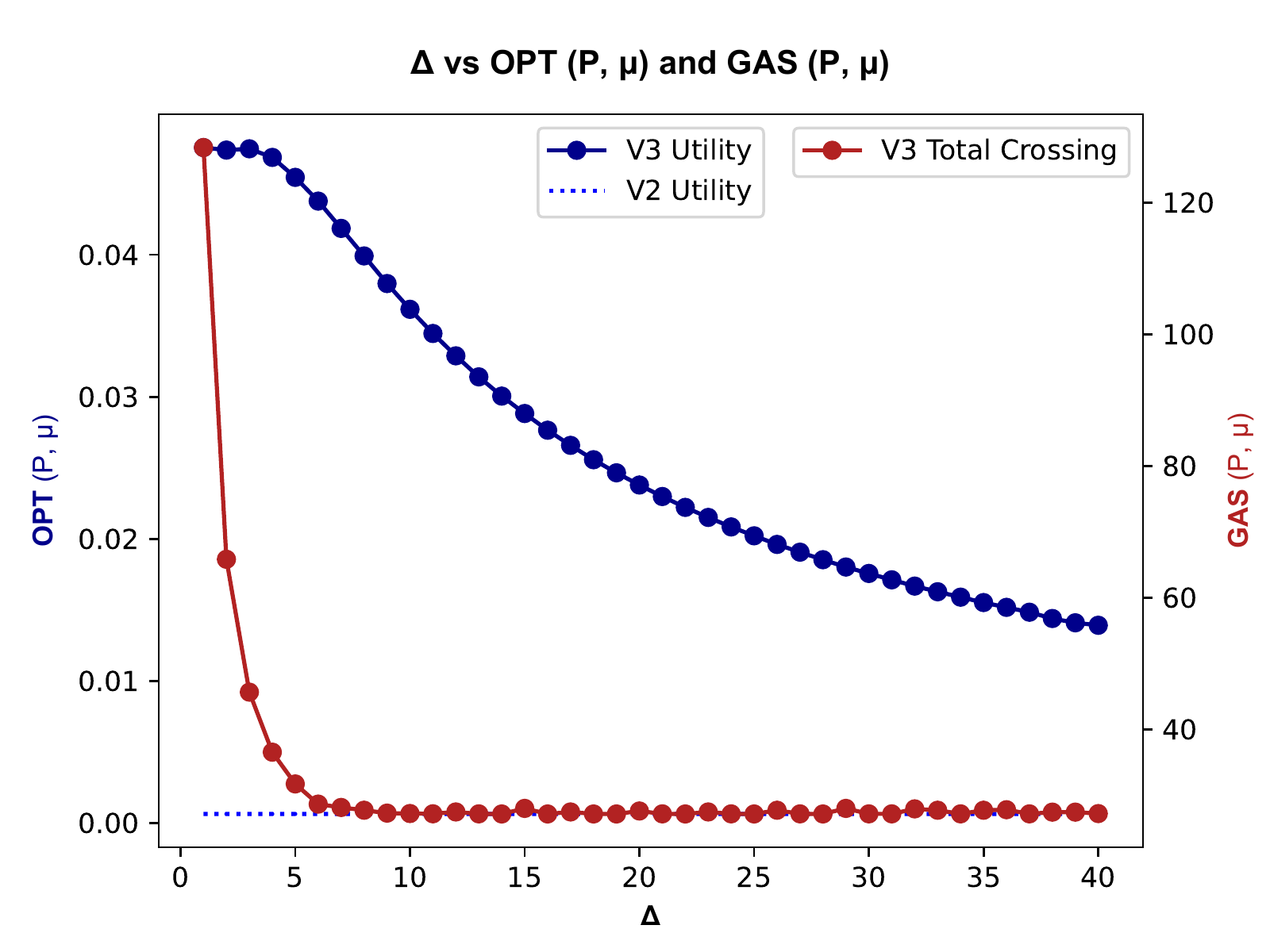}
    \caption{LP profit an loss and trader Gas fee, for $(W,k,\lambda,\gamma) = (5,10,0.00025,0.01)$, and $a=0$ risk-neutral (left) and $a=20$ risk-averse (right), and  a $(\theta,\Delta)$-exponential bucketing scheme with  multiplicative factor $\theta=1.002$ and  bucket spacing  $\Delta\in\{1,\dots,40\}$. The plots also show the corresponding v2 PnL.
    \label{fig:delta-opt-gas}}
\end{figure}

\subsection{The Effect of Bucket Size on PnL and Gas Cost}
\label{sec:effect-delta-opt-gas}

    As bucket sizes decrease, an LP's optimal PnL increases, in line with our result from Proposition \ref{prop:small-bucket-more-pnl}, and the expected gas cost to traders  increases as finer partitions result in more crossings of active price ticks (see Figure \ref{fig:delta-opt-gas}, where multiplicative factor $\theta$ is fixed to $1.002$ and we vary bucket spacing $\Delta$.
    The plot also includes the optimal PnL of an LP under a v2 contract (the v2 Gas cost is zero in our model, since no crossings of bucket endpoints ever occur). 

\subsection{The Uniswap v3 OPT-GAS Pareto Frontier}
\label{sec:Pareto-results}

\begin{figure}[t!]
    \centering
         \includegraphics[width=0.45\textwidth]{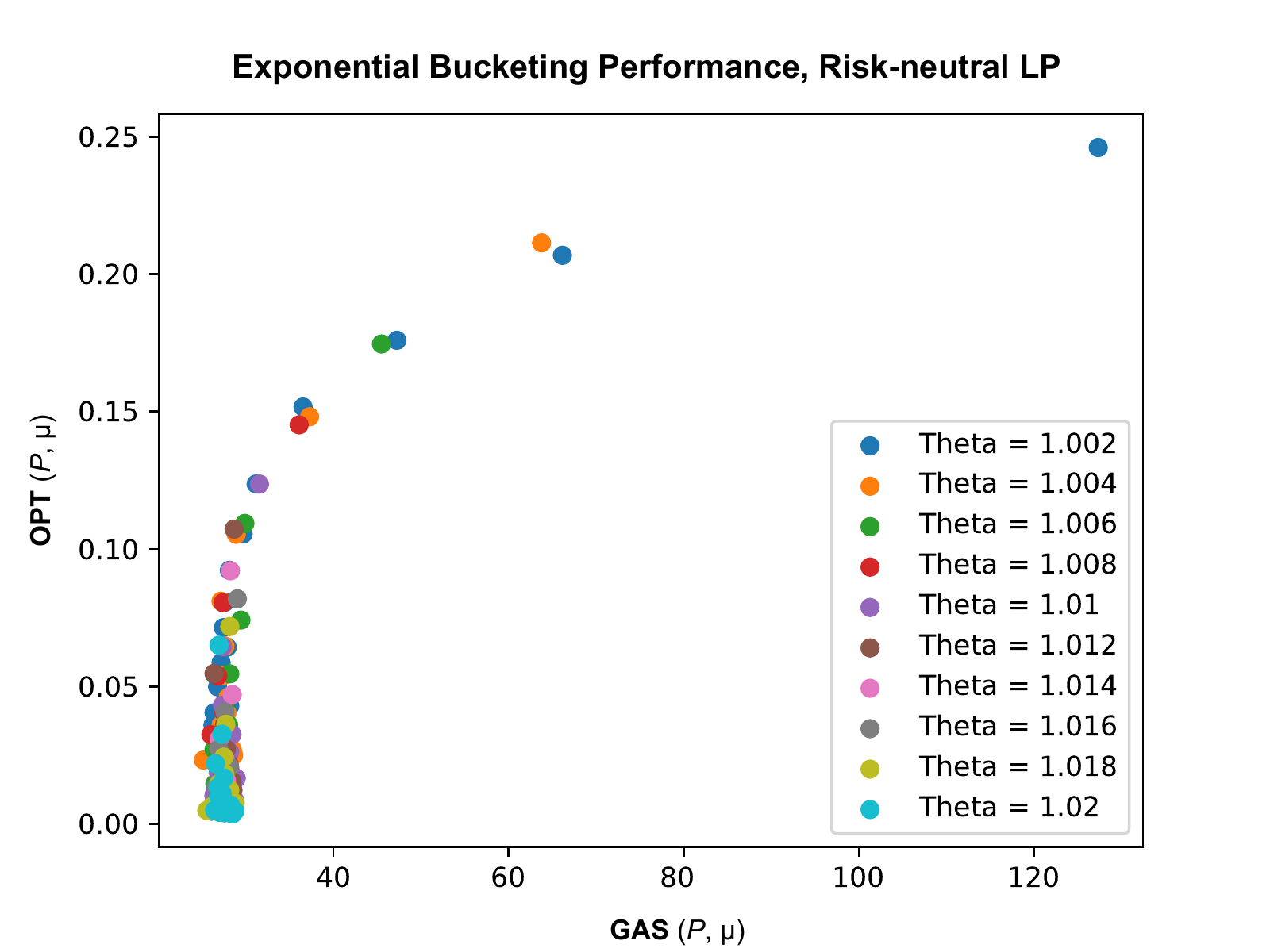}
         \includegraphics[width=0.45\textwidth]{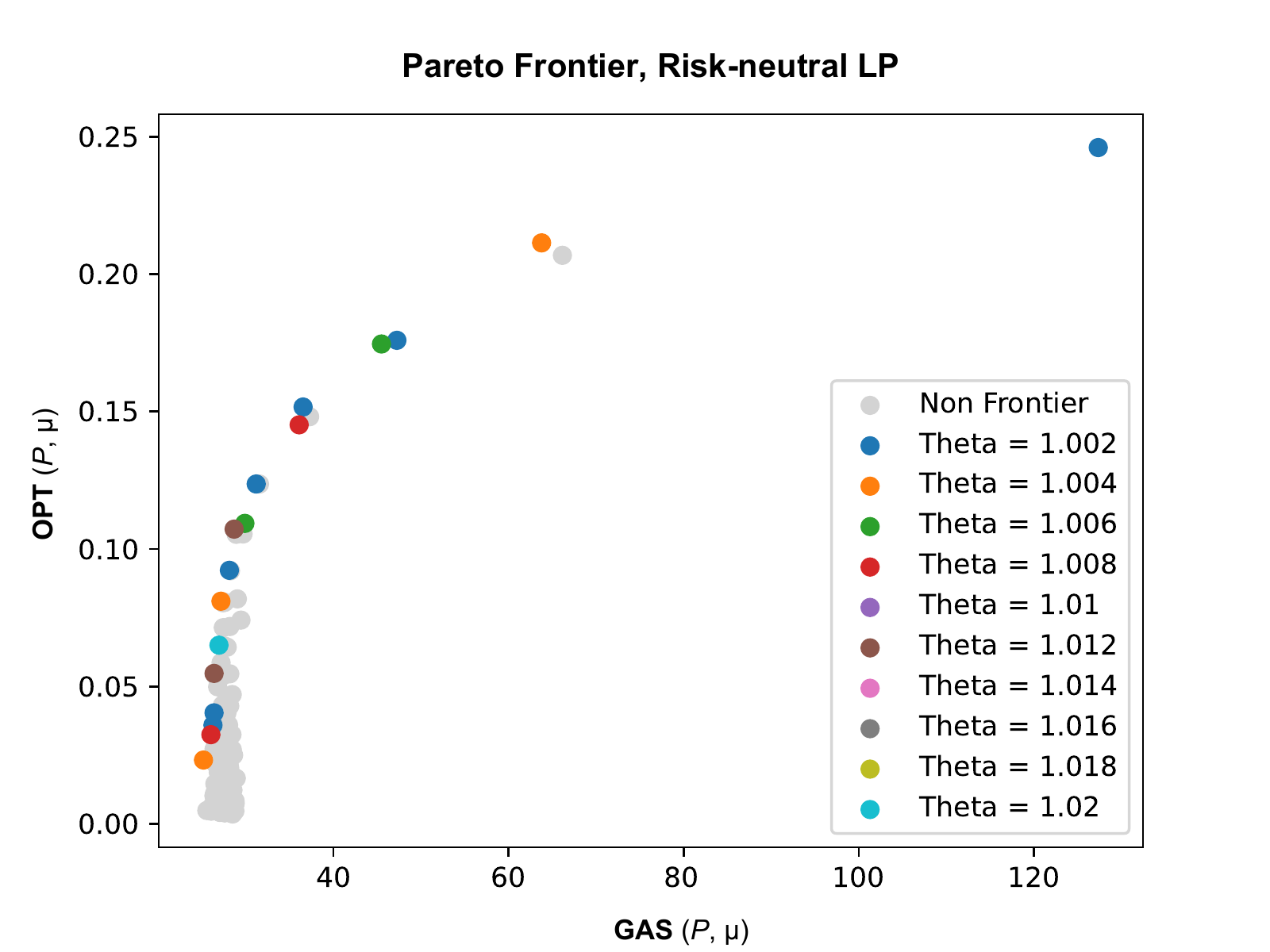}
         \includegraphics[width=0.45\textwidth]{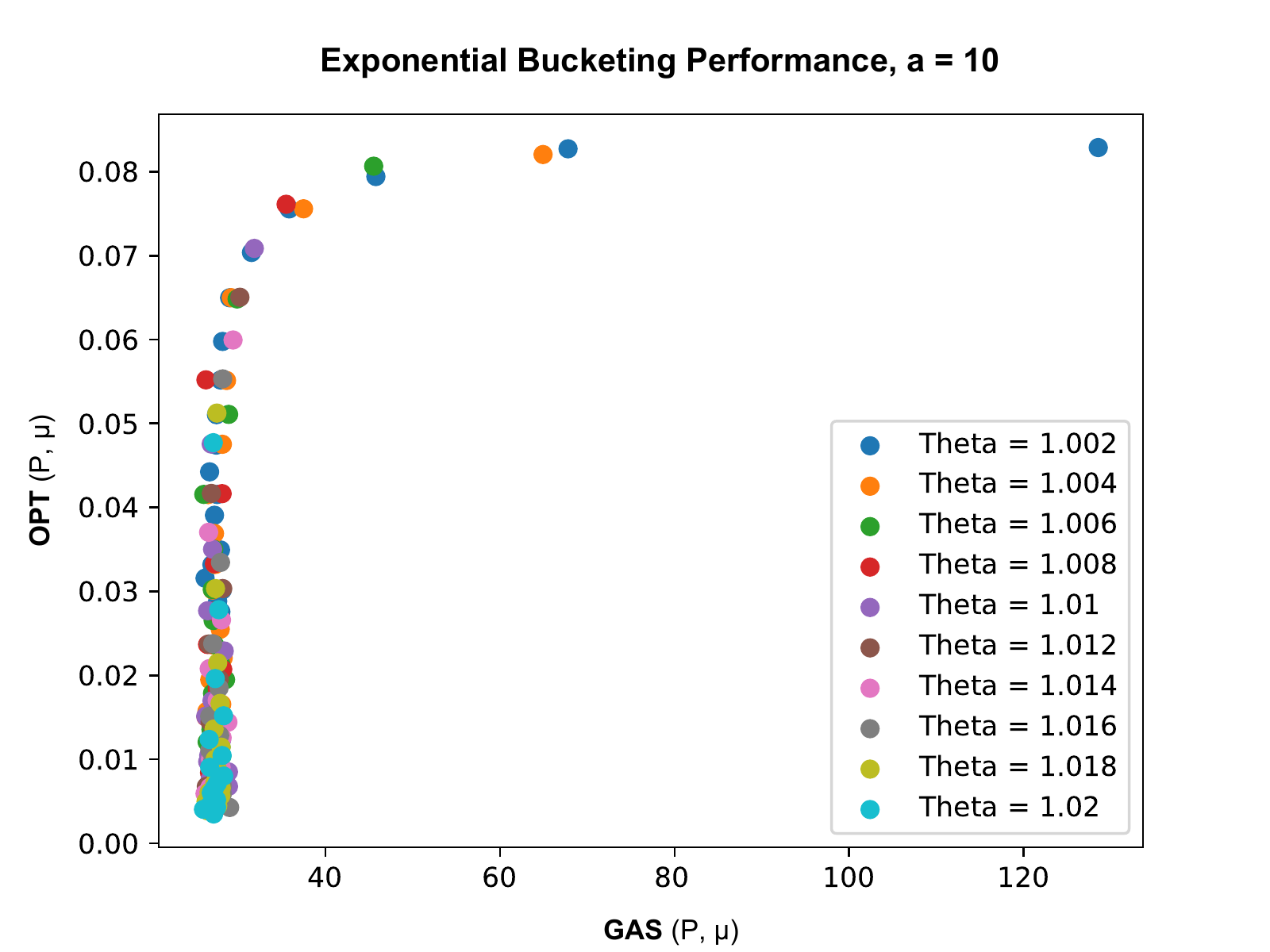}
         \includegraphics[width=0.45\textwidth]{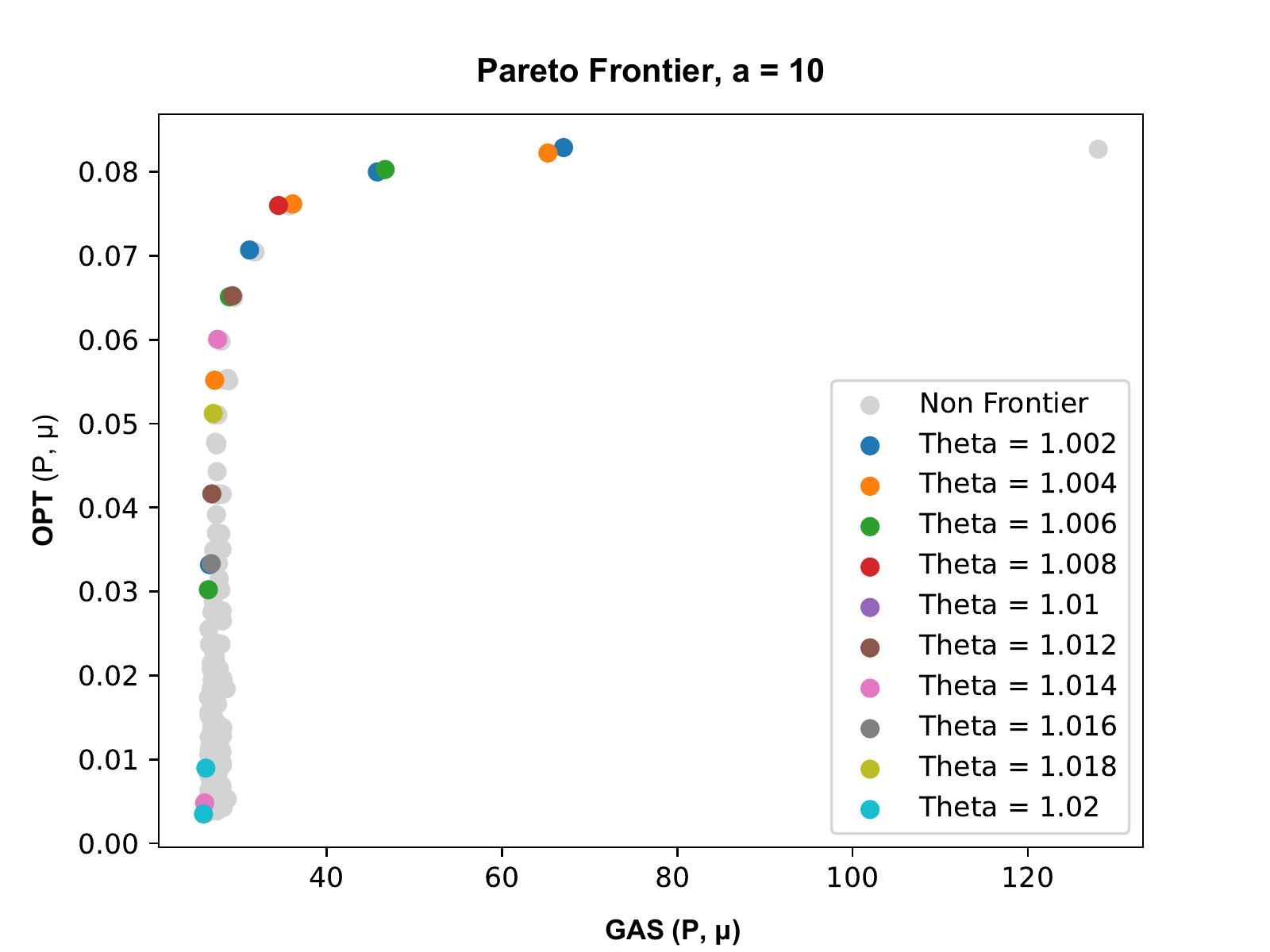}
         \includegraphics[width=0.45\textwidth]{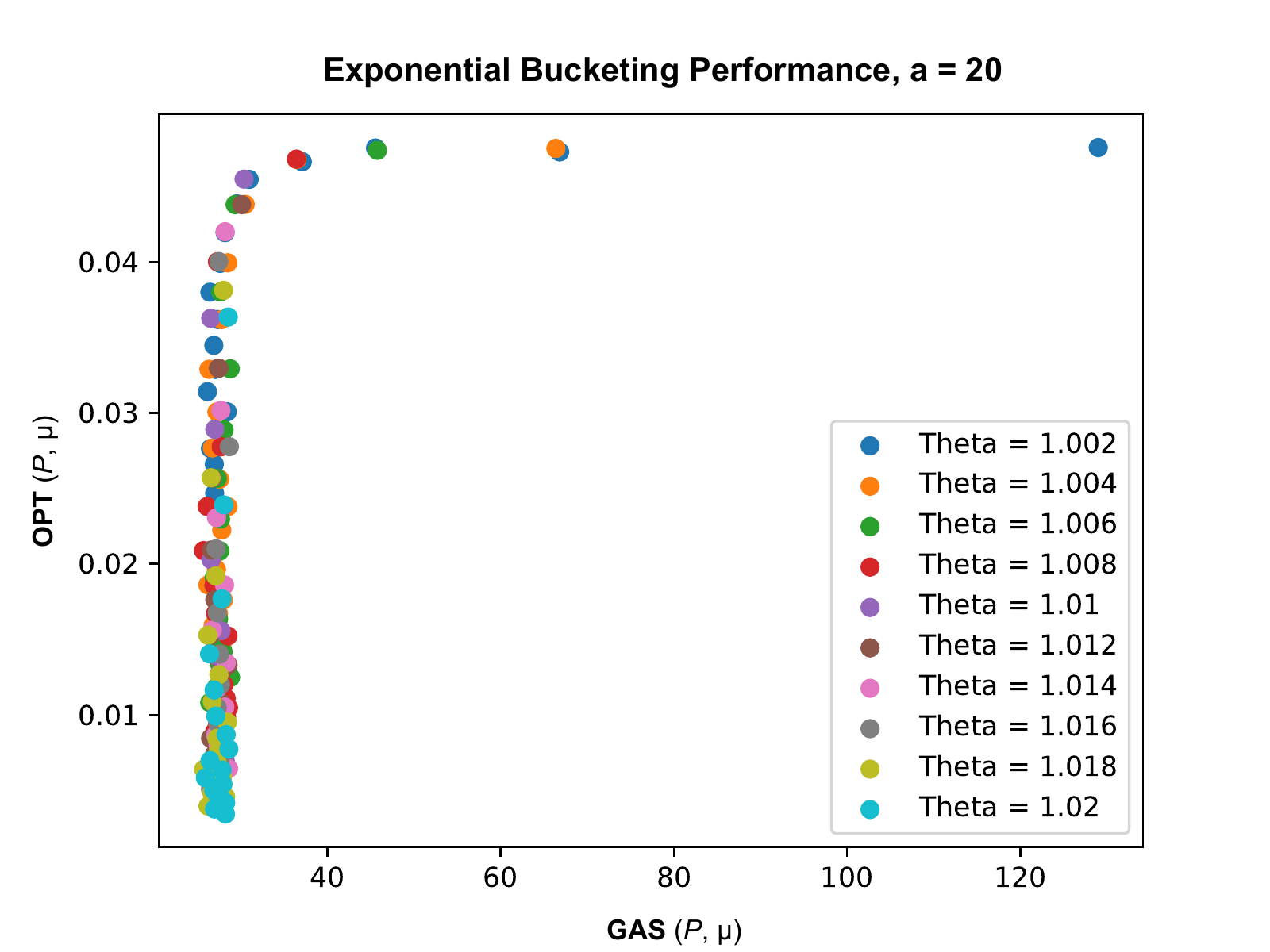}
         \includegraphics[width=0.45\textwidth]{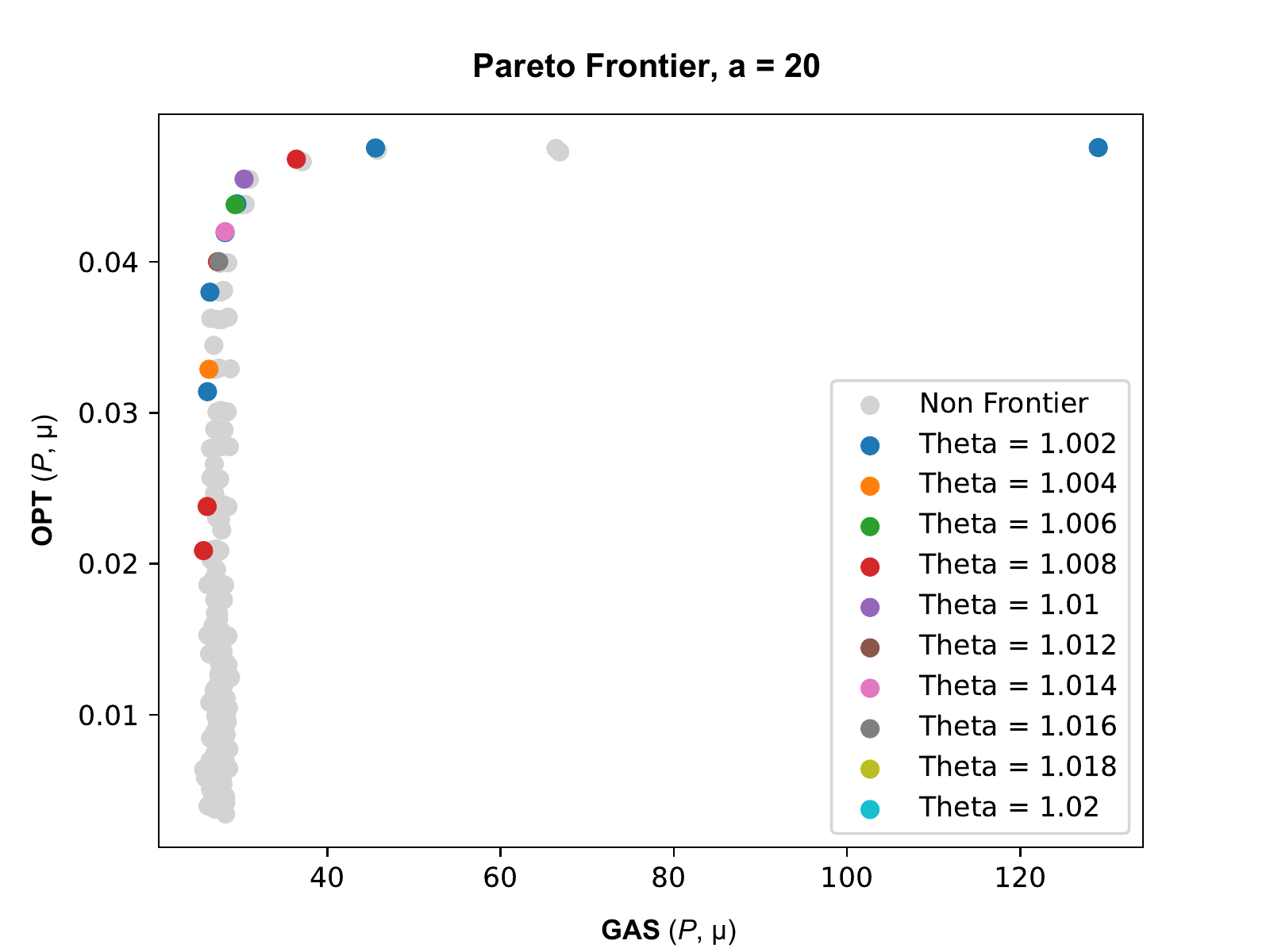}
    \caption{LP PnL vs Trader gas fees for $(W,k,\lambda,\gamma) = (5,10,0.00025,0.01)$,  LPs with risk parameter $a \in \{0,10,20\}$ (top, middle, bottom), and different  $(\theta,\Delta)$-exponential bucketing schemes. Each point corresponds to a specific  $(\theta,\Delta)$, varying bucket spacing $\Delta \in \{1,\dots,20\}$ and with color indicating multiplicative factor ($\theta$). The right column highlights only the bucketing schemes on the Pareto Frontier. 
    \label{fig:Pareto}}
\end{figure}

In Figure~\ref{fig:Pareto}, we plot the performance of different  bucketing schemes for different LP risk profiles. Points with the same color have the same multiplicative factor ($\theta$)   but  vary in bucket spacing $\Delta\in \{1,\dots,20\}$. The right-hand plots   highlight which of these points are on the  Pareto frontier. 
%
There is a multiplicity of different exponential bucketing schemes  on the Pareto frontier, as illustrated through  multiple colors and thus $\theta$ values and multiple $\Delta$ values for a given color, as there are multiple points of the same color on the frontier.
From this, a concrete design improvement for Uniswap v3 contracts lies in providing additional bucketing schemes for a contract designer, as suited to different PnL-Gas fee tradeoffs and depending on the preferences of agents (the LPs and traders).
The {\em status quo} design in Uniswap v3 is equivalent to 
a $(\theta,\Delta)$-exponential bucketing schemes with $\theta = 1.0001$ and $\Delta$ corresponding to {\tt tickspacing} in the contract.  
%
%
The top right point of each Pareto frontier plot (in light blue) of Figure~\ref{fig:Pareto} corresponds to a $(1.002, 1)$-exponential bucketing scheme. This $\theta$ value is closest to the v3 status quo of $\theta = 1.0001$ and results in relatively high PnL for LPs but  higher gas fees for  traders.\footnote{We only tested larger $\theta$ values due to the granularity constraint of our GBRW model.} 

\subsection{The Effect of Risk-Aversion}
\label{sec:riskaverse-results}
\setcounter{equation}{0}


The expected PnL and  standard deviation of PnL are decreasing functions, for an optimal LP behavior, are each decreasing  in the risk-aversion parameter $a \geq 0$ (see  Figure~\ref{fig:utility-std-vs-a}). This is to be expected for the constant absolute risk-aversion function. 
Moreover, as the level of risk-aversion increases, optimal liquidity allocations remain centered around the initial unit price, but become more spread out, as visualized in Figure~\ref{fig:prop_liquidity_allW}.  
This is to be expected, as spreading out an allocation  mitigates the risk that an LP will lose out on fees when prices exit a liquidity allocation. 
\begin{figure}[h]
    \centering
         \includegraphics[width=0.48\textwidth]{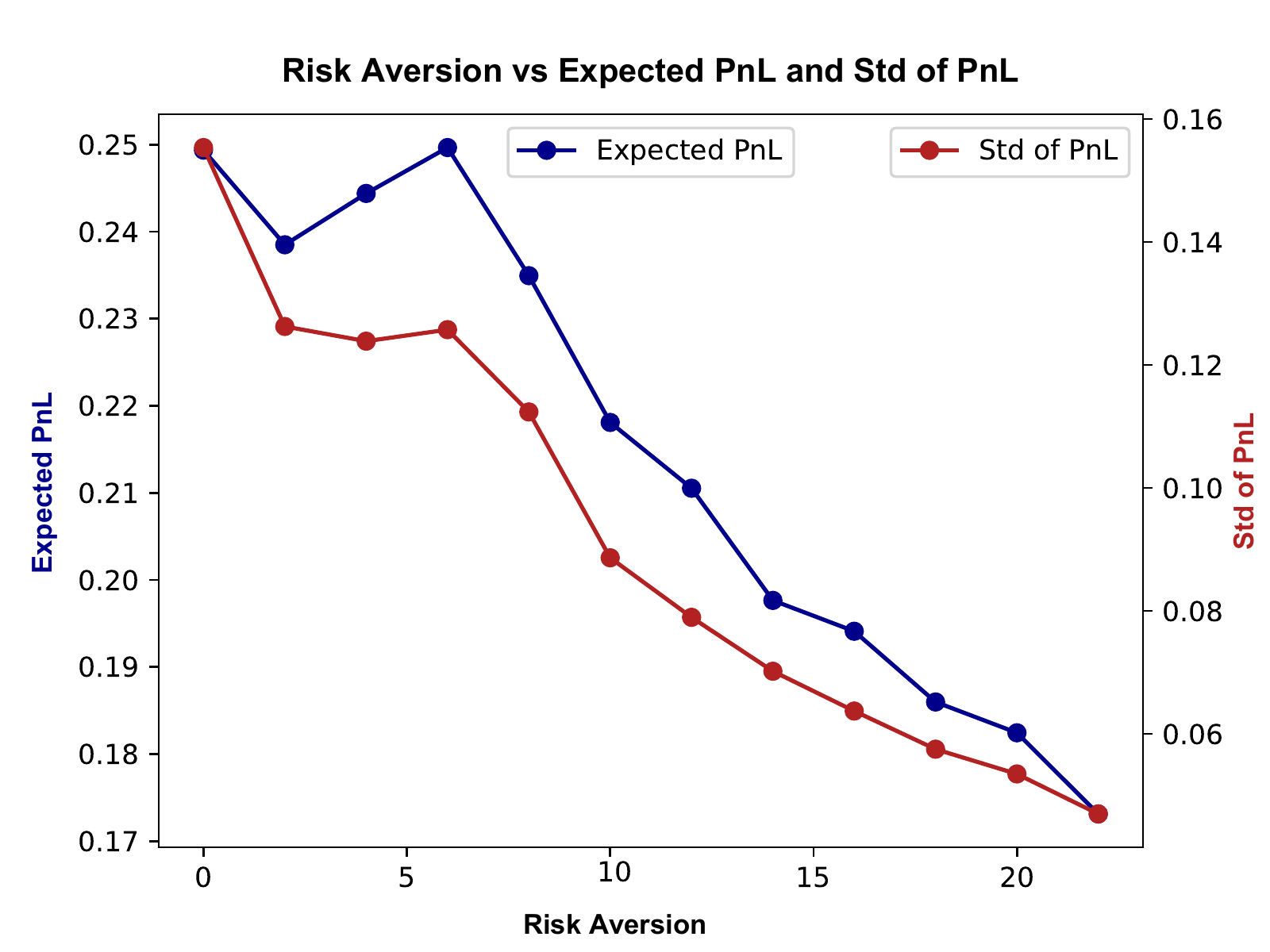}
    \caption{Expected PnL and standard deviation of PnL vs.~risk parameter $a$  for $(W,k,\lambda,\gamma) = (5,10,0.00025,0.01)$,  multiplicative factor $\theta = 1.002$, bucket spacing $\Delta = 1$.
    \label{fig:utility-std-vs-a}}
\end{figure}
\begin{figure}[h]
    \centering
         \includegraphics[width=0.7\textwidth]{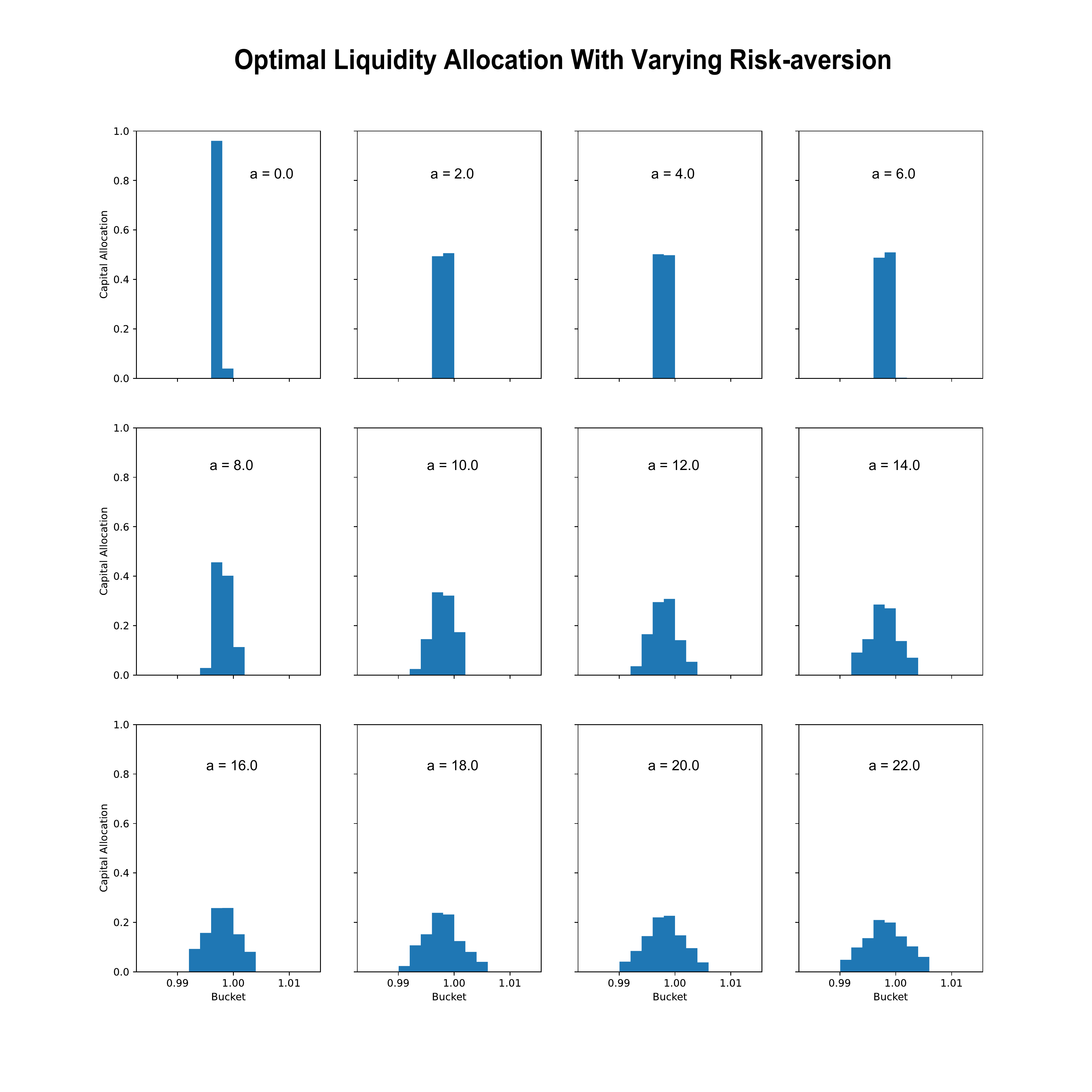}
    \caption{Optimal liquidity allocations for $(W,k,\lambda,\gamma) = (5,10,0.00025,0.01)$ and varying risk parameter, $a$.  Each bar represents the proportion of an LP's initial capital  allocated to a  bucket in the optimal liquidity allocation.
    \label{fig:prop_liquidity_allW}}
\end{figure}

Referring back to Figure~\ref{fig:Pareto}, we can also see that with an increased level of LP risk-aversion, the same set of exponential bucketing schemes gives rise to a steeper Pareto curve that is  composed of less points, indicating that different exponential bucketing schemes with similar gas costs give rise to a  wider spread of expected utility in the risk-averse setting. In addition, the risk-averse Pareto frontier  is still composed of multiple $(\theta,\Delta)$ values, which continues to indicate that a richer partition of the price space can  simultaneously benefit LPs and traders over a wide spread of risk-aversion values.

\subsection{Modulating Non-Arbitrage Trade and Fee Rates}
\label{sec:modulating-extra-params}

Parameters $k$ and $\lambda$ control the quantity of non-arbitrage trades and their impact on the contract price, respectively. The PnL and Gas costs are  monotonically increasing in each of parameters $k$ and $\lambda$, as  we would expect (see Figure~\ref{fig:utility-std-vs-k-lambda} which give results for both risk-neutral and risk-averse LPs). 
%
For PnL,  LPs only make a profit for non-arbitrage trades~\cite{capponi2021adoption}, hence this relationship is expected.
As for gas costs, it is clear that more contract price movements will 
trigger the crossing  of more active ticks. 
\begin{figure}[h]
    \centering
         \includegraphics[width=0.48\textwidth]{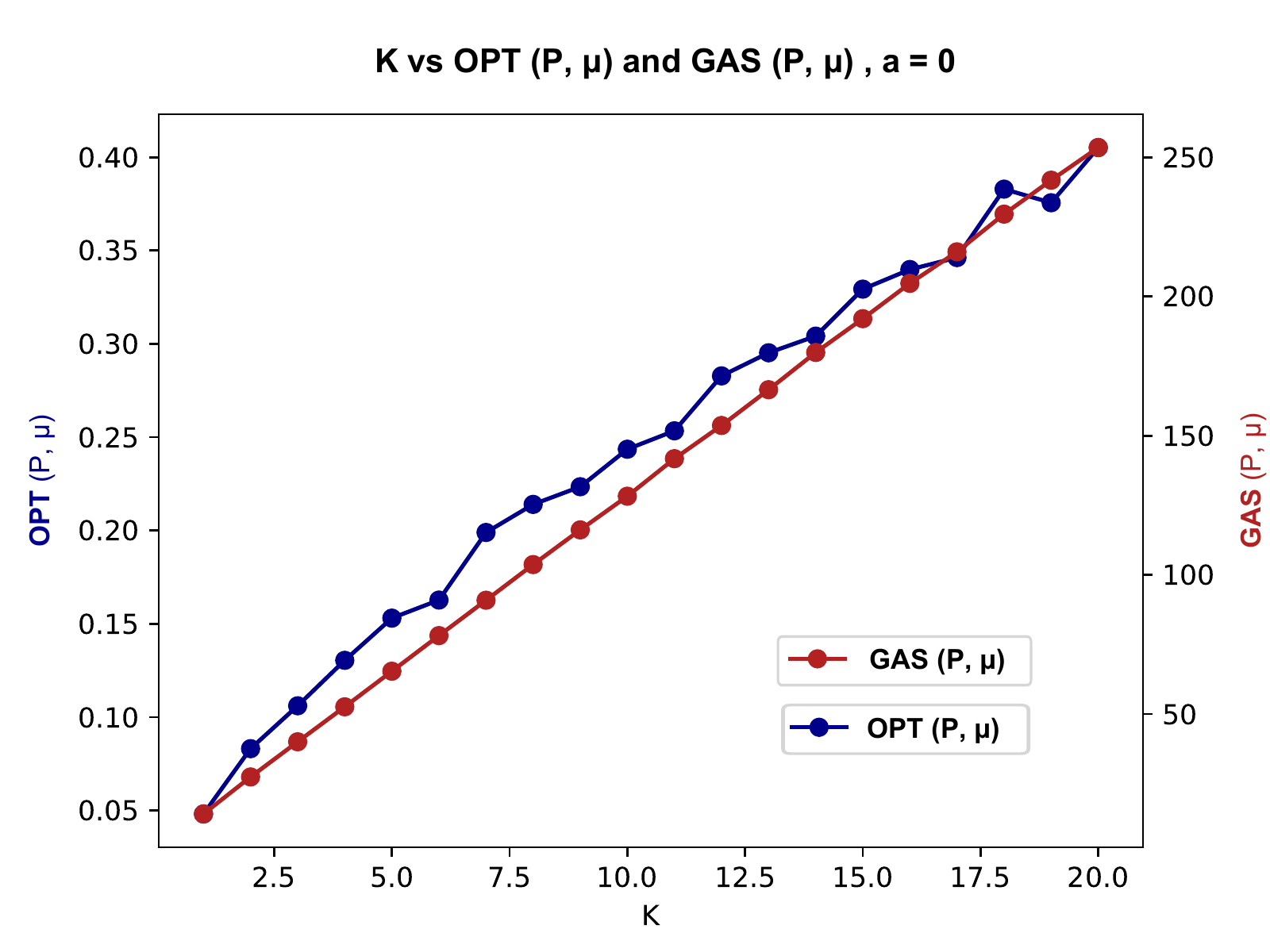}
         \includegraphics[width=0.48\textwidth]{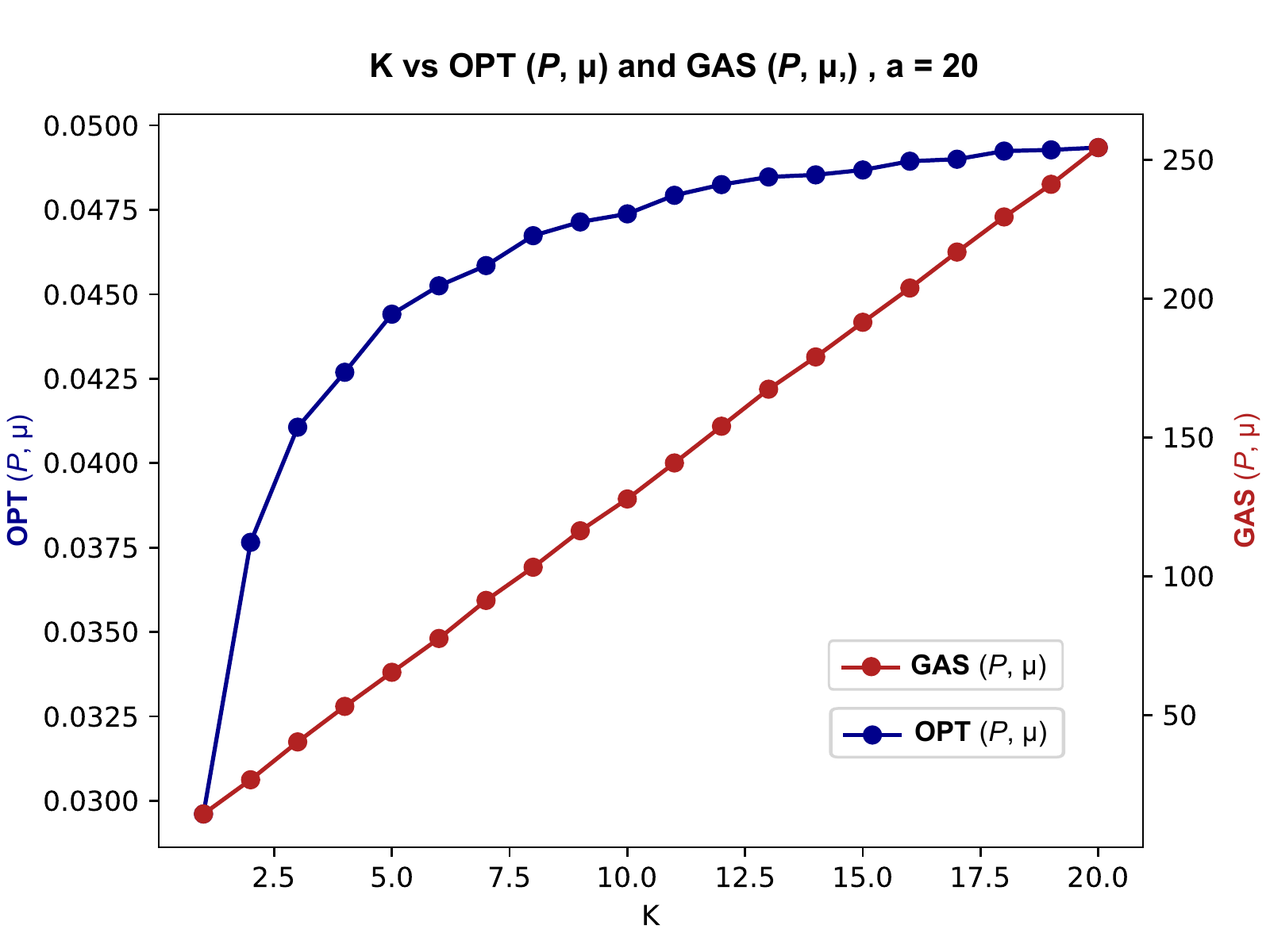}
          \includegraphics[width=0.48\textwidth]{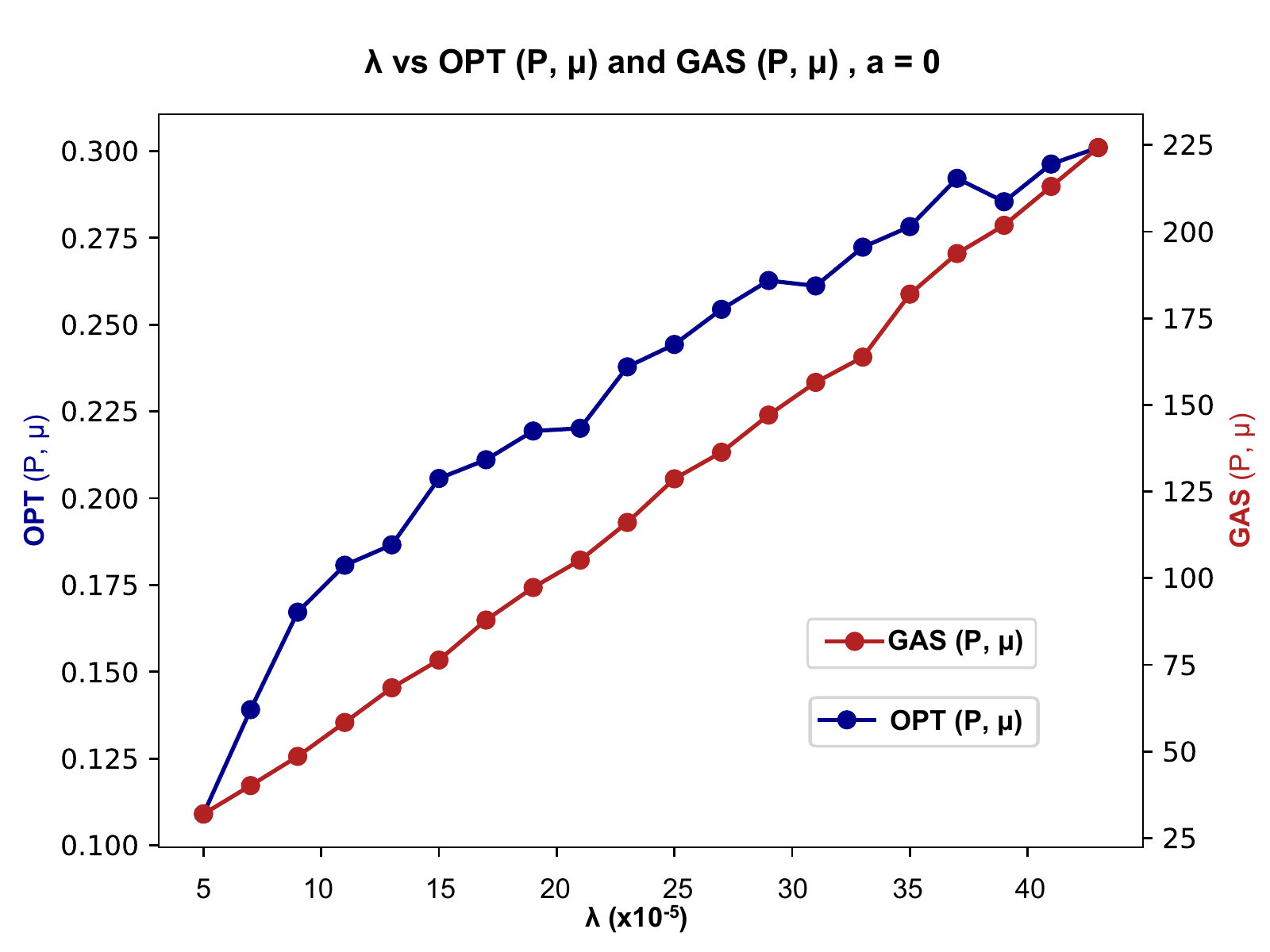}
         \includegraphics[width=0.48\textwidth]{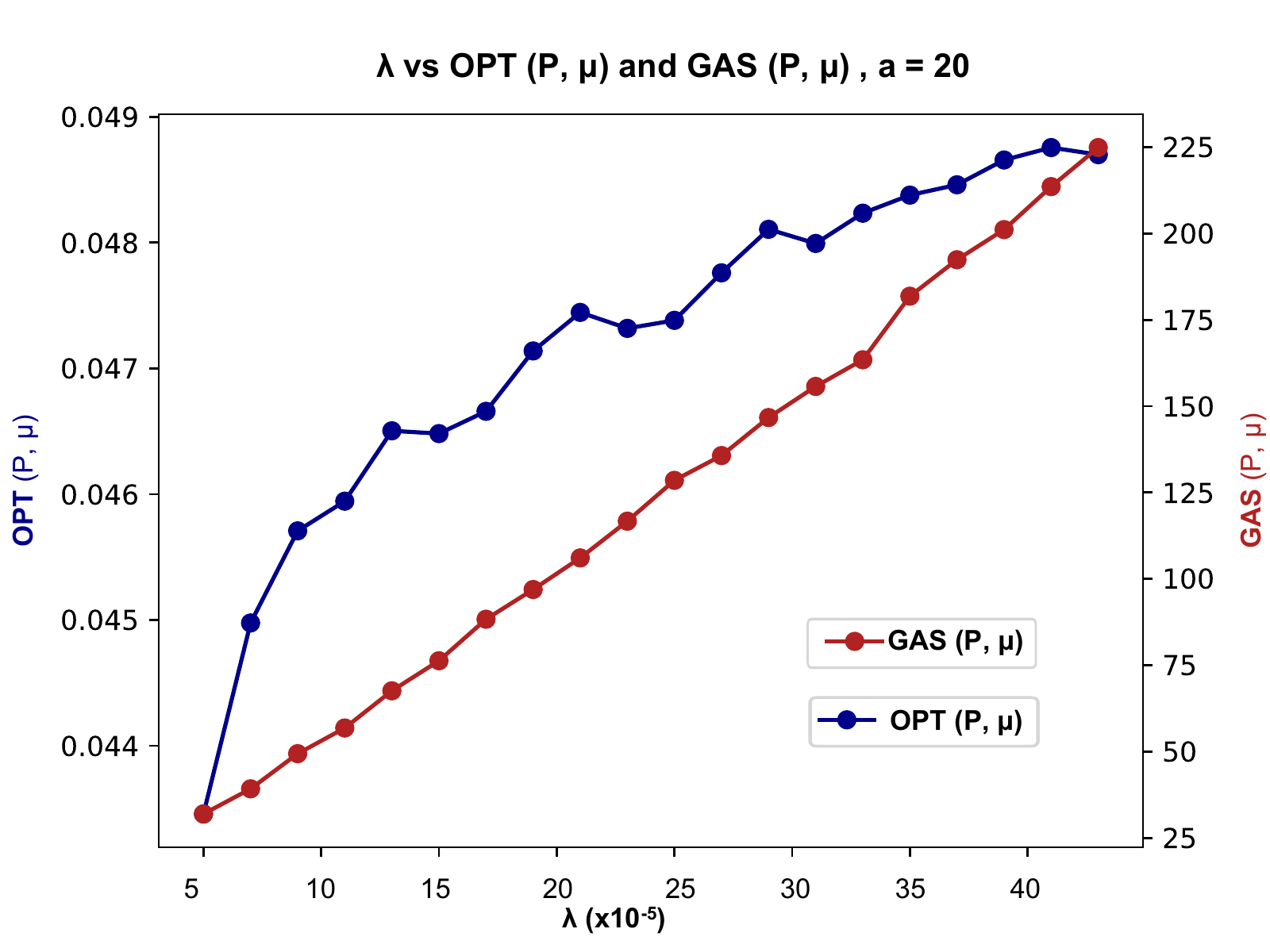}
    \caption{The top row plots expected PnL and Gas costs for an LP as a function of $k$ (the number of per-round non-arbitrage trades) and the bottom row plots expected PnL and Gas costs for an LP as a function of $\lambda$ (the scale of non-arbitrage trade). Left plots correspond to $a = 0$ (risk neutral LP) and right plots to $a = 20$ (risk-averse LP). For all plots, we let market prices follow a GBRW with bandwidth $W = 5$ and consider v3 contracts with fee rate $\gamma = 0.01$.
    \label{fig:utility-std-vs-k-lambda}}
\end{figure}

The transaction fee rate, $\gamma$, affects the LP PnL and the  Gas costs in multiple ways. On one hand, changing $\gamma$ directly affects LP profits, as price movements  provide more fees. On the other hand, $\gamma$ affects the no-arbitrage interval around a given market price, and thus the stochastic process governing the contract prices. In Figure~\ref{fig:fees-pareto},  we see that 
with a risk-neutral LP each of the different fee rates is included on the Pareto frontier.
This is quite different for  risk-averse LPs, with the highest fee rate, $\gamma = 0.2$ dominating the rest of the bucketing designs for $a\in \{10,20\}$.
To interpret this,  to earn the same expected utility as $\gamma = 0.2$ with  lower fee rates, the contract must make the sacrifice of having larger gas costs.  

On the basis of this analysis,  higher fee rates seem desirable for contract design, at least for the case of risk averse LPs. However, we caution that a higher fee rate also makes a trading pool less desirable for traders, and that this is a phenomenon that is outside of our current model.  For this,  we would need to model competition between pools, and the  interplay between LPs and pool desirability to traders, in order to obtain a more precise statement. 
%
\begin{figure}[t]
    \centering
         \includegraphics[width=0.48\textwidth]{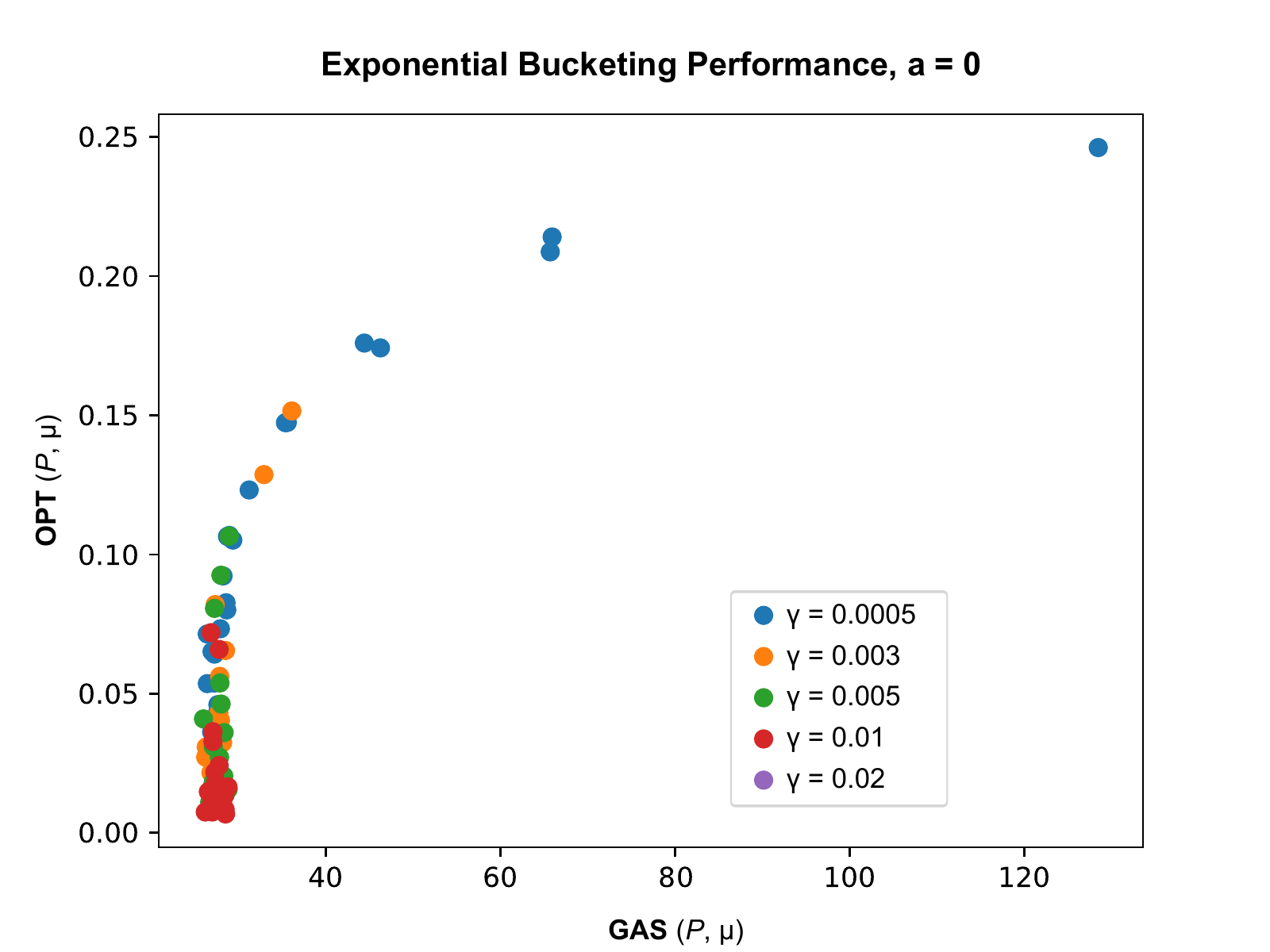}
         \includegraphics[width=0.48\textwidth]{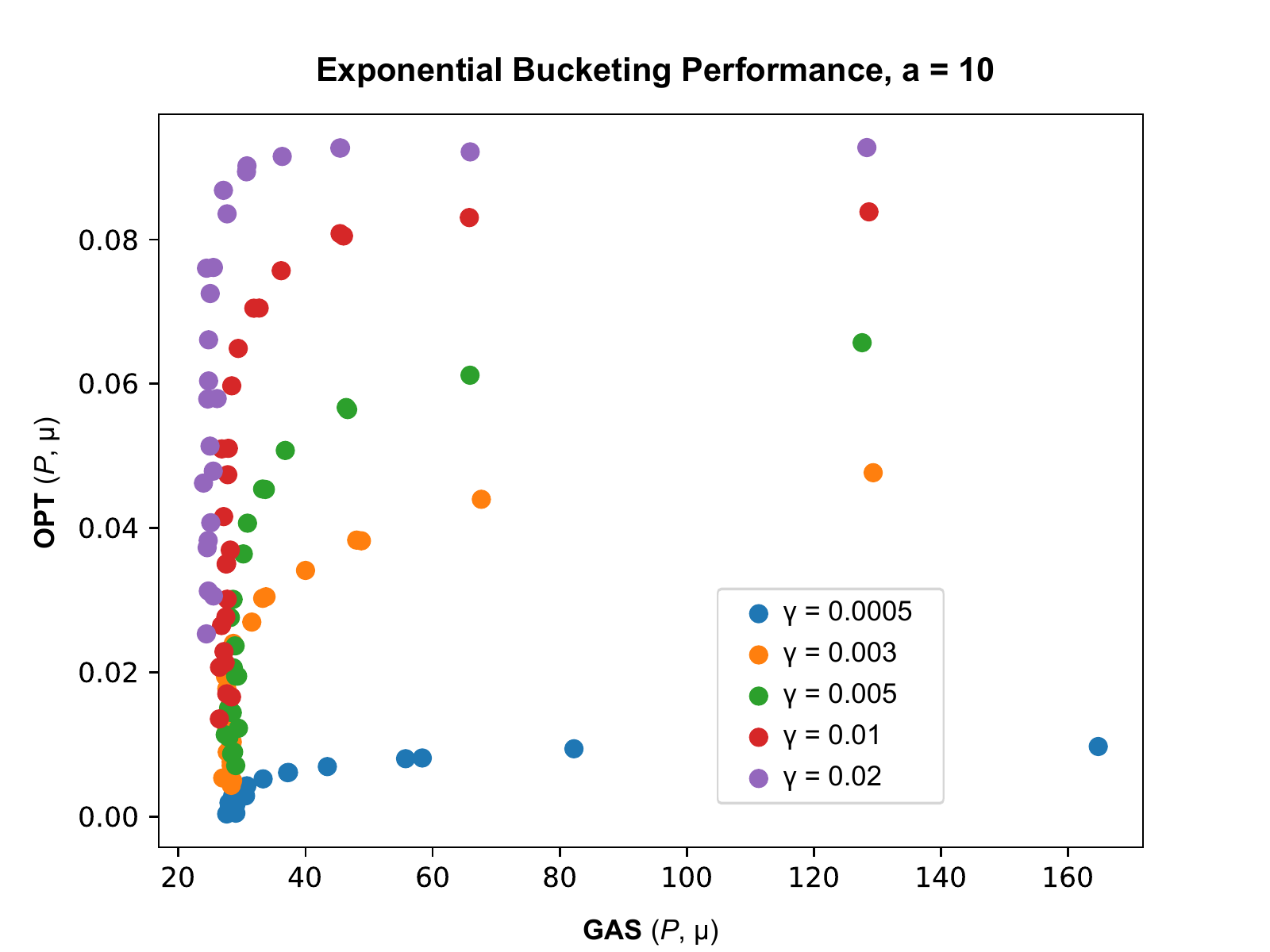}
         \includegraphics[width=0.48\textwidth]{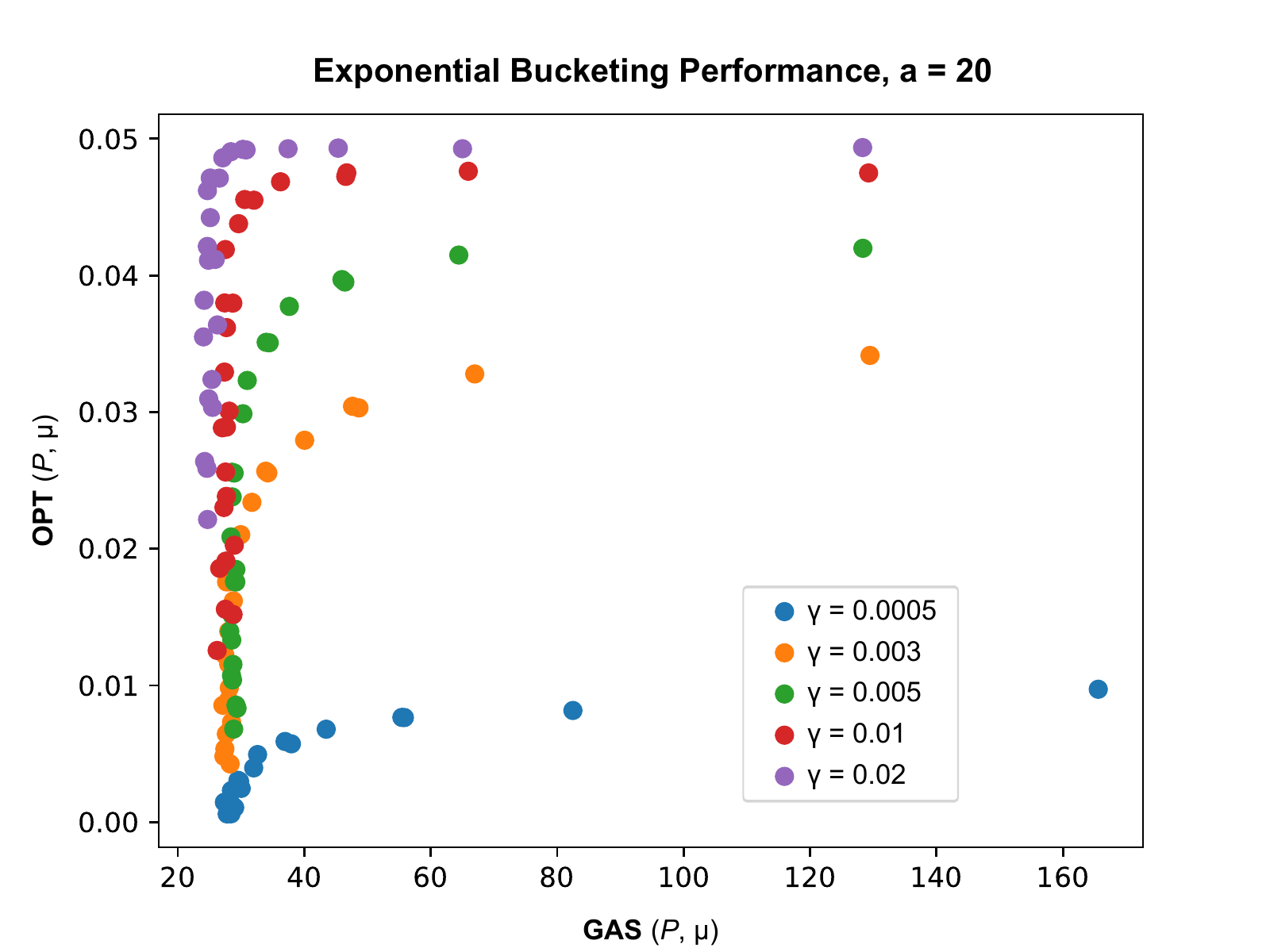}
    \caption{LP PnL vs.~Trader gas fees for $W = 5$, $k = 10$,  $\lambda = 0.01$,  LPs with risk parameter $a\in\{0,10,20\}$,  and for different values of transaction fee rate, $\gamma$. Each point corresponds to a specific $(\theta,\Delta)$, varying multiplicative factor $\theta \in \{1.002 + i\cdot 0.002\}_{i=0}^4$ and bucket spacing $\Delta \in \{1 + 2i\}_{i=0}^4$, and with  color indicating fee rate $\gamma \in \{0.0005, 0.003, 0.005, 0.01, 0.02\}$.
    \label{fig:fees-pareto}}
\end{figure}

\subsection{The Effect of Contract-Price Volatility}
\label{sec:price-vol-empirical}

In this section, we study the impact of different levels of contract-price volatility.   We recall that four parameters,  $W,k,\lambda$, and $\gamma$, govern  LP belief profiles. $k$ represents the number of non-arbitrage trades in a given round, and $\lambda$ represents the multiplicative change in price caused by a non-arbitrage trade. Lower values of each parameter reduce the overall contract price changes that arise from non-arbitrage trading. $W$ represents the volatility of the GBRW governing market prices and $\gamma$ represents the fee tier of a v3 contract. Larger values of $W$ and lower values of $\gamma$ increase the contract price changes that arise from arbitrage trading. 
%
Until now, we have focused on an intermediate volatility regime, given by $(W,k,\lambda,\gamma) = (5,10,0.00025,0.01)$. In what follows, we explore a low volatility regime given by $(W,k,\lambda,\gamma) = (3,5,0.0002,0.01)$ and a high volatility regime given by $(W,k,\lambda,\gamma) = (7,15,0.0003,0.01)$. 

\subsubsection{Implications for the OPT-GAS Pareto Frontier}
\label{sec:Pareto-results-volatility}

\begin{figure}[h]
    \centering
         \includegraphics[width=0.45\textwidth]{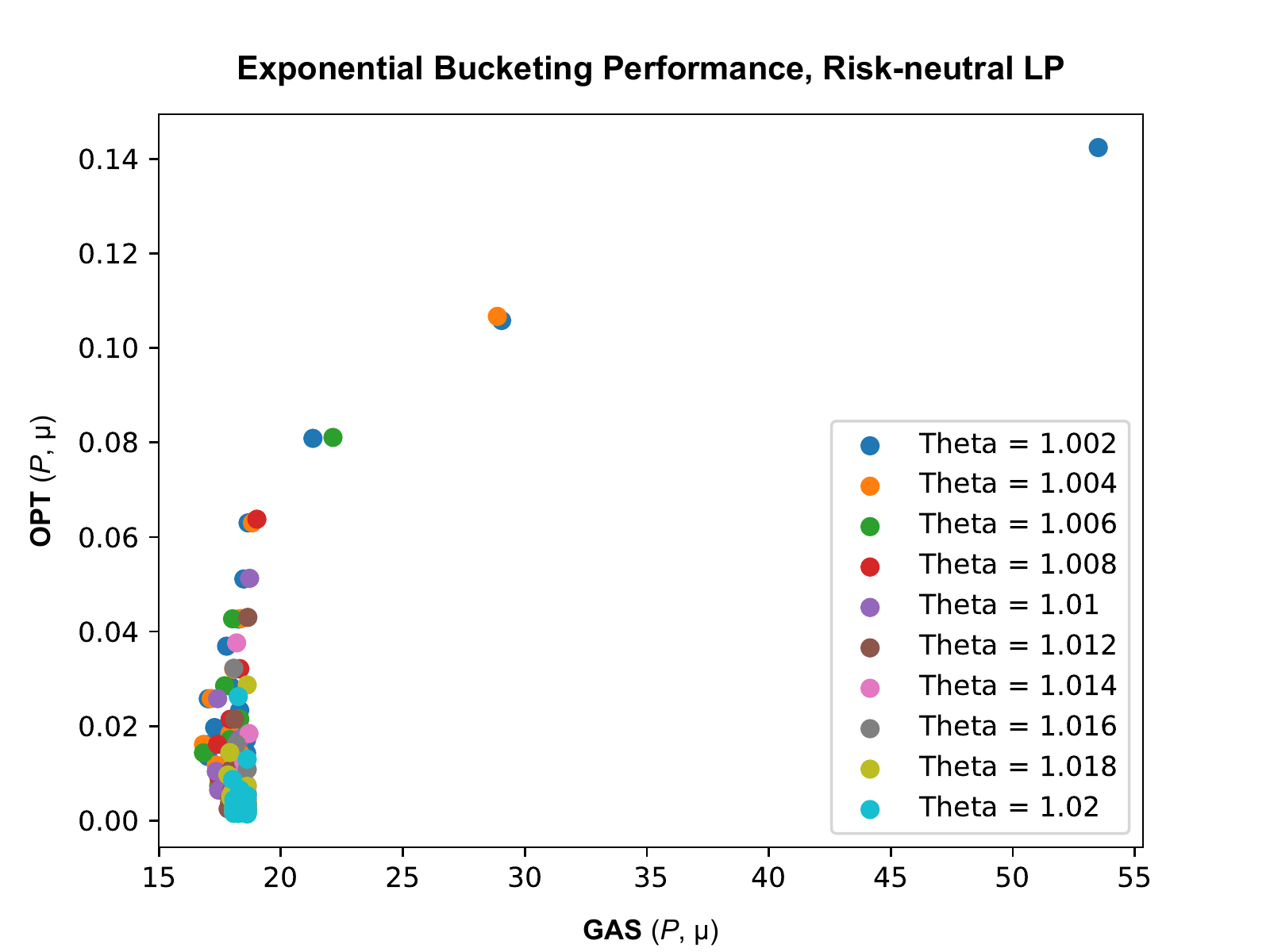}
         \includegraphics[width=0.45\textwidth]{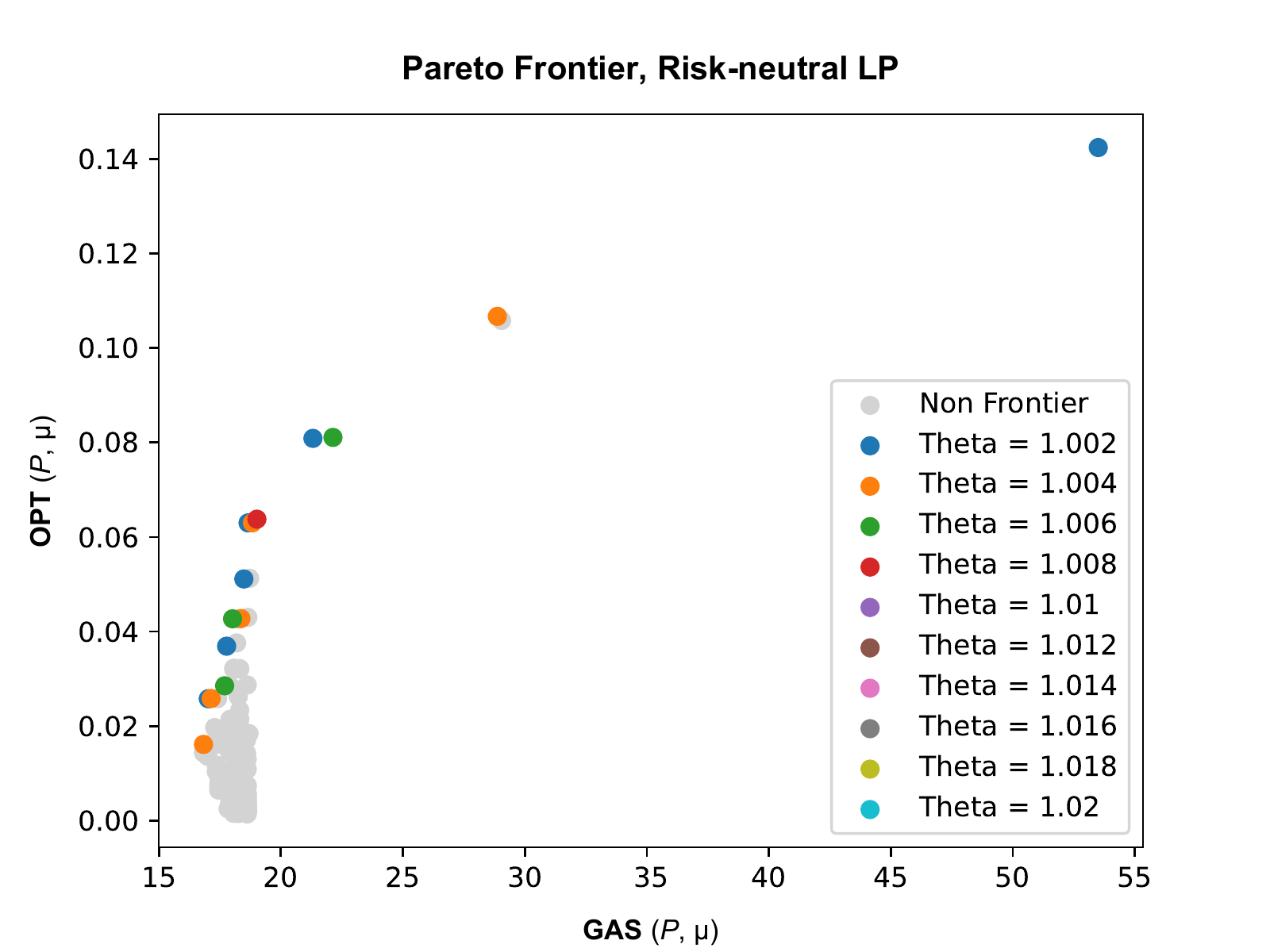}
         \includegraphics[width=0.45\textwidth]{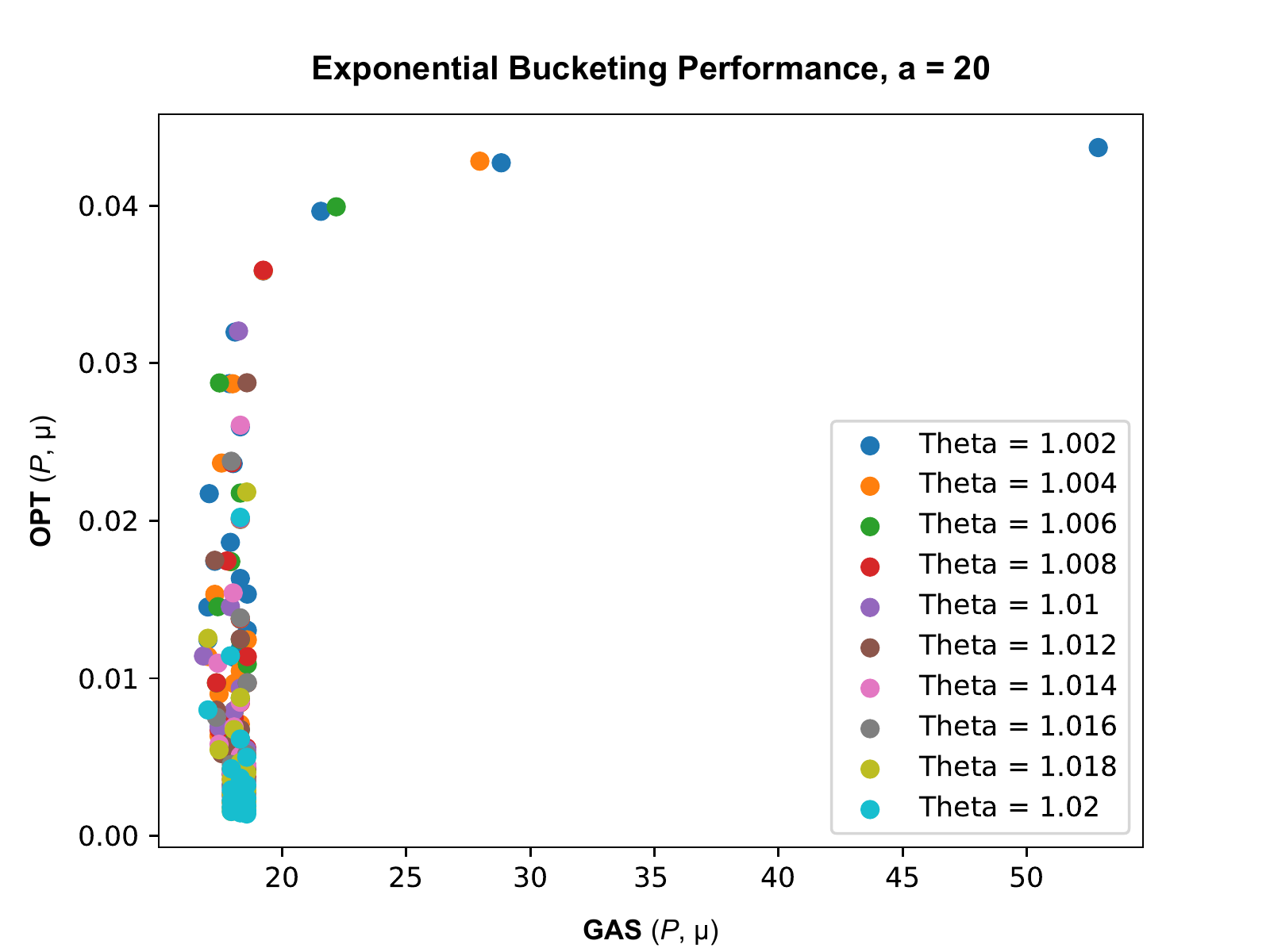}
         \includegraphics[width=0.45\textwidth]{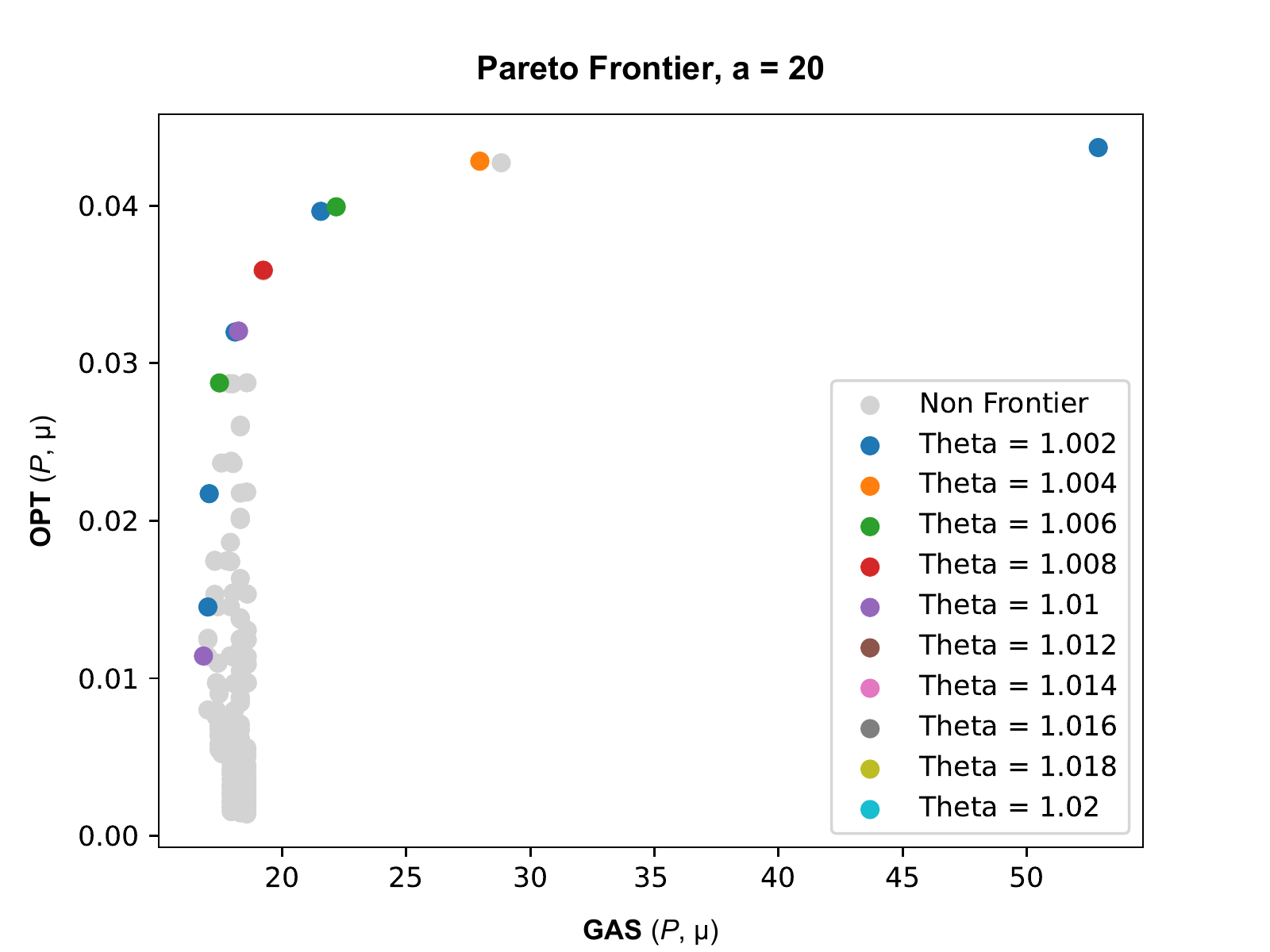}
    \caption{LP PnL vs.~Trader gas fees for low volatility regime $(W,k,\lambda,\gamma) = (3,5,0.002,0.01)$, LPs with risk parameter $a\in \{0,20\}$, and different  exponential bucketing schemes.  Each point corresponds to a specific $(\theta,\Delta)$, varying bucket spacing $\Delta \in \{1,\dots,20\}$, and with color  indicating  multiplicative factor ($\theta$). The right column  highlights only the bucketing schemes on the Pareto Frontier. 
    \label{fig:Pareto-low-vol}}
\end{figure}
\begin{figure}[h]
    \centering
         \includegraphics[width=0.45\textwidth]{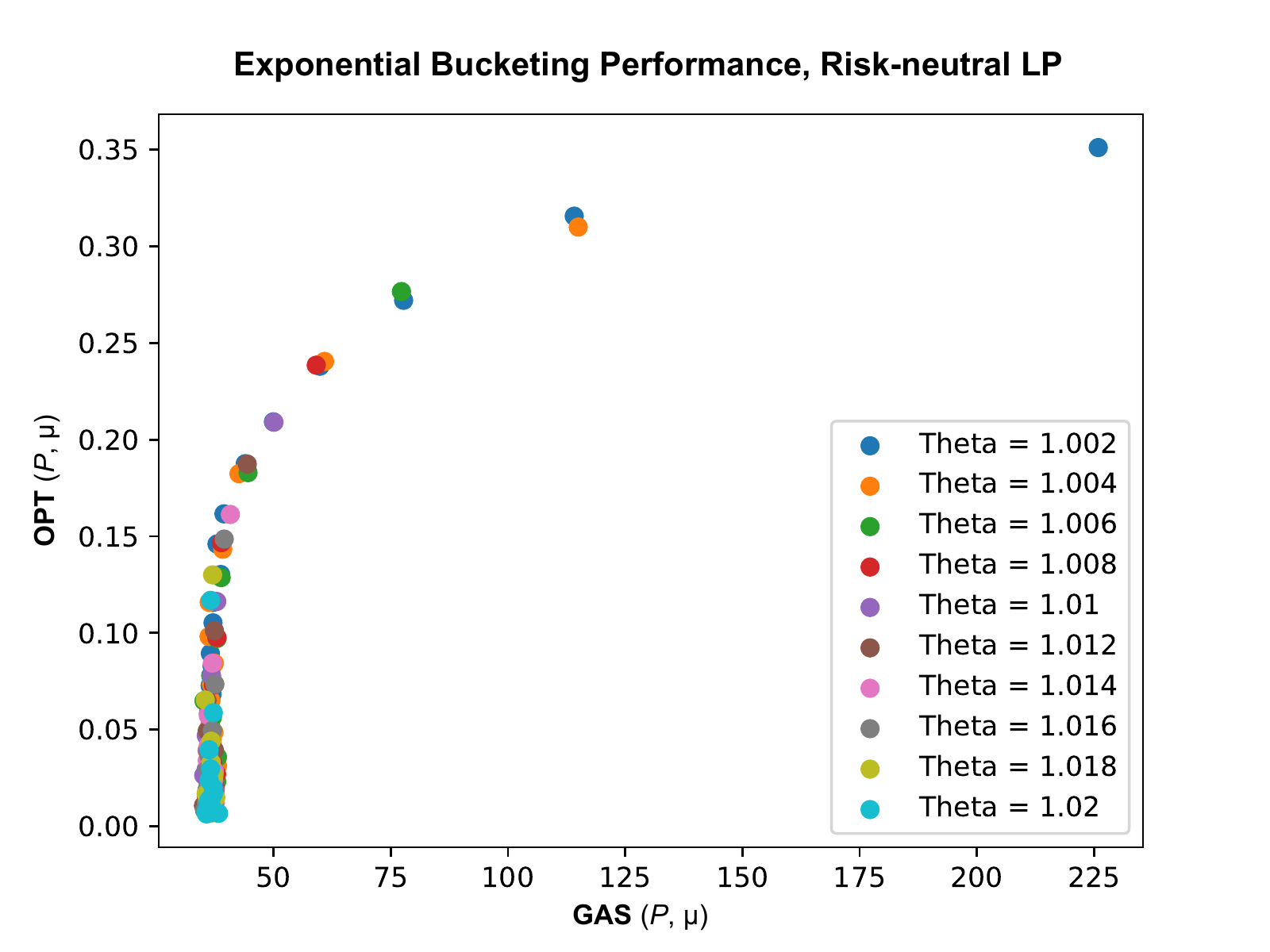}
         \includegraphics[width=0.45\textwidth]{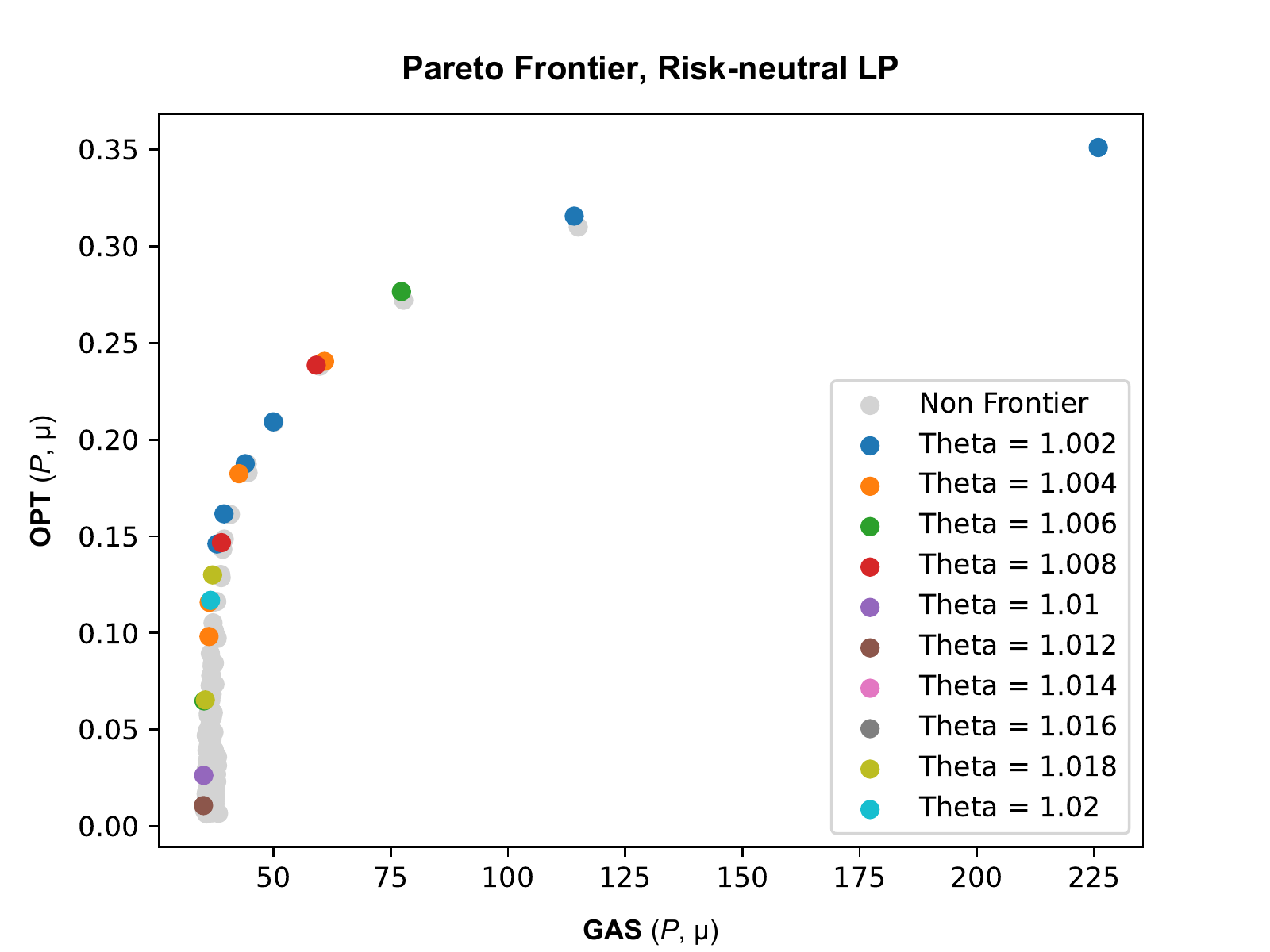}
         \includegraphics[width=0.45\textwidth]{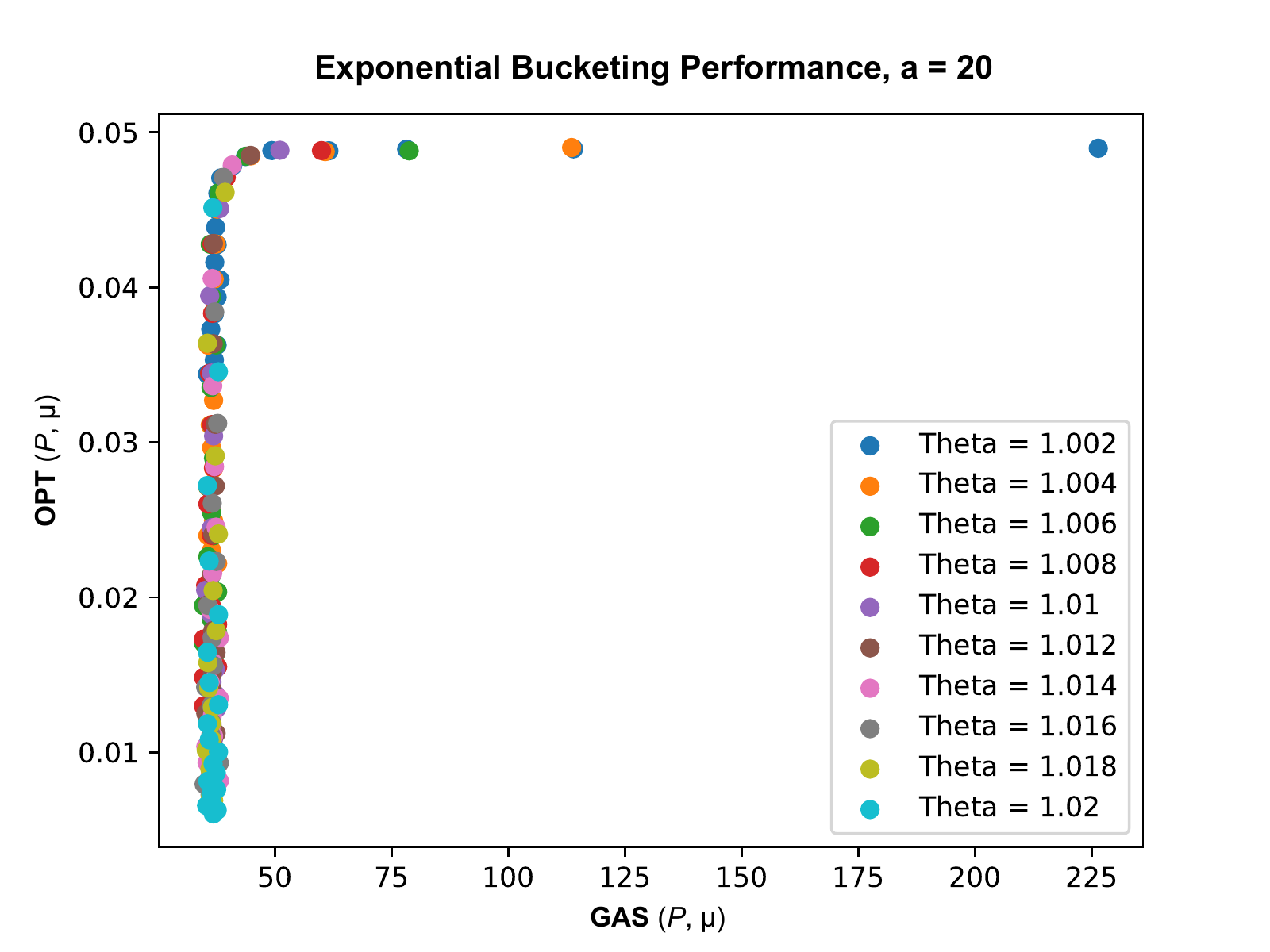}
         \includegraphics[width=0.45\textwidth]{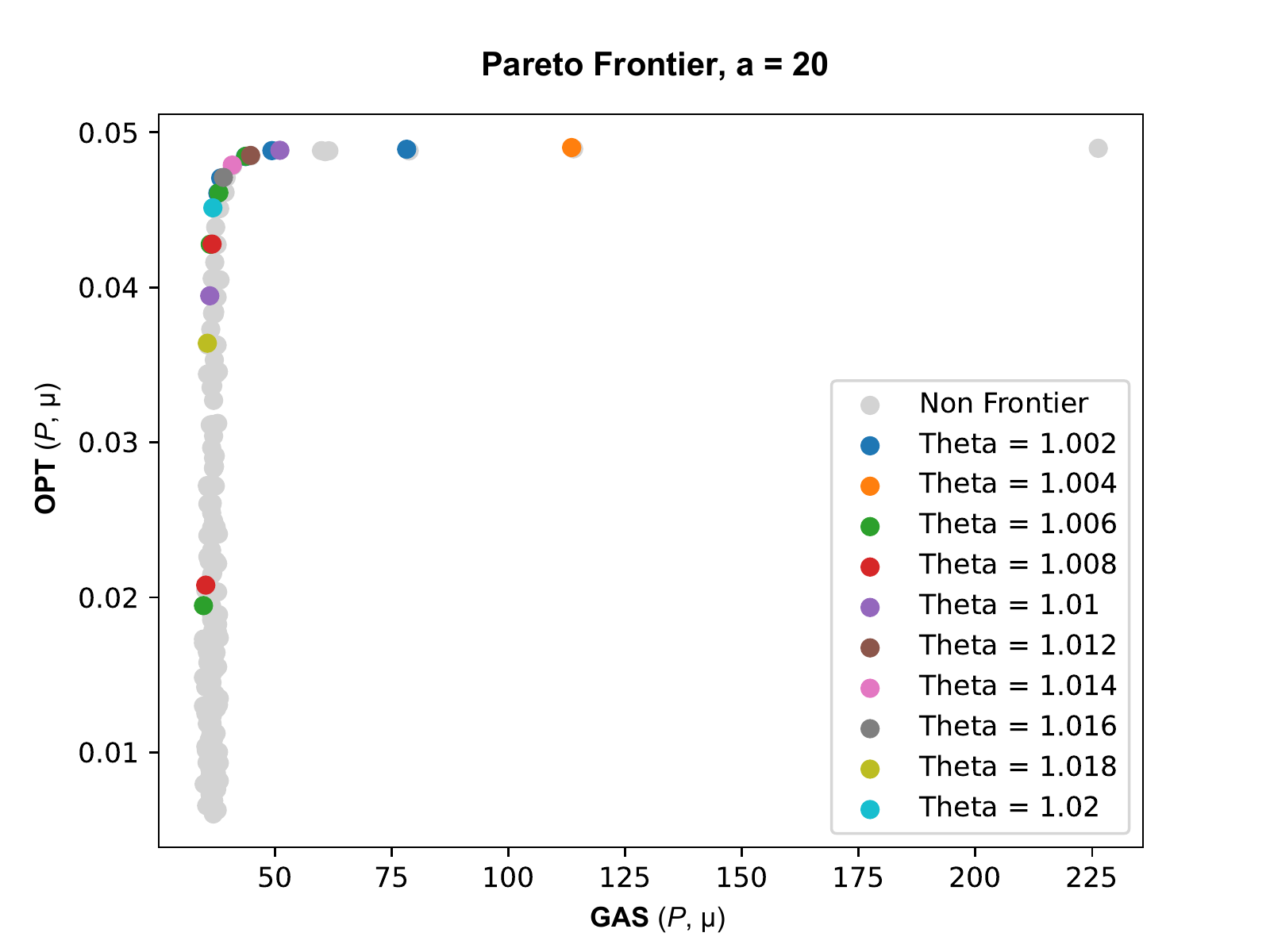}
    \caption{LP PnL vs.~Trader gas fees for high volatility regime $(W,k,\lambda,\gamma) = (7,15,0.003,0.01)$, LPs with risk parameter $a\in \{0,20\}$, and different $(\theta,\Delta)$-exponential bucketing schemes.  Each point corresponds to a specific $(\theta,\Delta)$, varying $\Delta \in \{1,\dots,20\}$ and with color  indicating $\theta$ value. The right column  highlights only the bucketing schemes on the Pareto Frontier.
    \label{fig:Pareto-high-vol}}
\end{figure}

In Figures~\ref{fig:Pareto-low-vol} and~\ref{fig:Pareto-high-vol}, we plot the OPT-GAS Pareto frontier for low volatility and high volatility contract price regimes, respectively. The main results of Sections~\ref{sec:Pareto-results} and~\ref{sec:riskaverse-results} are robust to variations in contract-price volatility. Most importantly, we see that multiple  different  bucketing scheme continue to lie on the Pareto frontier, again  showing the importance of allowing a  diverse set of price bucketing schemes for contract design, this extending to  different volatility regimes. 

Furthermore, we also see that for all levels of volatility, when risk-aversion increases, Pareto curves become more steep. In addition, the Pareto curves are also steeper at higher levels of volatility when risk-aversion is maintained constant. 
This confirms that different exponential bucketing schemes give rise to a wider spread of expected PnL for risk-averse LPs,
with similar gas costs, 
and with this this spread increasing in higher volatility contract-price regimes.

\subsubsection{Implications for Risk-averse LPs}

We also confirm that  both expected PnL and the standard deviation of PnL are decreasing functions in the risk-aversion parameter $a$ of an LP for each of the low-volatility and high-volatility contract-price regimes (Figure~\ref{fig:utility-std-vs-a-volatilities}). The main difference between volatility regimes is that the overall spread of PnL earned (as a function of risk-aversion) is larger for high volatility regimes. 

The fact that expected PnL and standard deviation of PnL are decreasing as functions of $a$ for all volatility regimes is to be expected for the constant absolute-risk aversion function. As for the fact that larger volatility regimes give rise to larger spreads of expected PnL (as a function of $a$), this is  explained in part by the fact that higher contract price volatility results in more fees earned by an LP (especially through increased non-arbitrage trades).  
\begin{figure}[h]
    \centering
         \includegraphics[width=0.48\textwidth]{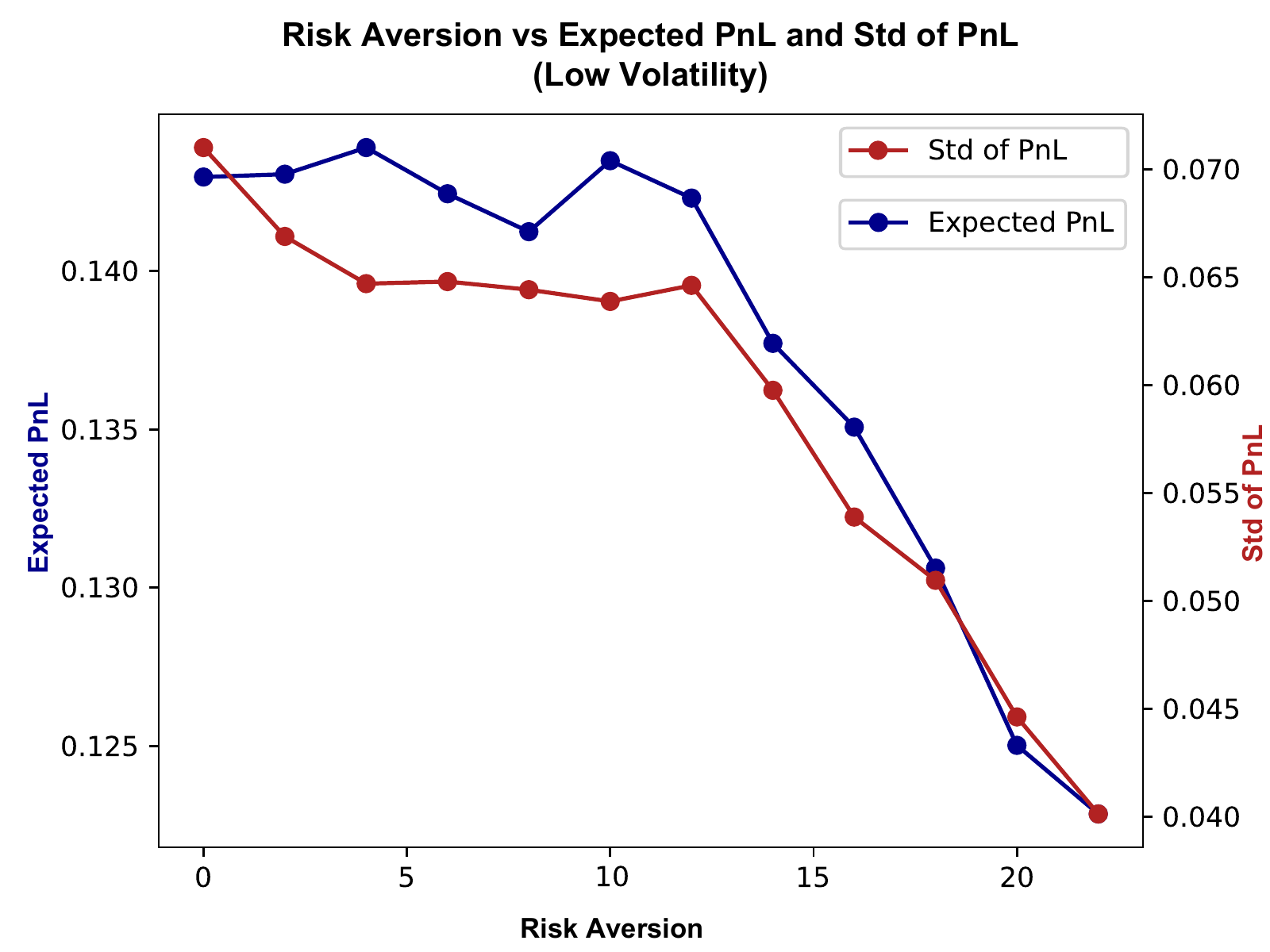}
         \includegraphics[width=0.48\textwidth]{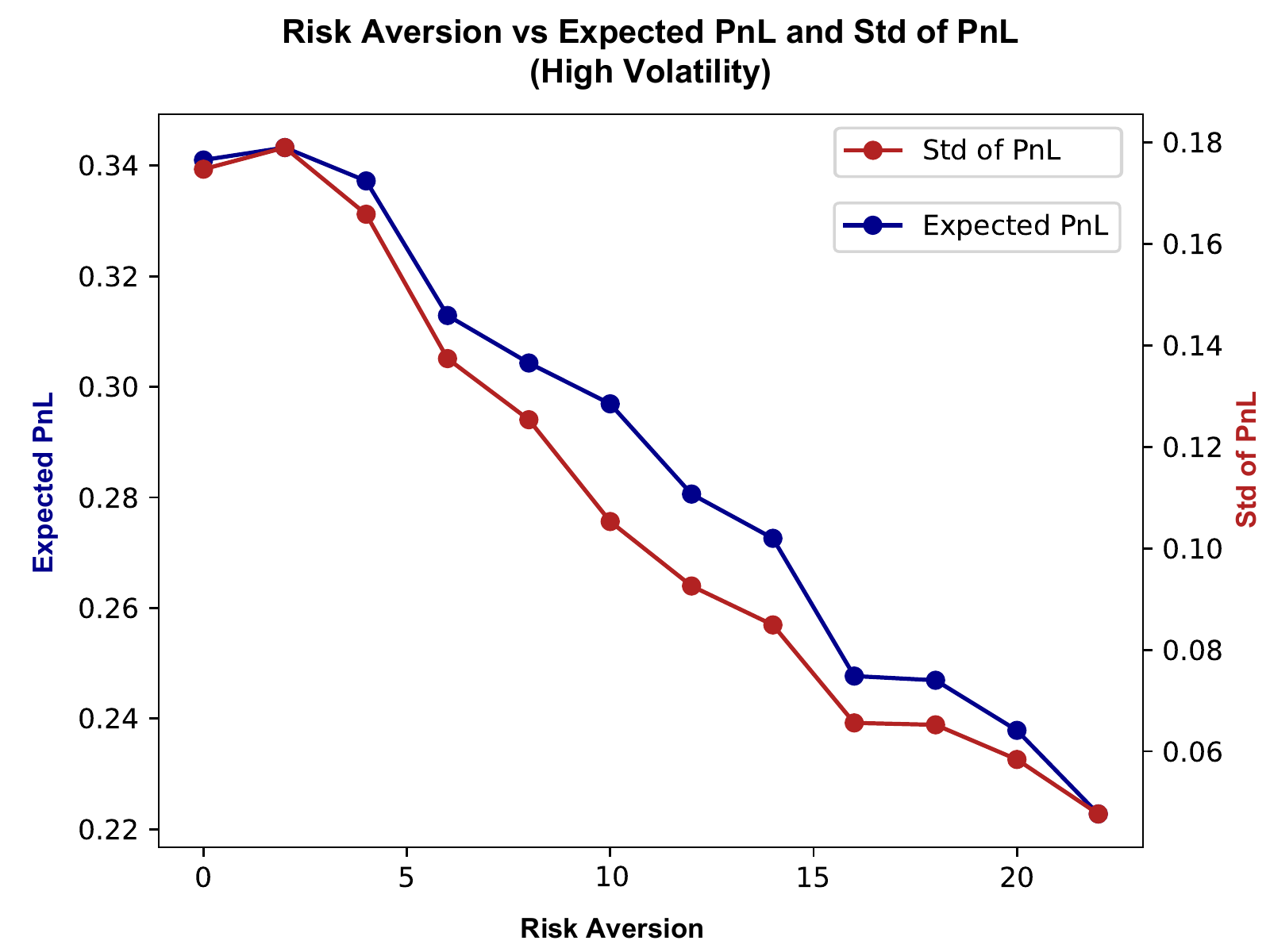}
    \caption{LP PnL and standard deviation of PnL as~risk parameter $a$ varies, for  low volatility regime $(W,k,\lambda,\gamma) = (3,5,0.0002,0.01)$ (left) and high volatility regime  $(W,k,\lambda,\gamma) = (7,15,0.0003,0.01)$ (right), and with exponential-bucketing scheme $\theta = 1.002$, $\Delta = 1$.
    \label{fig:utility-std-vs-a-volatilities}}
\end{figure}

We also plot optimal LP allocations for different risk-aversion values in Figure~\ref{fig:prop_liquidity_allW-volatilities}. The most interesting observation is that the spread of an LP's optimal allocation is more sensitive to risk-aversion in higher volatility contract-price regimes than lower volatility regimes. With the same  risk-aversion, an LP operating in a high-volatility regime has a more dispersed optimal liquidity allocation relative to operating in a low-volatility regime. This phenomenon makes  sense, as narrow liquidity allocations run a larger risk on missing out on transaction fees in a higher volatility contract-price regime. 
\begin{figure}[t]
    \centering
         \includegraphics[width=0.45\textwidth]{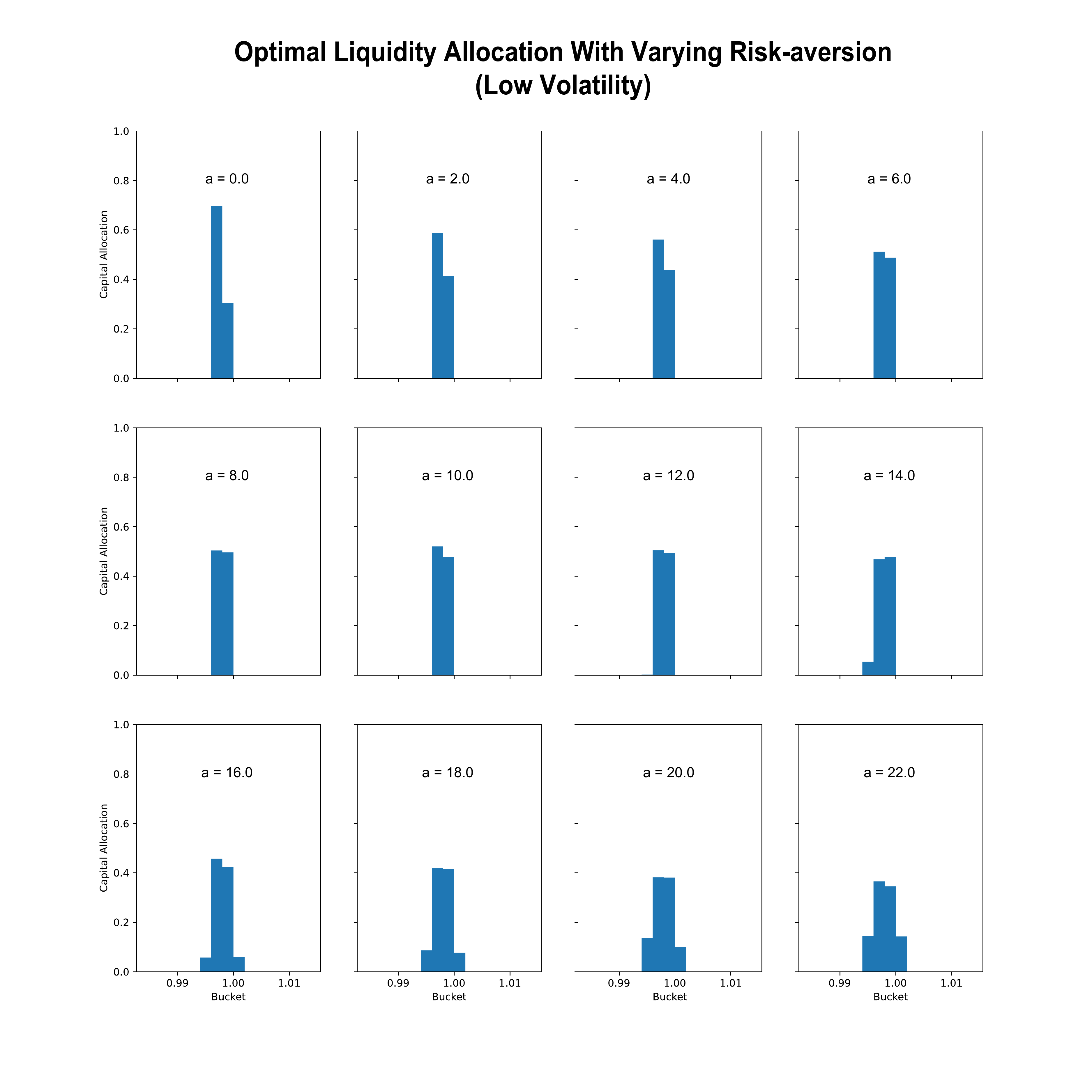}
         \includegraphics[width=0.45\textwidth]{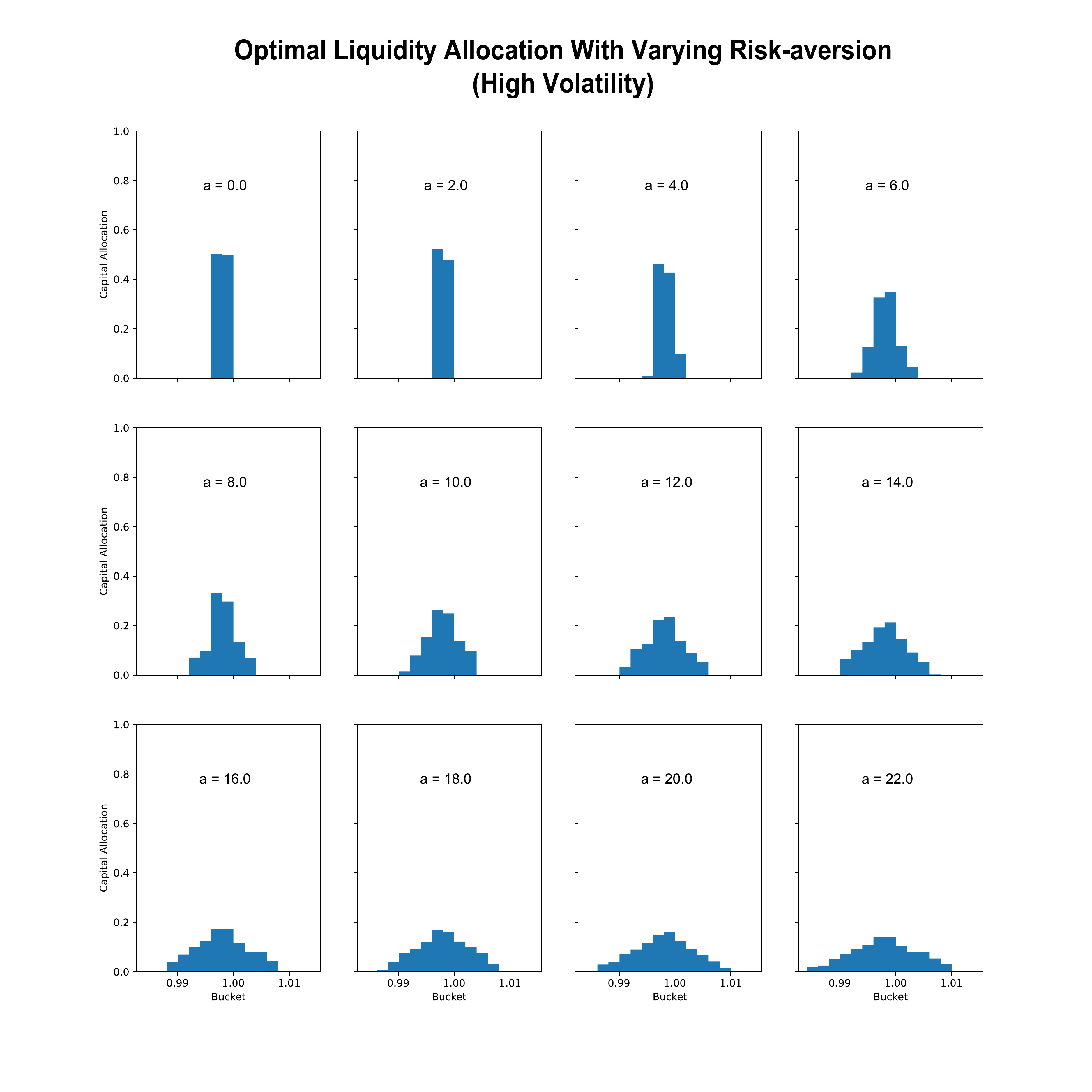}
    \caption{Optimal liquidity allocations for different risk-aversion parameters $a$ and low volatility regime $(W,k,\lambda,\gamma) = (3,5,0.0002,0.01)$ (left) and high-volatility regime $(W,k,\lambda,\gamma) = (7,15,0.0003,0.01)$ (right).  Each bar represents the proportion of an LP's initial capital allocated to a  bucket in the optimal liquidity allocation.  
    \label{fig:prop_liquidity_allW-volatilities}}
\end{figure}

\section{Conclusion}
\setcounter{equation}{0}

In this paper, we have developed a model for the tradeoffs that LPs are faced with in regards to how to optimally allocate liquidity in Uniswap v3 contracts. We  give explicit expressions for LP profit and loss that incorporate profits from fees accrued from traders as well as  impermanent loss from deviations in contract price. 
%
%
We have explored optimal liquidity provision strategies when LPs are endowed with stochastic beliefs over how  prices will evolve as well as differing degrees of risk aversion, developing linear and convex optimization formulations for LP investment and a method for computing expected Gas fees incurred by traders for a given bucketing scheme in a v3 contract. Contract prices in the model are induced by a stochastic model of market prices as well as non-arbitrage and arbitrage trade in the smart contract.

Adopting an empirically-informed belief profiles of  price changes,   we show  that LP PnL is maximized for smaller bucket sizes and a higher volume of non-arbitrage trades (Figures \ref{fig:delta-opt-gas} and \ref{fig:utility-std-vs-k-lambda})
but this comes at a cost of higher gas fees for traders. Viewed as a multi-objective optimization problem where we care about LP PnL as well as trader gas cost, we see the value of providing a more diverse set of price bucketing schemes than those  currently available in Uniswap v3. We have also developed initial insight into the effect of transaction fees on LP PnL, showing that
for Liquidity-independent LP beliefs, higher transaction fees  provide higher LP utility at potentially lower gas costs for traders if adopted together with a suitably optimized  bucketing scheme (Figure \ref{fig:fees-pareto}). As mentioned in Section \ref{sec:modulating-extra-params}, though this may suggest that higher fee rates are desirable from both the perspective of increasing LP PnL and decreasing trader gas fees, we caution that pool desirability for traders is also affected by gas fees, hence more comprehensive models are needed to provide further insight regarding the impact of fees on v3 contracts. Finally, we also study the impact risk-aversion has on LP PnL as well as corresponding optimal liquidity allocations. As seen in Figure \ref{fig:prop_liquidity_allW}, we find that as risk-aversion increases, LPs correspondingly spread their liquidity across larger prices ranges to reduce variance in PnL. Furthermore, this spread is increased for more volatile contract-market price sequences. 

This paper contributes to a growing body of work on differential liquidity provision  in Uniswap v3 contracts. 
We 
leave open many interesting directions to pursue. One would  relax the assumption, critical here in making the analysis  decision-theoretic rather than  game-theoretic, 
 of Liquidity-independent trade flow
(here, each LP can separately optimize for PnL given their belief profile).  One way to relax this assumption is to consider a model in which prices change via  trade dynamics that also incorporate trade volume. For example, if a bucket has a  large amount of liquidity  then moving the price through the bucket  requires a larger trade volume, thereby affecting price movement, at least in the short term. In this scenario, the fact that LP liquidity allocation affects price movements makes their actions interdependent, and thus requires an equilibrium analysis. 
Another means by which LP liquidity allocations can influence price dynamics is by indirectly signaling their own beliefs about how prices may evolve over time. Indeed our work is an initial step in this direction, as we can interpret the optimal liquidity allocation  as a  signal of an LP's belief profile. A fundamental question  is how to interpret the liquidity profile of  LPs with  locked assets in a  v3 contract, and whether this information can signal collective price beliefs of LPs.

In addition, this  work focuses on a family of relatively simple and short-term liquidity allocation strategies, whereby an LP allocates liquidity for fixed amount of time and then removes their liquidity and collects their fees for potential profit and loss. It will be interesting to study more complicated liquidity provision strategies, especially those that involve LPs actively re-allocating liquidity as prices evolve over time. Indeed this thread of work has been explored in Neuder et al.~\cite{NRMP-strategic-LP}, albeit without consideration to impermanent loss, and it will be interesting to link the  gas fees  LPs pay to reallocate liquidity to our notion of the gas cost incurred by traders for different
choices of bucketing regimes.

\section*{Acknowledgments}
We would like to thank anonymous reviewers for helping improve earlier versions of this work, as well as insightful conversations with research teams at Wisdomise, Maverick and Uniswap. This work is supported in part by two generous gifts to the Center for Research on Computation and Society at Harvard University to support research on applied cryptography and society. 
Our work also benefited from Microsoft Azure credits provided by the Harvard Data Science Initiative.

\bibliography{refs}
\bibliographystyle{plain}

\appendix

\section{Omitted Proofs from Section 2}
\label{appendix:sec-2-proofs}
\setcounter{equation}{0}

\Firstprop*
\begin{proof}
We work with the product $(x+x')(y+y') = xy + x'y' + x'y + xy'$.
The first two terms are equal to $L^2$ and $(L')^2$ respectively, since $(x,y) \in \mathcal{V}^{(2)}(L)$ and $(x',y') \in \mathcal{V}^{(2)}(L')$. We have $L = \sqrt{xy}$ and $L' = \sqrt{x'y'}$, and $LL' = \sqrt{xyx'y'}$. However, we can invoke the fact that $P = P'$, to get $x'y = xy'$, and thus $LL' = \sqrt{(x'y)^2} = x'y = xy'$. Putting everything together, we have:

\begin{align}
(x+x')(y+y') = L^2 + 2LL' + (L')^2 = (L+L')^2,
\end{align}

and $(x+x',y+y') \in \cR^{(2)}(L+L')$, as desired. As for  $(y+y')/(x+x') = P$, we have $y = Px$ and $y' = Px'$, and thus $(y+y')/(x+x') = P(x+ x')/(x+x') = P$. The case where we remove $(x',y')$ from the bundle $(x,y)$ is identical.
\end{proof}

\Secondprop*
\begin{proof}
Since we take the limit as $a \rightarrow 0$ and $b\rightarrow \infty$, we can suppose that $a < P < b$, in which case $\mathcal{V}^{(3)}(a,b,P) = (\Delta^x_{b,P}, \Delta^y_{a,P})$. We have $\lim_{b \rightarrow \infty} \Delta^x_{b,P} = \lim_{b\rightarrow \infty}[1/\sqrt{P} - 1/\sqrt{b}] = 1/\sqrt{P}$, and  $\lim_{a \rightarrow 0} \Delta^x_{a,P} = \lim_{a \rightarrow 0}[\sqrt{P} - \sqrt{a}] = \sqrt{P}$. Altogether, we have $\lim_{a \rightarrow 0,  b \rightarrow \infty} \mathcal{V}^{(3)}(a,b,P) = [1/\sqrt{P}, \sqrt{P}] =  \mathcal{V}^{(2)}(P)$.
\end{proof}

\Thirdprop*

\begin{proof}
The correspondence to the v2 revenue curve follows directly from  Definition \ref{def:v3-reserve-curve}. For the rest of the proof, we notice that for $(w,z) \in \mathcal{R}^{(2)}(L)$, we can use the fact that $w z = L^2$, to express the price $P =z/w$ in terms of $w$ or $z$ only: $P = L^2/w^2 = z^2/L^2$. We have  $\phi_{a.b}(x,y) \in \mathcal{R}^{(2)}(L)$, and as a point on the v2 reserve curve, it has an associated price  $P = \phi^B_{a,b}(y)/\phi^A_{a,b}(x)$, which we now know is equivalent to $L^2/\phi^A_{a,b}(x)^2$. Thus, when we treat $P$ as a function of $x$, we see that it is a decreasing, injective function. Furthermore, it is easy to check that at extremal bundles on the v3 reserve curve, i.e., $(\Delta^x_{b,a},0), (0,L\Delta^{y}_{a,b}) \in \mathcal{V}^{(3)}(L,a,b)$, we get $P$ values of $a$ and $b$ respectively. It thus follows that for all $(x,y) \in \mathcal{V}^{(3)}(L,a,b)$, $P = \phi^B_{a,b}(y)/\phi^A_{a,b}(x) \in [a,b]$. This, along with Definition \ref{def:a-b-value}, gives  $\mathcal{V}^{(3)}(L,a,b,P) = (L\Delta^x_{b,P}, L\Delta^y_{a,P})$. Let us focus on the first term in the bundle: $L\Delta^x_{b,P} = L/\sqrt{P} - L/\sqrt{b}$. From before, we know that $P = L^2/\phi^A_{a,b}(x)^2$, hence $\sqrt{P} = L/\phi^A_{a,b}(x)$. It follows that $L\Delta^x_{b,P} = \phi^A_{a,b}(x) - L/\sqrt{b} = x$ as desired. The same argument holds for $L \Delta^y_{a,b}$,  albeit using  substitution $P = \phi^B_{a,b}(y)^2/L^2$.
\end{proof}

\Fourthprop*

\begin{proof}
We recall that $\cV^{(3)}$ is linear in $L$, hence it suffices to prove the statement for $L = 1$. We begin by considering the scenario where 
$a < P < c < b$, in which case $\cV^{(3)}(a,c,P) = (\Delta^x_{c,P},\Delta^y_{a,P})$ and $\cV^{(3)}(c,b,P) = (\Delta^x_{b,c},0)$. 

\begin{align}
    \cV^{(3)}(a,c,P) + \cV^{(3)}(c,b,P) &= (\Delta^x_{c,P},\Delta^y_{a,P}) + (\Delta^x_{b,c},0)   \\ 
    &= (\Delta^x_{b,c} + \Delta^x_{c,P}, \Delta^y_{a,P}) \nonumber \\
    &= (\Delta^x_{b,P}, \Delta^y_{a,P}) \nonumber \\
    &= \cV^{(3)}(a,b,P) \nonumber
\end{align}

The scenario where $a<c<P<b$ is almost identical to that above, hence we continue to the case where $P < a < c <b$:

\begin{align}
    \cV^{(3)}(a,c,P) + \cV^{(3)}(c,b,P) &= (\Delta^x_{c,a},0) + (\Delta^x_{b,c},0)   \\ 
    &= (\Delta^x_{b,c} + \Delta^x_{c,a}, 0) \nonumber \\
    &= (\Delta^x_{c,a}, 0) \nonumber \\
    &= \cV^{(3)}(a,b,P) \nonumber
\end{align}

The final case where $a < c < b < P$ is almost identical.

\end{proof}

\section{Omitted Proofs from Section 3}
\label{appendix:sec-3-proofs}.                      
\setcounter{equation}{0}

\Fifthprop*

\begin{proof}
For the entirety of this proof we focus on the claims for v3 impermanent loss. The results for v2 hold for the same reason that they hold for v3 in the special case when the beginning and end price $P,P'$ lie in the bucket for which liquidity has been obtained. 

Since $IL^{(3)}(\ell,P,P')$ is a linear combination of $IL^{(3)}(\ell_i,B_i,P,P')$ terms, it suffices to show that $IL(\ell_i,B_i,P,P')$ is linear in $\ell_i$. We recall that in Section \ref{sec:v3-description}, we showed that $\mathcal{V}^{(3)}(L,a,b,P) = L \cdot \mathcal{V}^{(3)}(a,b,P)$. Furthermore, we also recall we defined $\mathcal{B}((x,y),P) = Px + y$ as the token $B$ worth of a bundle $(x,y)$ when token $A$ has price $P$. Combining these two facts, it is clear that 

\begin{align}
\mathit{IL}^{(3)}(\ell_i,B_i,P,P') = \ell_i(v^{(3)}_h(1,B_i,P,P') - v^{(3)}_p(1,B_i,P')) = \ell_i \cdot \mathit{IL}(1,B_i,P,P'),
\end{align}
which establishes our desired linearity.

As for the remainder of the proposition, we assume that market price and contract price coincide, hence we slightly abuse notation going forward, so that $P = P_c = P_m$ and $P' = P_c' = P_m'$. To prove non-negativity of impermanent loss, we will show that $IL^{(3)}(\vect{\ell},P,P') = \sum_{i=-m}^n IL^{(3)}(\ell_i,B_i,P,P') \geq 0$ by showing that it is always the case that $IL^{(3)}(\ell_i,B_i,P,P') \geq 0$. We can expand the left hand side of the inequality to see that this boils down to showing that $v_h^{(3)}(\ell_i,B_i,P,P') \geq v_p^{(3)}(\ell_i,B_i,P')$. which given the linearity of each of these terms in $\ell_i$, is further equivalent to showing that $v_h^{(3)}(1,B_i,P,P') \geq v_p^{(3)}(1,B_i,P')$. We prove this statement by showing that the function $v_p^{(3)}(1,B_i,P')$ is  differentiable and concave in $P'$, and that $v_h^{(3)}(1,B_i,P,P')$ is   the tangent to $v_p^{(3)}(1,B_i,P')$ at $P' = P$, hence the inequality holds. We can expand the expression for $v^{(3)}_p(1,B_i,P')$ as follows:

\begin{align}
    v^{(3)}_p(1,B_i,P') &= 
    \begin{cases}
        P' \cdot \Delta^x_{b_i,a_i}  & \quad \mbox{if $P' < a_i$} \\
        \Delta^y_{a_i,b_i}  & \quad  \mbox{if $P' > b_i$} \\
        P' \cdot \Delta^x_{b_i,P'} + \Delta^y_{a_i,P'}  & \quad  \mbox{if $P' \in [a_i,b_i]$}.
    \end{cases}
\end{align}

We begin by considering the third case of the piece-wise definition above. I.e. $P' \in [a_i,b_i]$, which in turn gives $v^{(3)}_p(1,B_i,P') = P' \cdot \Delta^x_{b_i,P'} + \Delta^y_{a_i,P'}$. If we simplify the expression, we obtain: 
\begin{align}
    v_p^{(3)}(B_i,P') &= P' \cdot \Delta^x_{b_i,P'} + \Delta^y_{a_i,P'}  \\
    &= P' \left( \frac{1}{\sqrt{P'}} - \frac{1}{\sqrt{b_i}} \right) + (\sqrt{P'} - \sqrt{a_i})  \nonumber \\
    &= - \frac{1}{\sqrt{b_i}}P' + 2\sqrt{P'} - \sqrt{a_i} \nonumber 
\end{align}
From this, we obtain the first and second derivatives of the expression:
\begin{align}
    \frac{d}{dP'} v_p^{(3)}(B_i,P') = \frac{1}{\sqrt{P'}} - \frac{1}{\sqrt{b_i}} = \Delta^x_{b_i,P'} , \qquad \frac{d^2}{dP'^2} v_p^{(3)}(B_i,P') = - \frac{1}{2P'^{3/2}}.
\end{align}
From this, we have that $v_p^{(3)}(B_i,P')$ is differentiable for all $P'$, as derivatives match over the piece-wise definition of the function, furthermore, the only non-linear component (when $P' \in [a_i,b_i]$) is smooth and concave from the negative second derivative above.

Finally, we expand the expression for $v_h^{(3)}(1,B_i,P,P')$: 
\begin{align}
    v^{(3)}_h(1,B_i,P,P') &= 
    \begin{cases}
        P' \cdot \Delta^x_{b_i,a_i}  & \quad \mbox{if $P < a_i$} \\
        \Delta^y_{a_i,b_i}  & \quad  \mbox{if $P > b_i$} \\
        P' \cdot \Delta^x_{b_i,P} + \Delta^y_{a_i,P}  & \quad  \mbox{if $P \in [a_i,b_i]$}.
    \end{cases}
\end{align}

This is linear in $P'$ and that $v_h^{(3)}(1,B_i,P,P) = v_p^{(3)}(1,B_i,P)$. To show that this is tangent to $v_p^{(3)}(1,B_i,P')$ at $P' = P$, it suffices to consider where the initial price $P$ lies relative to $a_i\geq b_i$. In all cases though, from the above we see that the gradient of $v_h^{(3)}(1,B_i,P,P')$ as a function of $P'$ matches up with the gradient of $v_p^{(3)}(1,B_i,P')$ at $P'=P$, hence $v_h^{(3)}(1,B_i,P,P')$ is indeed tangent at $P' = P$ as desired. 
\end{proof}

\section{Example Uniswap Dynamics}
\label{appendix:example-uniswap-dynamics}
\setcounter{equation}{0}

\subsection{Example v2 Dynamics}
\label{appendix:example-v2-dynamics}
In all that follows, we suppose that a Uniswap v2 contract has been set up to trade between token $A$ and token $B$. In addition, we assume that the trade fee rate is given by $\gamma = 0.5$.

\paragraph{Initial Liquidity}
A first LP, denoted by $LP_1$ provides a token bundle given by $(x,y) = (10,10)$ to initialize the contract's bundle state, which must necessarily also be $(10,10)$. The corresponding liquidity-price state of this contract is thus given by $(10,1)$ and we denote the liquidity units owned by this LP with $L_1 = 10$. 

\paragraph{Moving Price from $P=1$ to $P=\frac{1}{4}$}
A trader sends $\Delta x = 20$ units of token $A$ to the contract. Of this, $\gamma \Delta x = 10$ is skimmed for liquidity providers, and since $LP_1$ is the only provider, they receive the entirety of this amount. The remaining $(1-\gamma)\Delta x = 10$ is used to trade with the contract. The token bundle state of the contract changes to $(20,5)$, with a corresponding liquidity-price state of $(10,\frac{1}{4})$. This implies that the trader receives $-\Delta y = 10 - 5 = 5$ units of $B$ tokens in return for the $\Delta x = 20$ units of $A$ tokens they sent. 

\paragraph{$LP_2$ enters the contract}
A new liquidity provider, $LP_2$ wishes to provide $L_2 = 40$ units of liquidity given the current contract price $P = \frac{1}{4}$. To do so, they must send a token bundle consisting of $\cV^{(2)}(L_2,P) = (80,20)$ to the contract. Upon doing so, the contract's token-bundle state becomes $(100,25)$ and the liquidity-price state becomes $(50,\frac{1}{4}) = (L_1+L_2,P)$. 

\paragraph{Moving Price from $P = \frac{1}{4}$ to $P = 25$}
Suppose a trader sends $\Delta y = 400$ units of $B$ tokens to the contract. Of this $\gamma \Delta y = 200$ are skimmed for liquidity providers. $LP_1$ receives $\frac{L_1}{L_1+L_2}200 = 40$ $B$ tokens and $LP_2$ receives the remaining $160$ $B$ tokens. The remaining $(1-\gamma)\Delta y = 100$ $B$ tokens are used to move the token-bundle state of the contract along the reserve curve $\cR^{(2)}(50)$. The token-bundle state of the contract changes to $(10,250)$, with a corresponding liquidity-price state of $(50,25)$. This implies that the trader receives $-\Delta x = 50 - 10 = 40$ $A$ tokens in return for the $\Delta y = 200$ $B$ tokens they sent the contract.

\paragraph{$LP_1$ exits with their Liquidity}
We recall that $LP_1$ has $L_1 = 10$ units of liquidity in the contract. Suppose they remove this liquidity at  price $P = 25$. This means that they receive a token bundle consisting of $\cV^{(2)}(L_1,P) = (2,50)$. Notice that at the contract price $P = 25$, the token $B$ worth of this bundle is $\cV^{(2)}((2,50),25) = 100$ $B$ tokens. On the other hand, the initial bundle they had provided to create this allocation (when price was $P = 1$) consisted of $(10,10)$. The token $B$ worth of this bundle at the given price $P = 25$ is given by $\cB((10,10),25) = 260$. In this example, $LP_1$ has suffered an impermanent loss of $260-25 = 245$ units of $B$ tokens.

\subsection{Example v3 Dynamics}
\label{appendix:example-v3-dynamics}

Hopefully this example is enough to walk readers through the intricacies of liquidity provision, fees, price dynamics, etc. In what follows we will follow closely the acitivity of 3 liquidity providers as they provide and remove liquidity at different price intervals. We will also see what fees they accrue and how the value of their assets change as traders interact with the system. In all that follows, we suppose that the contract fee rate is given by $\gamma = 0.5$.

\paragraph{Initial Liquidity.} Our first liquidity provider, which we denote $LP_1$, provides $L_1 = 60$ units of liquidity over the price range $I_1 = [a_1,b_1] = [1/16,16]$. Furthermore, we suppose that they have allocated a bundle at price $P = 1$. This means that they have sent the contract the following bundle:
$$
\cV^{(3)}(L_1,a_1,b_1,P) = (45,45).
$$
Since $LP_1$ is the only liquidity provider and $P \in [a_1,b_1]$, it follows that they are the only active LP in the system. Furthermore, the minimal active price interval, $[a^*,b^*]$ is also trivially $[a_1,b_1]$. Similarly, $L = L_1$ is the total amount of active liquidity, and the contract's active bundle is the same as $LP_1$'s active bundle, which as we have seen is $(45,45)$. 

\paragraph{Moving Contract Price from $P=1$ to $P=9$.}
A trader sends $\Delta y = 240$ units of token $B$ to the contract. Given the fee rate, $\gamma$, it follows that $\gamma 240 = 120$ units of token $B$ are proportionally allocated to all active liquidity providers, which in this scenario is only $LP_1$. The remaining 120 units of token $B$ are to be used for trading via the v3 reserve curve given by $\cR^{(3)}(L,a^*,b^*)$. The resulting active bundle of the system changes from $(45,45)$ to $(5,165)$, where the latter can be verified to also lie on $\cR^{(3)}(L,a^*,b^*)$. In fact, the corresponding point on the virtual reserve curve is $\phi_{a^*,b^*}(5,165) = (20,180)$, for which it is simple to verify that the contract price is indeed $P=9$. In concrete terms, this means that the trader received $45 - 5 = 40$ $A$ tokens in exchange for the $240$ $B$ tokens they sent to trade and that $LP_1$ received $120$ $B$ tokens in fees for their liquidity provision.

\paragraph{$LP_2$ enters the contract.}
A new liquidity provider, which we dub $LP_2$, wishes to provide $L_2 = 120$ units of liquidity over the price range $I_2 = [a_2,b_2] = [1/25,4]$. Given the fact that the current contract price is $P = 9$, which is above their price interval, this means that they need to provide a token bundle which only consists of $B$ tokens. More specifically, they send the following bundle:
$$
\cV^{(3)}(L_2,a_2,b_2,P) = (0,216)
$$
which has 216 $B$ tokens to be exact. Notice however that the minimal active price interval is now given by $[a^*,b^*] = [4,16]$, for at a price of $P= 4$, $LP_2$'s liquidity becomes active. At the current price $P = 9$, only $LP_1$ has active liquidity, which in turn implies that the total active liquidity in the contract is given by $L = L_1$. 
At the same time, the contract's active bundle is now different from $LP_1$'s overall assets in the pool for the minimal active price interval has changed. The active bundle of the contract is given by $\cV^{(3)}(L,a^*,b^*,9) = (5,60)$, all of which belongs to $LP_1$.  

\paragraph{Moving Contract Price from $P = 9$ to $P = 1/16$.}
In what follows, let us assume that a trader has sent $\Delta x = 1280$ units of token $A$ to the contract. As we will see shortly, this will indeed move the contract price from $P = 9$ to $P = 1/16$, however such a price movement must necessarily involve a change in active liquidity, for the price movement traverses the price $P = 4$, where the $L_2$ units of $[a_2,b_2]$-liquidity of $LP_2$ become active. For this reason, we break up the $\Delta x$ into two smaller trades: $\Delta x_\alpha = 20$ and $\Delta x_\beta = 1260$, each consisting of $A$ tokens to be sent to the contract by the trader.

For the first trade, we recall that the active price interval is given by $[a^*,b^*]$ and that the total active liquidity in the contract is $L = L_1$. Furthermore, the active bundle of the contract is also given by $(5,60) \in \cR^{(3)}(L,a^*,b^*)$. Of the $\Delta x_\alpha = 20$ $A$ tokens sent for trading, a $\gamma$ portion is accrued as fees. This amounts to $10$ $A$ tokens, which go entirely to $LP_1$. The remaining $(1-\gamma)\Delta x_\alpha = 10$ $A$ tokens are used for trading along $\cR^{(3)}(L,a^*,b^*)$. This changes the active contract bundle to $(15,0)$, which means that the trader receives 60 $B$ tokens for such an exchange. We also notice that $(15,0) \in \cV^{(3)}(L,a^*,b^*)$ corresponds to the bundle $\phi_{a^*,b^*}(15,0) = (30,120) \in \cR^{(2)}(L)$ on the virtual reserve curve. This bundle can easily be verified to exhibit a contract price of $4 = a^*$, which means that to continue trading the remaining $\Delta x_\beta = 1260$ $A$ tokens, the contract must update its set of active $LP$s. 

As we continue trading from price $P=4$, it is straightforward to see that the new active price interval is given by $[a^*,b^*] = [1/16,4]$, and that both $LP_1$ and $LP_2$ are active liquidity providers. This in turn means that $L = L_1 + L_2$ is the new total active liquidity in the contract. In addition, contract's active bundle can be expressed by:
$$
\cV^{(3)}(L,a^*,b^*,P) = (0,315).
$$
This bundle  is the aggregation of the active bundles of $LP_1$ and $LP_2$, which are each given by $\cV^{(3)}(L_1,a^*,b^*,P) = (0,105)$ and $\cV^{(3)}(L_2,a^*,b^*,P) = (0,210)$ respectively. Now we can return to the question of trading $\Delta x_\beta = 1260$ $A$ tokens. As before, $\gamma \Delta x_\beta = 630$ is taken as fees for active $LP$s. Notice however, that in this case we have 2 active $LP$s, hence this quantity must be split proportionally amongst them. $LP_1$ receives $\frac{L_1}{L} 630 = 210$ $A$ tokens and $LP_2$ receives the remaining 420 $A$ tokens as fees. Now let us consider using the remaining 630 $A$ tokens to change the contract bundle along the v3 reserve curve, $\cR^{(3)}(L,a^*,b^*)$. We can quickly verify that indeed $(630,0) \in \cR^{(3)}(L,a^*,b^*)$, and has a corresponding bundle on the virtual reserve curve given by $\phi_{a^*,b^*}(630,0) = (720,45)$ which can quickly be verified to exhibit a contract price of $P = 1/16$. As the contract bundle went from $(0,315)$ to $(630,0)$ in this second trade, we see that the trader receives $315$ $B$ tokens in exchange for the $\Delta x_\beta$ $A$ tokens they sent for trading. In summary, the trader sent $\Delta x = \Delta x_\alpha + \Delta x_\beta = 1280$ units of $A$ tokens and received $60 + 315 = 375$ $B$ tokens in exchange.

\paragraph{$LP_3$ enters the contract.}
A new liquidity provider, $LP_3$ wishes to provide $L_3 = 180$ units of liquidity over the range $I_3 = [a_3,b_3] = [1/9,36]$. We recall the current contract price is $P = 1/16$, which is below the $LP$'s desired price range. For this reason, $LP_3$ will have to deposit a bundle consisting entirely of $A$ tokens to establish this liquidity allocation. More specifically, they will provide the contract with the following bundle:
$$
\cV^{(3)}(L_3,a_3,b_3,P) = (510,0).
$$
Notice that at the current price $P = 1/16$, the liquidity of $LP_3$ is not active. Indeed it does not become active until a price of $P = 1/9$ is reached. For this reason, the contract has a new minimal active price interval given by $[a^*,b^*] = [1/16,1/9]$. It follows that the contract still has a total active liquidity given by $L = L_1 + L_2$. At the same time, the contract's active bundle is also given by 
$$
\cV^{(3)}(L,a^*,b^*) = (180,0).
$$
This bundle is also the aggregation of the active bundles of $LP_1$ and $LP_2$ over $[a^*,b^*]$, which are given by $\cV^{(3)}(L_1,a^*,b^*,P) = (60,05)$ and $\cV^{(3)}(L_2,a^*,b^*,P) = (120,0)$ respectively. 

\paragraph{Moving Contract Price from $P = 1/16$ to $P = 1$.}
In what follows, we will assume that a trader has sent $\Delta y = 510$ units of token $B$ to the contract to exchange for $A$ tokens. As we will see shortly, this will indeed move the contract price from $P - 1/16$ to $P=1$. Such a price movement however, must necessarily involve changes of active liquidity, for at price $P = 1/9$ the liquidity of $LP_3$ becomes active. For this reason, similar to before, we break up $\Delta y$ into tow smaller trades: $\Delta y_\alpha = 30$ and $\Delta y_\beta  = 480$, each consisting of $B$ tokens to be sent to the contract by the trader. 

For the first trade, we recall that the active price interval is given by $[a^*,b^*]$ and that the total active liquidity in the contract is given by $L = L_1 + L_2$. In addition, as we've previously seen, the active bundle of the contract is given by $(180,0) \in \cR^{(3)}(L,a^*,b^*)$. Of the $\Delta y_\alpha = 30$ tokens, $\gamma \Delta y_\alpha = 15$ are skimmed for fees, which are shared between $LP_1$ and $LP_2$. $LP_1$ receives $\frac{L_1}{L}15 = 5$ $B$ tokens and $LP_2$ receives the remaining 10 $B$ tokens. After such fees are leveed, the remaining $(1-\gamma) \Delta y_\alpha = 15$ $B$ tokens are used to move active bundle along the v3 reserve curve, $\cR^{(3)}(L,a^*,b^*)$. The resulting active bundle is thus $(0,15) \in \cR^{(3)}(L,a^*,b^*)$, which corresponds to the bundle $\phi_{a^*,b^*}(0,15) = (540,60) \in cR^{(2)}(L)$ on the virtual reserve curve. This bundle can easily be verified to exhibit a contract price of $P = 1/9 = a^*$, which means that to continue trading the remaining $\Delta y_\beta = 480$ units of token $B$, the contract must update its set of active $LP$s. 

As we continue trading from $P = 1/9$, it is straightforward to see that the new active price interval is given by $[a^*,b^*] = [1/9,4]$. In this price range all $LP$s are active, hence the total active liquidity is given by $L = L_1 + L_2 + L_3 = 360$. As before, we can also compute the contract's new active bundle as:
$$
\cV^{(3)}(L,a^*,b^*) = (900,0).
$$
This bundle is made of the smaller active bundles of each $LP$, which are given by $\cV^{(3)}(L_1,a^*,b^*) = (150,0)$, $\cV^{(3)}(L_2,a^*,b^*) = (300,0)$, and $\cV^{(3)}(L_3,a^*,b^*) = (450,0)$ respectively for $LP_1$, $LP_2$ and $LP_3$. Now let us focus on the $\Delta y_\beta = 480$ units of token $B$ sent to the contract by the trader. As before, we first skim provider fees, which amount to $\gamma \Delta y_\beta = 240$ $B$ tokens to be split proportionally amongst active $LP$s. In this case, $LP_1$ receives $\frac{L_1}{L}240 = 40$ $B$ tokens, $LP_2$ receives $\frac{L_2}{L} 240 = 80$ $B$ tokens, and $LP_3$ receives the remaining $80$ $B$ tokens. Now let us consider using the remaining $240$ $B$ tokens to change the contract bundle along the v3 reserve curve, $\cR^{(3)}(L,a^*,b^*)$. We can quickly verify that indeed $(180,240) \in \cR^{(3)}(L,a^*,b^*)$, which has a corresponding token bundle on the virtual reserve curve given by $\phi_{a^*,b^*}(180,240) = (360,360)$, which exhibits a price of $P=1$ as desired. The contract bundle changed from $(900,0)$ to $(180,240)$, meaning that the trader receives 720 $A$ tokens in exchange for the $\Delta y_{\beta}$ sent to the contract. Putting everything together, the trader receives 900 $A$ tokens in exchange for 510 $B$ tokens. 

\paragraph{$LP_1$ exits with their liquidity.}
In what follows, we assume that $LP_1$ removes their liquidity from the contract. Before doing so, we recall that after the previous trade, the contract price is given by $P =1$. In addition, the active price interval is still $[a^*,b^*] = [1/9,4]$, and all $LP$s contribute to the active liquidity and active price bundle of the contract. We recall that $LP_1$ has $L_1 = 60$ units of $[a_1,b_1]$-liquidity for $a_1 = 1/16$ and $b_1 = 16$. By removing their liquidity from the contract, they receive a bundle given by:
$$
\cV^{(3)}(L_1,a_1,b_1,P) = (45,45).
$$
Notice that this is the same as the bundle they used to open their liquidity allocation. This is due to the fact that they are withdrawing their liquidity when the contract price is given by $P = 1$, which is the same as the contract price when they created their position. In this example however, removing $LP_1$'s liquidity has no effect on the active price range, What does change however is the active liquidity in the contract, and consequently, the active token bundle of the contract. The contract's active liquidity comes from $LP_2$ and $LP_3$, hence the active liquidity is $L = L_2 + L_3 = 300$. In addition, the active bundle is once again given by:
$$
\cV^{(3)}(L,a^*,b^*) = (150,200).
$$
This can be further split into the active bundle of each $LP$, $LP_2$ has an active bundle given by $\cV^{(3)}(L_2,a^*,b^*) = (60,80)$ and $\cV^{(3)}(L_3,a^*,b^*) = (90,120)$ respectively for $LP_2$ and $LP_3$.

\paragraph{Moving Contract Price from $P = 1$ to $P = 1/4$.}
A trader sends $\Delta x = 600$ $A$ tokens to the contract. $\gamma \Delta x = 300$ $A$ tokens are skimmed as fees for $LP_2$ and $LP_3$ who are both active. $LP_1$ earns $\frac{L_2}{L}300 = 120$ $A$ tokens and $LP_2$ gets the remaining $180$ $A$ tokens as fees. The remaining $(1-\gamma) \Delta x = 300$ $A$ tokens are used to trade against the v3 reserve curve. At price $P = 1$, the active token bundle of the contract is given by $(150,200)$, however the trade shifts the active token bundle to $(450,50) \in \cR^{(3)}(L,a^*,b^*)$, which has a corresponding bundle on the virtual reserve curve given by $\phi_{a^*,b^*}(450,50) = (600,150)$, which clearly has a contract price of $P = 1/4$ as desired. Overall, the trader sent $\Delta x = 600$ $A$ tokens to the contract and received $150$ $B$ tokens in exchange.  

\paragraph{$LP_2$ exits with their liquidity}
After the previous price move, the contract still has the same active price interval $[a^*,b^*] = [1/9,4]$, and the contract's active bundle is $(450,50)$. Furthermore, the current contract price is given by $P = 1/4$. We assume that $LP_2$ wishes to remove their $L_2$ units of $[a_2,b_2]$-liquidity, where $a_2 = 1/25$, and $b_2 = 4$. In this case, $LP_2$ receives the following bundle upon extracting their liquidity:
$$
\cV^{(3)}(L_2,a_2,b_2,P) = (180,36).
$$
Consequently, the active price interval changes to $[a^*,b^*] = [1/9,36]$, which is  the interval over which $LP_3$ lent liquidity, for they are the only remaining $LP$. As a final point, note that the bundle received by $LP_2$ is different from what they used to create their liquidity allocation, which was $(0,216)$. At the current price $P = 1/4$, the bundle $(180,36)$ is worth $81$ $B$ tokens, which is much less than the original $216$ $B$ tokens used to create the liquidity allocation. This is the impermanent loss suffered by $LP$s as the contract price fluctuates.

\section{Empirical Findings on Liquidity Provision}
\label{appendix:empirical-liquidity-dist}
\setcounter{equation}{0}

We focus on providing empirical evidence in favor of our {\it bucket coverage assumption} from Section \ref{sec:trader-gas-fees}. To this end, Figure \ref{fig:emp-bucket-coverage} provides a typical snapshot of locked liquidity in a USDC/ETH v3 pool taken on February 6, 2022. The snapshot was reconstructed from transactions on the Ethereum blockchain. Each bar in the image corresponds to a bucket, and the height of each bar represents the amount of liquidity locked in that bucket. As we can see, buckets near the contract price, which is the bar in red, have different amounts of liquidity. Specific liquidity amounts are shown in Table \ref{tab:emp-bucket-coverage}, where the index of the bucket in the first column represents the bucket position relative to the contract price (which is in bucket $B_0$). This typical difference in liquidity values for buckets around the contract price in turn implies that said buckets' endpoints must in turn be active, providing credence to our bucket coverage assumption. 
\begin{figure}[t]
     \centering
     \begin{subfigure}[b]{\textwidth}
         \centering
         \includegraphics[width=\textwidth]{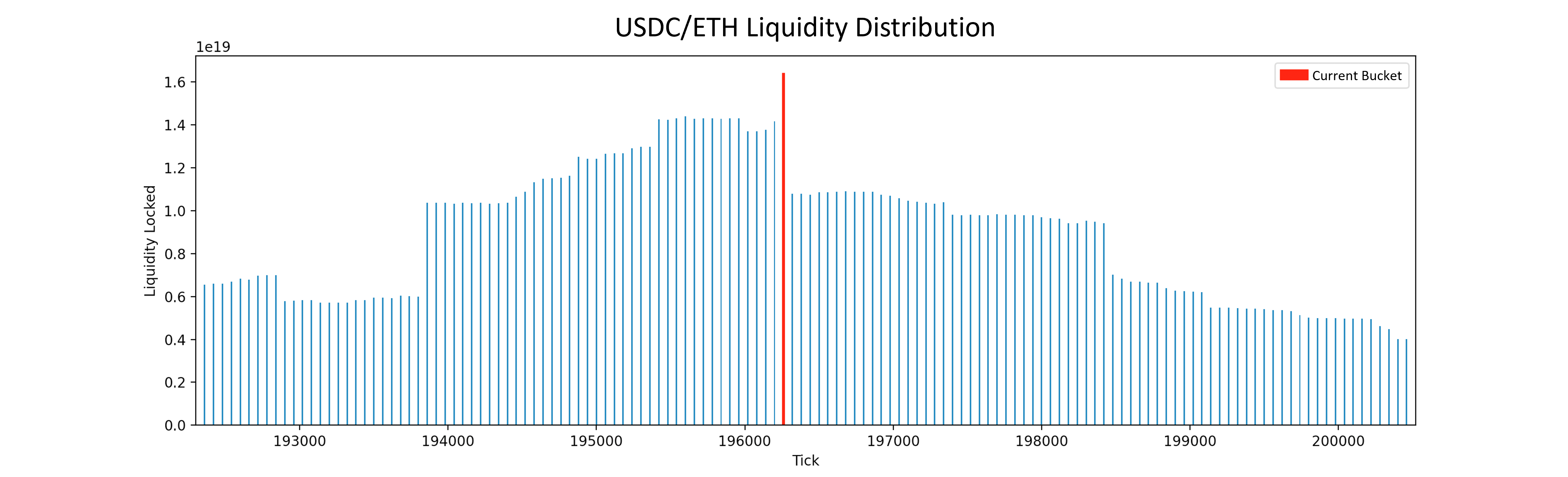}
     \end{subfigure}
     \caption{Uniswap v3 Liquidity Distribution for the top USDC/ETH pool on February 6th, 2022. Each bar represents a bucket, and the height of the bar represents the amount of liquidity locked in the bucket around the current contract price, which is given by the bar in red. As we can see, most buckets near the price have different amounts of locked liquidity, which forcibly means that their endpoints are active ticks, in line with our bucket coverage assumption.}
     \label{fig:emp-bucket-coverage}
\end{figure}

\begin{table}
\begin{center}
\caption {Locked liquidity around the contract price for the top USDC/ETH pool on February 6th, 2022} \label{tab:emp-bucket-coverage} 
\begin{tabular}{||c | c||} 
 \hline
 Bucket & Units of Liquidity Locked \\ [0.5ex] 
 \hline\hline
 $B_{-5}$ & 14308585389682429965 \\ 
 \hline
 $B_{-4}$ & 13699020740164778906 \\ 
 \hline
 $B_{-3}$ & 13704199306520970180 \\
 \hline
 $B_{-2}$ & 13761114721595607988 \\
 \hline
 $B_{-1}$ & 14169372155386979909 \\
 \hline
 $B_{0}$ & 16389750355923557295 \\
 \hline
 $B_{1}$ & 10795735828654461424  \\
 \hline
 $B_{2}$ & 10786362926145864699  \\
 \hline
 $B_{3}$ & 10730695468843515712  \\
 \hline
 $B_{4}$ & 10847573897531124431  \\
 \hline
 $B_{5}$ & 10865194479196829563  \\ [1ex] 
 \hline
\end{tabular}
\end{center}
\end{table}

\section{Maximum Likelihood Estimation (MLE)}
\label{append:MLE}
\setcounter{equation}{0}

We are given a price sequence given by $\vect{p} = (p_1,\dots,p_{n+1})$ where each $p_i \in \mathbb{R}^+$. As per Section \ref{sec:risk-averse-comp-methods}, we model market price evolution as a geometric binomial random walk with two relevant parameters: $\omega$ and $W$. The parameter $\omega$ represents the multiplicative ratio between discrete prices the random walk evolves over, and $W$ is the bounded bandwidth that limits how far prices can deviate in the random walk conditional on a given price. 

It is important to note that empirical price data given by the sequence $\vect{p}$ need not adhere to a single multiplicative ratio, as said prices are only constrained to lie in $\mathbb{R}^+$. However, as in Proposition \ref{prop:binomial-martingle}, we use the approximation that $x_t = \log_\omega (\frac{p_t}{p_{t+1}}) \sim \cN(2Wp-W,2Wp(1-p))$, with $p = \frac{m+2-\sqrt{m^2+4}}{2m}$, where $m = \log(\omega)$. With this normal approximation, we can then perform a maximum likelihood estimate over the sequence $x_1,\dots,x_n$, where each $x_t = \log_\omega (\frac{p_t}{p_{t+1}})$ is assumed to be i.i.d normal with mean $\mu = 2Wp-W$ and variance $\sigma^2 = 2Wp(1-p)$. With this in hand, we can express the likelihood of a choice of $\omega$ and $W$ for the given price sequence $\vect{p}$ as follows:

\begin{align}
L(\omega,W \mid \vect{p}) = \mathbb{P} (x_1,\dots,x_n \mid \omega,W)
\end{align}

As is standard we maximize this expression by taking logarithms and minimizing the negative log-likelihood given by:

\begin{align}
NLL(\omega,W \mid \vect{p}) = 
\frac{n}{2} \mbox{log} (2\pi \sigma^2) + \frac{1}{2\sigma^2} \sum^n_{i=1} (x_i - \mu)^2
\end{align}

\begin{lemma}
\label{lemma:closed-form-MLE}
The unique $W^*$ that minimizes $NLL$ for a fixed $\omega$ is given by:

\begin{align}
W^{*} = \frac{\sum_{i=1}^n x^2_i}{n(p(1-p) + \sqrt{p^2(1-p)^2 + \frac{1}{n}\sum x^2_i (2p-1)^2})}
\end{align}
\end{lemma}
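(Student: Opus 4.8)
The plan is to treat $\omega$ (hence $m = \log\omega$ and $p = \frac{m+2-\sqrt{m^2+4}}{2m}$) as fixed, so that $p$ is a fixed constant in $(0,1)$, and minimize $NLL(\omega, W \mid \vect{p})$ as a one-dimensional function of $W > 0$. Recall that $\mu = 2Wp - W = W(2p-1)$ and $\sigma^2 = 2Wp(1-p)$, so both depend on $W$ in a simple way: $\sigma^2$ is linear in $W$ and $\mu$ is linear in $W$. Substituting these into
\[
NLL = \frac{n}{2}\log(2\pi\sigma^2) + \frac{1}{2\sigma^2}\sum_{i=1}^n (x_i - \mu)^2,
\]
I would expand the sum as $\sum x_i^2 - 2\mu \sum x_i + n\mu^2$, and then write everything explicitly in terms of $W$. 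Writing $S_2 = \sum_{i=1}^n x_i^2$, the term $n\mu^2/(2\sigma^2) = nW^2(2p-1)^2/(4Wp(1-p))$ is linear in $W$, the cross term $-2\mu \sum x_i/(2\sigma^2)$ is independent of $W$ (since $\mu/\sigma^2$ is constant in $W$), and $S_2/(2\sigma^2)$ is proportional to $1/W$; together with the $\frac{n}{2}\log W$ piece from the first term, $NLL$ takes the shape $A W + B/W + \frac{n}{2}\log W + (\text{const})$ for explicit positive constants $A, B$.

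The next step is to differentiate with respect to $W$ and set the derivative to zero: $\frac{d}{dW} NLL = A - B/W^2 + \frac{n}{2W} = 0$, which upon multiplying through by $W^2$ becomes a quadratic in $W$, namely $A W^2 + \frac{n}{2} W - B = 0$. Solving by the quadratic formula and keeping the positive root gives
\[
W^* = \frac{-\frac{n}{2} + \sqrt{\frac{n^2}{4} + 4AB}}{2A}.
\]
I would then substitute the explicit values $A = \frac{n(2p-1)^2}{4p(1-p)}$ and $B = \frac{S_2}{4p(1-p)}$, simplify the discriminant (here the algebra with $(2p-1)^2 = 1 - 4p(1-p)$ is useful to recognize cross-terms), rationalize, and check that the result collapses to the stated form
\[
W^* = \frac{S_2}{n\left(p(1-p) + \sqrt{p^2(1-p)^2 + \frac{1}{n}S_2(2p-1)^2}\right)}.
\]
Finally, to confirm this is the unique minimizer rather than a saddle or maximizer, I would note that on $(0,\infty)$ the function $W \mapsto AW + B/W + \frac{n}{2}\log W$ has second derivative $2B/W^3 - n/(2W^2)$; while this is not globally signed, the first derivative $A - B/W^2 + n/(2W)$ is strictly negative as $W \to 0^+$ and strictly positive as $W \to \infty$, and the numerator $AW^2 + \frac{n}{2}W - B$ of $W^2$ times the derivative is a upward quadratic with exactly one positive root, so the derivative changes sign exactly once, from negative to positive — hence $W^*$ is the unique global minimizer.

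The main obstacle I anticipate is purely the algebraic bookkeeping in the last simplification step: matching the quadratic-formula output to the precise closed form in the statement requires carefully manipulating the discriminant and rationalizing (multiplying numerator and denominator by the conjugate), and it is easy to drop a factor of $n$ or mishandle the $p(1-p)$ versus $(2p-1)^2$ terms. Everything else — fixing $\omega$, reducing to a one-variable calculus problem, taking the derivative, and the uniqueness argument via the sign change of the derivative — is routine.
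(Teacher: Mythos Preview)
Your proposal is correct and follows essentially the same route as the paper: both rewrite $NLL$ with $\mu = W(2p-1)$ and $\sigma^2 = 2Wp(1-p)$, obtain the same first derivative $A - B/W^2 + n/(2W)$ with $A = \tfrac{n(2p-1)^2}{4p(1-p)}$ and $B = \tfrac{S_2}{4p(1-p)}$, solve the resulting quadratic for the unique positive root, and then rationalize to reach the stated closed form. The only minor difference is that the paper verifies the critical point is a minimum by checking $f''(W^*)>0$, whereas you argue global uniqueness via the sign change of the derivative; both are valid, and your argument is in fact slightly cleaner for establishing global (rather than just local) minimality.
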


\begin{proof}
Going forward, we re-write the negative log-likelihood as a function of $W$:

\begin{align}
f(W) = NLL(\omega,W \mid \vect{p}) = \frac{n}{2} \mbox{log} (aW) + \frac{b}{W} \sum^n_{i=1} (x_i - cW)^2
\end{align}
where $a = 4\pi p(1-p), b = \frac{1}{4p(1-p)}, c = 2p-1$. After some collecting terms, the derivative of $f$ has the following form:
\begin{equation}
     f'(W) = - \frac{\sum_{i=1}^n x^2_i}{4p(1-p)}\frac{1}{W^2} + \frac{n}{2}\frac{1}{W} + \frac{n(2p-1)^2}{4p(1-p)}
\end{equation}
We can solve for $f'(W^*) = 0$ under the constraint given by $W^*>0$. We get a unique closed form solution given by:

\begin{align}
W^{*} = \frac{\sum_{i=1}^n x^2_i}{n(p(1-p) + \sqrt{p^2(1-p)^2 + \frac{1}{n}\sum_{i=1}^n x^2_i (2p-1)^2})}
\end{align}

as desired. Furthermore, We can compute the second derivative of $f$:
\begin{equation}
    f''(W) = \frac{ \left(n \sum_{i=1}^n x_i^2 \right)- n^2p(1-p)W}{2np(1-p)W^3}
\end{equation}

Clearly the second derivative depends on the data (i.e. the values of $x_i$), but if we plug the given $W^*$ into the expression, we obtain:
\begin{equation}
    f''(W^{*}) = \frac{\sum_{i=1}^n x^2_i}{2p(1-p)(W^{*})^3} \left( 1-\frac{1}{1+\sqrt{1+\frac{(2p-1)^2\sum_{i=1}^n x^2_i }{np^2(1-p)^2}}}
    \right) > 0
\end{equation}
Hence $W^*$ is a minimum as desired. 
\end{proof}

\subsection{Empirical Results}
In this section we outline our methods for obtaining relevant $\omega$ and $W$ values to be used in the LP belief profiles from Section \ref{LP-belief-model-assumptions}. 

\paragraph{Baseline Price Sequences.}
As mentioned in Section \ref{LP-belief-model-assumptions}, $\omega$ and $W$ both govern the volatility of the stochastic process dictating market prices relevant to an LP. In this regard, we looked historical prices between Ethereum (ETH) and Bitcoin (BTC), as well as historical prices between Ethereum (ETH) and and Tether (USDT). The former pair was chosen as a low volatility representative, as most cryptocurrencies have prices correlated to Bitcoin. The latter pair was chosen for its relatively higher volatility, for USDT is pegged to the dollar. For each of these price pairs, we obtained a baseline price sequence from \href{https://www.binance.com/en/landing/data}{\url{https://www.binance.com/en/landing/data}} of per-minute prices for the month of February 2020.

\paragraph{Sub-sampling from Price Sequences.}
Let us denote the baseline price sequence for the ETH-BTC pair by $X_1',...,X_T'$, and that of the ETH-USDT pair by $Y_1,...,Y_T'$. Our first step of price-processing involves sampling a collection of prices from each sequence at a desired frequency to then apply MLE estimates from the previous section to obtain $\omega$ and $W$ values. To this end, let $t_0 \in \{1,\dots,T\}$, and $g,k \geq 1$ be integers. Let us focus on the ETH-USDT sequence given by $X_i'$. A choice of $(t_0,g,k)$ implies we take a subsequence from $\{X_i'\}_{i=1}^T$ given by $\{X_{t_j}'\}_{j=1}^k$, where $t_j = t_0 + g\cdot j$. Clearly this implies that we impose the constraint $t_0 + g \cdot k \leq T$. Ultimately, we are interested in optimal liquidity allocations from LPs as per the methodology of Section \ref{sec:riskneutral}. As we focus on LP strategies that consist of allocating liquidity over $T$ time-steps of a stochastic process and subsequently removing said liquidity (and extracting fees), we are interested in sub-sampling over smaller time horizons at smaller values of $g$, as it is in this regime where such LP strategies are reasonable. Though this does imply that we are interested in smaller values of $k$, at the same time, too small of values of $k$ would give us poor MLE estimates as per the previous section, as we would  not have many samples to work with.  

\paragraph{MLE Computation.}
Once we choose values of $(t_0,g,k)$, we obtain a price subsequence as per the exposition above, which for ETH-BTC is given by $\{X_{t_j}'\}_{j=1}^k$, where $t_j = t_0 + g\cdot j$. To proceed, assume a fixed value of $\omega > 1$, and use this to generate the sequence $X_1,\dots,X_{k-1}$, where 
$X_j = \log_\omega \left( \frac{X_{t_{j+1}}'}{X_{t_j}'} \right)$. For such a sequence, we can compute $W^*$ as per the closed form of Lemma \ref{lemma:closed-form-MLE}, and from this obtain $NLL(\omega,W^* \mid X_1,\dots,X_{k-1})$. We can repeat this computation for multiple choices of $\omega$ to find an $\omega^*$ such that the negative log-likelihood is minimized via grid search. In summary, once we fix sub-sampling parameters $(t_0,g,k)$, we can obtain MLE estimates $\omega^*,W^*$ as above. 

\paragraph{Results.}
For both price pairs, we used sub-sampling parameters given by $(t_0,g,k) = (0,1,256)$. The reason for this is that we want to computationally explore the impact of price volatility on LP PnL and trader gas fees given an LP belief profile, hence it is important to maintain these sub-sampling parameters fixed over both price pairs of different real-world volatility. Our resulting MLE estimates were $(\omega^*,W^*) = (1.005,3.0607)$ for ETH-USDT and $(\omega^*,W^*) = (1.005,6.7695)$ for ETH-BTC. For reference, Figure \ref{fig:emp-MLE} contains a visualization of the fit of the MLE normal distribution over empirical log price ratios. Ultimately, LP beliefs are constructed via an approximate Geometric Binomial Random Walk, hence we require integral values of $W$. For this reason, our computational results in Section~\ref{sec:computational-results} use $W$ values from the set $\{3,5,7\}$, where the lower and upper bounds on $W$ are precisely informed by our empirical results of this section.

\begin{figure}
    \centering
    \includegraphics[width=0.45\textwidth]{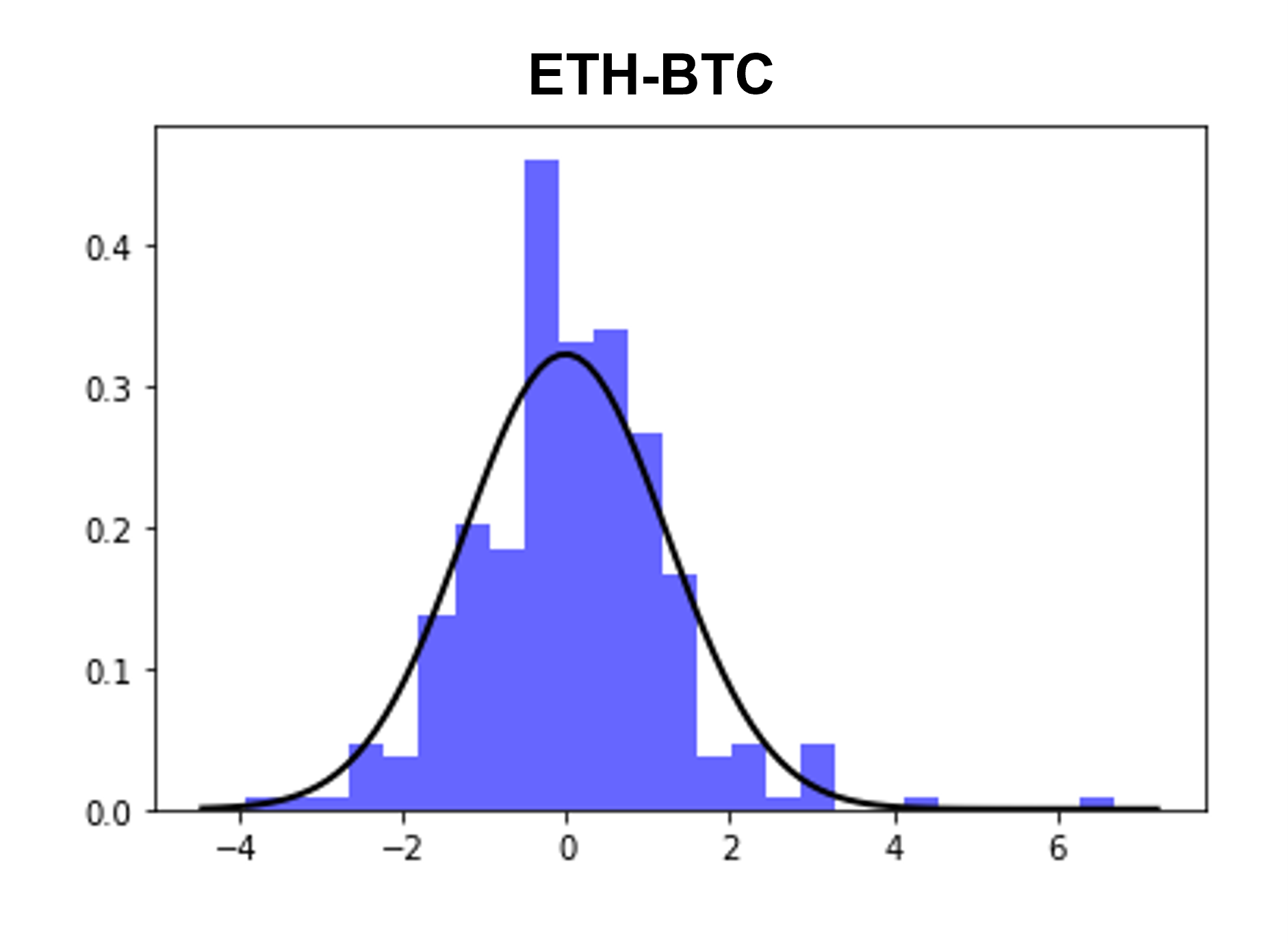}
    \includegraphics[width=0.45\textwidth]{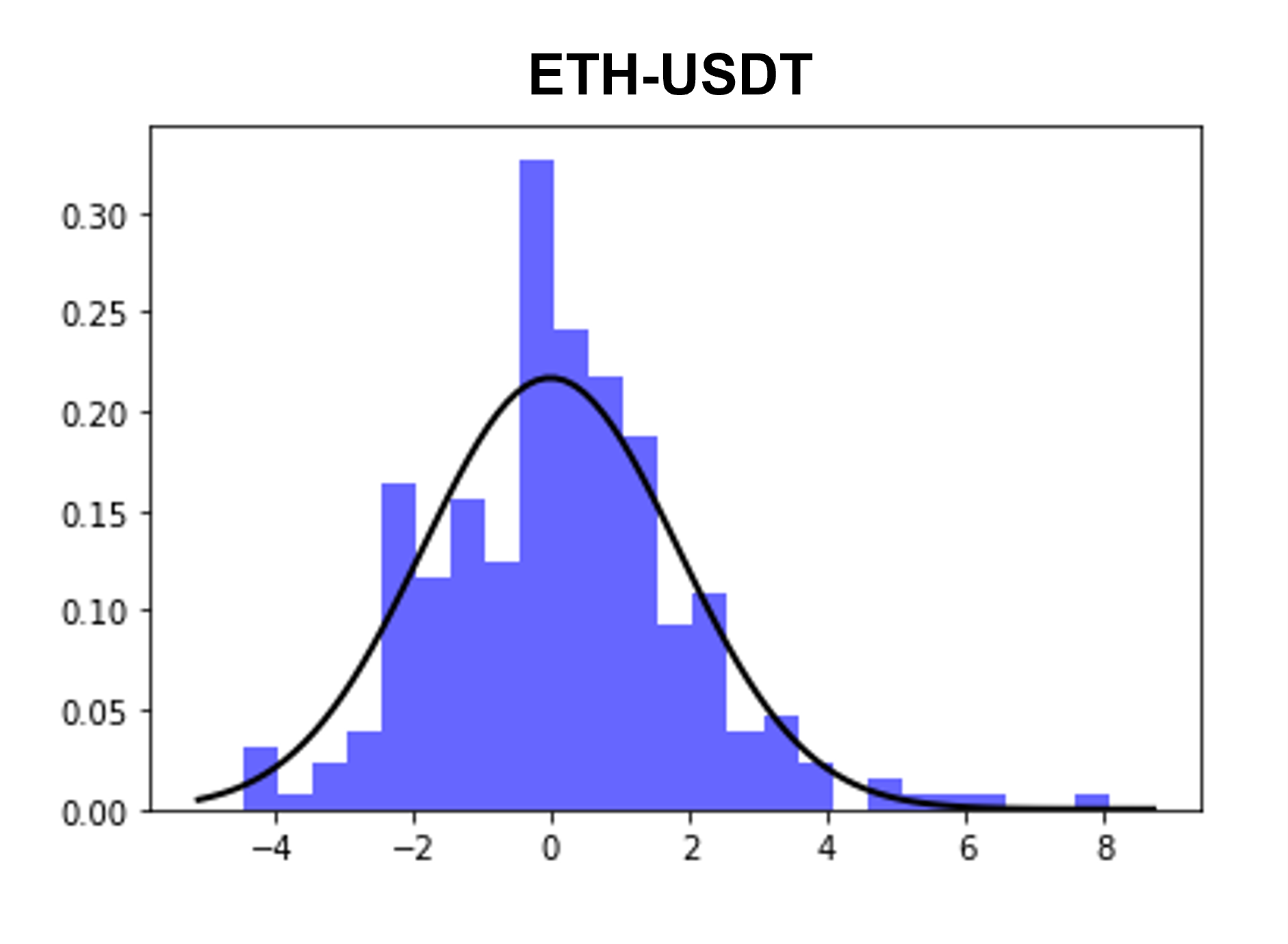}
    \caption{The MLE fit over sub-sampled data from ETH-BTC and ETH-USDT.
    \label{fig:emp-MLE}}
\end{figure}

\end{document}